\newtheorem{observation}{Observation}
\newcommand{\wfvsfull}{\textsc{Weighted Feedback Vertex Set}}
\newcommand{\wfvs}{\textsc{Weighted-FVS}}
\newcommand{\diswfvs}{\textsc{Disjoint Weighted-FVS}}
\newcommand{\blockgraph}{block graph}
\newcommand{\fvssolfull}{feedback vertex set}
\newcommand{\fvssol}{fvs}
\newcommand{\wfvssolfull}{weighted feedback vertex set}
\newcommand{\wfvssol}{weighted-fvs}
\newcommand{\wfvsmath}{{weighted-fvs}}
\newcommand{\wmp}{\textsc{Weighted-Matroid Parity}}
\newcommand{\Bgvd}{\textsc{Block Graph Vertex Deletion}}
\newcommand{\bgvd}{\textsc{BGVD}}
\newcommand{\bvds}{block vertex deletion set}
\newcommand{\FVS}{\textsc{Feedback Vertex Set}}
\newcommand{\fvs}{\textsc{FVS}}
\newcommand{\compdeg}{\emph{component degree}}
\newcommand{\FPT}{\textsc{FPT}}
\newcommand{\OO}{\mathcal{O}}
\newcommand{\CC}{\mathcal{C}}
\newcommand{\BB}{\mathcal{B}}
\newcommand{\DD}{\mathcal{D}}
\newcommand{\II}{\mathcal{I}}
\newcommand{\RR}{\mathcal{R}}
\newcommand{\no}{\textsc{No}}
\newcommand{\yes}{\textsc{Yes}}
\newcommand{\NP}{\textsc{NP}}
\newcommand{\branchvector}[1]{{\color{red}{$(#1)$}}}
\newcommand{\defparproblem}[4]{
  \vspace{3mm}
\noindent\fbox{
  \begin{minipage}{.95\textwidth}
  \begin{tabular*}{\textwidth}{@{\extracolsep{\fill}}lr} \textsc{#1}  & {\bf{Parameter:}} #3 \\ \end{tabular*}
  {\bf{Input:}} #2  \\
  {\bf{Question:}} #4
  \end{minipage}
  }
  \vspace{2mm}
}
\newcommand{\defparproblemoutput}[4]{
  \vspace{1mm}
\noindent\fbox{
  \begin{minipage}{0.96\textwidth}
  \begin{tabular*}{\textwidth}{@{\extracolsep{\fill}}lr} #1  & {\bf{Parameter:}} #3
\\ \end{tabular*}
  {\bf{Input:}} #2  \\
  {\bf{Output:}} #4
  \end{minipage}
  }
  \vspace{1mm}
}
\newtheorem{redrule}{Reduction Rule}
\newtheorem{redrulebgvd}{Reduction Rule BGVD.}
\newtheorem{branchrule}{Branching Rule}
\title{A faster FPT Algorithm and a smaller Kernel for {\sc \Bgvd}}
\author{Akanksha Agrawal\inst{1} \and Sudeshna Kolay\inst{2} \and Daniel Lokshtanov\inst{1} \and Saket Saurabh\inst{1,2}}
\institute{University of Bergen, Norway \\  \texttt{\{akanksha.agrawal|daniello\}l@uib.no} \and Institute of Mathematical Sciences, Chennai, India \\ \texttt{\{skolay|saket\}@imsc.res.in}
}
\begin{document}
\maketitle
\begin{abstract}
A graph $G$ is called a \emph{block graph} if each maximal $2$-connected component of $G$ is a clique. In this paper we study the \Bgvd\ from the perspective of fixed parameter tractable (FPT) and kernelization algorithm. In particular, an input to \Bgvd\ consists of a graph $G$ and a positive integer $k$ and  the objective to check whether there exists a subset $S \subseteq V(G)$ of size at most $k$ such that the graph induced on $V(G)\setminus S$ is a block graph. In this paper we give an FPT algorithm with running time $4^kn^{\OO(1)}$ and a polynomial kernel of size $\OO(k^4)$ for \Bgvd. The running time of our FPT algorithm improves over the  previous best algorithm for the problem that ran in time $10^kn^{\OO(1)}$ and the size of our kernel reduces over the previously known kernel of size $\OO(k^9)$. Our results are based on a novel connection between \Bgvd\ and the classical {\sc Feedback Vertex Set} problem in graphs without induced $C_4$ and $K_4-e$. To achieve our results we also obtain an algorithm for {\sc Weighted Feedback Vertex Set} running in time 
$3.618^kn^{\OO(1)}$ and improving over the running time of previously known algorithm with running time $5^kn^{\OO(1)}$. 

\end{abstract}
\section{Introduction}
Deleting minimum number of vertices from a graph such that the resulting graph belongs to a family $\mathcal{F}$ of graphs, is a measure on how close the graph is to the graphs in the family $\mathcal{F}$. In the problem of vertex deletion, we ask whether we can delete at most $k$ vertices from the input graph $G$ such that the resulting graph belongs to the family $\mathcal{F}$. Lewis and Yannakakis~\cite{Lewis} showed that for any non-trivial and hereditary graph property $\Pi$ on induced subgraphs, the vertex deletion problem is NP-complete. Thus these problems have been subjected to intensive study in algorithmic paradigms that are meant for coping with NP-completeness~\cite{FominLMS12,Fujito98,LundY94,MOR13}. These paradigms among others 
include applying restrictions on inputs, approximation algorithms and parameterized complexity. The focus of this paper is to study one such problem from the view point of parameterized algorithms. 

Given a family $\cal F$, a typical parameterized vertex deletion problem gets as an input an undirected graph 
$G$ and a positive integer $k$ and the goal is to test whether there exists a vertex subset $S\subseteq V(G)$ of size at most $k$ such that $G\setminus S \in \cal F$. In the parameterized complexity paradigm the main objective is to design an algorithm for the vertex deletion problem that runs in time $f(k)\cdot  n^{\OO(1)}$, where $n=\vert V(G) \vert$ and $f$ is an arbitrary computable function depending only on the parameter $k$. Such an algorithm is called an \FPT\ algorithm and such a running time is called \FPT\ running time. We also desire to design a polynomial time preprocessing algorithm that reduces the given instance to an equivalent one with size as small as possible. This is mathematically modelled by the notion of {\em kernelization}. A parameterized problem is said to admit a $h(k)$-{\it kernel} if there is a polynomial time algorithm (the degree of the polynomial is independent of $k$), called a {\em kernelization} algorithm, that reduces the input instance to an equivalent instance with size upper bounded by $h(k)$. If the function $h(k)$ is polynomial in $k$, then we say that the problem admits a polynomial kernel. The formal definitions are in Section~\ref{sec:prelims}. For more background, the reader may refer to the following monograph \cite{FlumGrohebook}.


A large part of research in paramaterized vertex deletion problems has centred around $\mathcal{F}$, that can be defined by finite excluded minor characterization and in particular when $\cal F$ has a planar graph~\cite{FominLMS12,LPRRSS13}. These  include problem of deleting $k$ vertices to get  a graph of treewidth $t$~\cite{FominLMS12,LPRRSS13}. One of the main reason to study these problems is that several NP-hard problems become polynomial time solvable on graphs of bounded treewidth. However, there are several other notions similar to treewidth where several NP-hard problems become polynomial time solvable; these include cliquewidth, rankwidth and linear rankwidth.  Recently, Kant\'{e} et al.~\cite{Kante15} studied the problem of deleting $k$ vertices to get a graph of linear rankwidth one and in an other study Kim and Kwon~\cite{bgvd-kim} studied  deleting $k$ vertices to get a {\em block graph}. 

A graph $G$ is known as a \emph{block graph} if every maximal $2$-connected component in $G$ is a clique. Equivalently, we can see a block graph as a graph obtained by replacing each edge in a forest by a clique. A \emph{chordal graph} is a graph which has no induced cycles of length at least four. An equivalent characterisation of a block graph is a chordal graph with no induced $K_4-e$~\cite{graph-classes,block-edward}. The class of block graphs is the intersection of the chordal and distance-hereditary graphs~\cite{block-edward}.

In this paper, we consider the problem which we call \Bgvd~(\bgvd). Here, as an input we are given a graph $G$ and an integer $k$, and the question is whether we can find a subset $S\subseteq V(G)$ of size at most $k$ such that $G\setminus S$ is a block graph. The \NP-hardness of the \bgvd~problem follows from~\cite{Lewis}. 

\defparproblem{ \Bgvd\  ({\bgvd})}{An undirected graph $G=(V,E)$, and a positive integer $k$}{$k$}{Is there a set $S\subseteq V$, of size at most $k$, such that $G\setminus S$ is a block graph?}
\vspace{10 pt}

\noindent
Kim and Kwon~\cite{bgvd-kim} gave an \FPT{} algorithm with running time $\OO^{\star}(10^k)$ and a kernel of size $\OO(k^9)$ for the \bgvd{} problem. In this paper we improve both these results via a novel connection to {\sc Feedback Vertex Set} problem. 
\medskip

\noindent 
{\bf Our Results and Methods.}  We start by giving the results we obtain in this article and then we explain how we obtain these results. Our two main results are:
\begin{theorem}\label{thm:bgvd-fpt}
 \bgvd{} has an \FPT{} algorithm running in time $\OO^{\star}(4^k)$.
\end{theorem}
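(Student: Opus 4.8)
The plan is to reduce \bgvd{} to \wfvs{} via a forbidden-subgraph characterisation, and then feed the resulting instance to the $\OO^\star(3.618^k)$ algorithm for \wfvs{} announced above. First I would record the characterisation already noted in the introduction: a graph is a block graph if and only if it has no induced $C_\ell$ for $\ell \ge 4$ and no induced $K_4-e$; equivalently, every biconnected component, and hence every cycle, lies inside a single maximal clique. The obstructions on four vertices are exactly $C_4$ and $K_4-e$, while the remaining obstructions are the long holes $C_\ell$, $\ell \ge 5$.

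The algorithm has two phases. In the branching phase, as long as the graph contains an induced $C_4$ or an induced $K_4-e$, I would branch on the four vertices of such an obstruction: any \bvds{} must delete at least one of them, so we recurse into four instances, each with the parameter decreased by one. Since deleting vertices never creates a new induced $C_4$ or $K_4-e$, the leaves of this search tree carry graphs that are $\{C_4,K_4-e\}$-free, and the tree has branching vector $(1,1,1,1)$, i.e. at most $4^{j}$ nodes after $j$ deletions.

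The heart of the proof is the second phase, where I claim that on a $\{C_4,K_4-e\}$-free graph \bgvd{} is precisely a \wfvs{} instance. The key structural fact is that in such a graph any two maximal cliques meet in at most one vertex (two maximal cliques sharing two vertices, together with a private vertex of each, would induce a $K_4-e$ or violate maximality). Using this, I would build an auxiliary weighted graph $H$: keep every original vertex with unit weight; for each maximal clique $K$ with $|K|\ge 3$, delete its clique edges and add a fresh centre $z_K$ of weight $k+1$ adjacent to all of $K$; edges lying in no triangle are kept unchanged. Because cliques pairwise meet in at most one vertex, a cycle of $H$ corresponds exactly to a cycle of $G$ not contained in a single clique, while every intra-clique cycle, in particular every triangle, collapses to a star in $H$ and disappears. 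Consequently, for any $S$ avoiding the centres, $H-S$ is a forest if and only if $G-S$ is a block graph; giving the centres weight $k+1$ forces every \wfvs{} solution of size at most $k$ to avoid them, so a minimum \bvds{} of $G$ of size at most $k$ coincides with a minimum-size \fvs{} of $H$ avoiding the centres. I would then run the $\OO^\star(3.618^k)$ \wfvs{} algorithm on $H$ with the residual budget.

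For the running time, at a node reached after $j$ branching deletions the remaining budget is $k-j$, so the \wfvs{} call costs $\OO^\star(3.618^{\,k-j})$; since there are at most $4^{j}$ such nodes, the total is $\sum_{j=0}^{k} 4^{j}\,3.618^{\,k-j} = \OO^\star(4^k)$, the bound being dominated by the term $j=k$ because $4>3.618$. The main obstacle is establishing the second-phase equivalence rigorously: proving the pairwise-intersection property of maximal cliques, verifying that the cycles of $G$ and of $H$ correspond under vertex deletion (in particular that deleting a vertex shrinks cliques consistently in both graphs and creates no spurious cycle in $H$), and confirming that forbidding the centre vertices leaves the optimum unchanged. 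The branching phase and the running-time bookkeeping are routine once this correspondence is in place.
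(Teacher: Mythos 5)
Your proposal is correct and follows essentially the same route as the paper: branch four ways on an induced $C_4$ or $K_4-e$, and in the $\{C_4,K_4-e\}$-free case use the fact that maximal cliques pairwise intersect in at most one vertex (the paper's Lemma~\ref{lem:maximal-cliques}) to build an auxiliary weighted graph whose feedback vertex sets avoiding the clique-vertices correspond exactly to block-graph deletion sets (the paper's Lemma~\ref{lem:bgvd-wfvs}), then invoke the $\OO^{\star}(3.618^k)$ \wfvs{} algorithm for a total of $\OO^{\star}(4^k)$. Your auxiliary graph (star-replacement of cliques of size at least three, centres of weight $k+1$, non-triangle edges kept) is only a cosmetic variant of the paper's bipartite clique-incidence graph with clique-vertices of weight $n^4$; the key structural lemma, the cycle correspondence, and the running-time bookkeeping are identical.
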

\vspace{-0.3cm}
\begin{theorem}\label{thm:bgvd-kernel}
 \bgvd{} has a kernel of size $\OO(k^4)$.
\end{theorem}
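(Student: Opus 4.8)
The plan is to build the kernel from a sequence of polynomial-time reduction rules layered on top of an approximate solution, following the spirit of Thomass\'e's quadratic kernel for \FVS{} but tailored to the obstruction set of block graphs: induced cycles of length at least four together with the diamond $K_4-e$. First I would compute, in polynomial time, an approximate \bvds{} $D$ with $|D| = \OO(k)$; the reduction to \FVS{} on $(C_4, K_4-e)$-free graphs announced in the abstract should yield a constant-factor approximation, and if $|D| > ck$ for the appropriate constant $c$ we may immediately return a trivial \no{}-instance. The graph $H = G - D$ is then a block graph, and I would analyse it through its block-cut tree whose nodes are the maximal cliques (blocks) and the cut vertices of $H$.

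The second phase introduces reduction rules that remove the part of $H$ incapable of taking part in any obstruction together with $D$. I would delete every block of $H$ that has no neighbour in $D$ and does not lie on a path of the block-cut tree between two $D$-relevant nodes, since such a piece is already a block graph and cannot be spoiled. I would then shorten the ``pure'' chains of blocks between two relevant nodes using degree-two and short-circuit rules analogous to those for \FVS{}, verifying at each step that neither an induced $C_4$ nor an induced $K_4-e$ is created. The subtlest rules govern a single block: a block may be arbitrarily large, so I would show that within one block only boundedly many vertices are \emph{distinguishable}, where two vertices are equivalent when they have the same role in the block-cut tree (simplicial versus cut vertex) and the same behaviour towards $D$; all but a bounded set of representatives per class can then be deleted without altering the answer.

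Once the instance is irreducible, the final step is a charging argument against $D$ that produces the $\OO(k^4)$ bound. The set $D$ contributes $\OO(k)$ vertices. The structure of block graphs forces strong restrictions on the adjacencies between $D$ and $H$: a vertex of $D$ adjacent to two vertices of a block of size at least three already creates an induced $K_4-e$, and a vertex of $D$ meeting two blocks along an induced path of $H$ creates a long induced cycle. Each such interaction must be paid for by any small solution, so a flower-type argument caps the number of relevant blocks and attachment points by a polynomial in $k$; combined with the bounded path lengths and the bounded number of distinguishable vertices retained per block, the product of these polynomial factors is $\OO(k^4)$.

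The step I expect to be hardest is the collapsing of large blocks. Unlike the forest left behind after deleting an \FVS{}, the blocks here are dense cliques, so vertices cannot be dropped freely: any deletion or merge must be shown to preserve \emph{every} induced $C_4$ and \emph{every} induced diamond at once, including those that use several vertices of the same block together with vertices of $D$ or of adjacent blocks. Proving that the equivalence classes are simultaneously safe to collapse and polynomially bounded in number is the crux of the whole argument, and it is precisely here that the block-graph structure and the $(C_4, K_4-e)$-free view of the \FVS{} reductions must be pushed hardest.
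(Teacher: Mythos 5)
Your high-level architecture (constant-factor approximate \bvds{}, block-forest analysis, chain-shortening, twin-like collapsing, flower arguments) does match the paper's, but two load-bearing steps in your plan fail as stated. First, the attachment points: you claim a ``flower-type argument caps the number of relevant blocks and attachment points,'' but a vertex $v$ of the approximate solution that is adjacent to one vertex in each of arbitrarily many components of $G\setminus(S_v\cup\{v\})$ creates \emph{no} obstruction whatsoever --- such a configuration is itself a block graph --- so no packing of obstructions can bound the number of attachments. The paper needs an explicit rewiring reduction here: using the Expansion Lemma with $q=3$ against the per-vertex hitting set $S_v$ of size $\OO(k)$, it deletes edges from $v$ into expansion components and adds disjoint paths to $S_v$, forcing the \compdeg{} of every vertex down to $33k$. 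Only with that rule does ``each solution vertex sees $\OO(k)$ leaf blocks'' (via the lemma that $N_C(v)$ lies inside a single block of each component $C$) and hence the $\OO(k^2)$ bound on the total number of blocks go through. Your proposal has no substitute for this rule, and without it the block count is unbounded.

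Second, the large blocks. Your equivalence relation ``same role in the block-cut tree and same behaviour towards $D$'' has up to $2^{|D|}=2^{\Theta(k)}$ classes, so retaining boundedly many representatives per class does not give a polynomial bound on block size; moreover your triggering claim is false --- a vertex of $D$ adjacent to two vertices of a block of size at least three yields an induced $K_4-e$ only if it \emph{misses} some third vertex of that block (if it dominates the block, the union is a larger clique and nothing is created). The paper avoids class-by-class collapsing entirely and instead counts: split $A$ into vertices with at most $2k+1$ neighbours in the block $B$ and the rest; internal vertices of $B$ adjacent to some low-degree $A$-vertex number $\OO(k^2)$; for each high-degree $v\in A$, the set $Q_v$ of internal vertices \emph{not} adjacent to $v$ has size at most $k$, since otherwise pairs from $N_B(v)$ together with distinct vertices of $Q_v$ give $k+1$ diamonds pairwise intersecting exactly in $v$, which the flower rule (Proposition~3.1 of Kim--Kwon, Reduction rule \bgvd.5) has already excluded; the leftover internal vertices are simultaneously adjacent to all high-degree and no low-degree $A$-vertices, hence true twins, and the twin rule caps them at $k+1$. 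This is where your flower intuition actually belongs --- inside blocks against a single high-degree vertex of $A$, not at attachment points --- and it yields $\OO(k^2)$ internal vertices per block, which multiplied by $\OO(k^2)$ blocks gives the $\OO(k^4)$ kernel.
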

Our two theorems improve both the results in~\cite{bgvd-kim}. That is, the running time of our FPT algorithm improves over the  previous best algorithm for the problem that ran in time $10^kn^{\OO(1)}$ and the size of our kernel reduces over the previously known kernel of size $\OO(k^9)$.

Our results are based on  a connection between the \wfvs{} and \bgvd{} problems. In particular we show that if a given input graph does not have induced four cycles or diamonds ($K_4-e$) then we can construct an auxiliary bipartite graph and solve \wfvs{} on it. This results in a faster \FPT{} algorithm for \bgvd. In the algorithm that we give for the \bgvd{} problem, as a sub-routine we use the algorithm for the \wfvs{} problem. For obtaining a better polynomial kernel for \bgvd{}, most of our Reduction Rules are same as those used in \cite{bgvd-kim}. On the way to our result we also design a factor four approximation algorithm for \bgvd.


Finally, we talk about \wfvs. For which, we also design a faster algorithm than known in the literature. 
The \FVS{} problem is one of the most well studied problems. Given an undirected graph $G =(V,E)$ and a positive integer $k$, the problem is to decide whether there is a set $S \subseteq V$ such that $G\setminus S$ is a forest. Thus, $S$ is a vertex subset that intersects with every cycle of $G$. 
In the parameterized complexity setting, \FVS{} parameterized by $k$, has an \FPT{} algorithm. The best known \FPT{} algorithm runs in time $\OO^{\star}(3.618^k)$ \cite{saurabh-book,fvs-matriod-marcin}. The problem also admits a kernel on $\OO(k^2)$ vertices \cite{Thomasse}. Another variant of \FVS{} that has been studied in parameterized complexity is \wfvsfull, where each vertex in the graph has some rational number as its weight.

\defparproblemoutput{{\wfvs}}{An undirected graph $G=(V,E)$, a weight function $w:V \rightarrow \mathbb{Q}$, and a positive integer $k$}{$k$}{The minimum weighted set $S \subseteq V$ of size at most $k$, such that $G\setminus S$ is a forest.}
\vspace{10 pt}

\wfvs{} is known to be in \FPT{} with an algorithm of running time $5^kn^{\OO(1)}$ \cite{Chen20081188}. We obtain a faster \FPT{} algorithm for \wfvs. This algorithm uses, as a subroutine, the algorithm for solving \wmp~\cite{Tong-weighted-matroid}. In fact, this algorithm is very similar to the algorithm for \FVS{} given in \cite{saurabh-book}. Thus, our final new result is the following theorem. 

\begin{theorem}\label{thm:wfvs-fpt}
 \wfvs{} has an \FPT{} algorithm running in time \newline $\OO^{\star}(3.618^k)$.
\end{theorem}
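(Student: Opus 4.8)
The plan is to solve \wfvs\ by iterative compression, reducing it to the disjoint variant \diswfvs\ and solving the latter with \wmp\ as a black box. In the compression step one processes the vertices of $G$ one at a time while maintaining a feasible solution of size at most $k+1$; given such a solution $Z$, one guesses the subset $Z_{d} = Z \cap S$ that the sought solution $S$ deletes, removes $Z_{d}$ from the graph, and searches for a minimum-weight \fvs\ of size at most $k - |Z_{d}|$ that avoids $W := Z \setminus Z_{d}$. Guesses for which $G[W]$ is not a forest are discarded, since they admit no valid solution. This is exactly \diswfvs: given $G$, $w$, a budget $k$, and a \fvs\ $W$ with $G[W]$ and $G - W$ both forests, find a minimum-weight \fvs\ $S \subseteq V(G) \setminus W$ with $|S| \le k$. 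Summing the cost over the $2^{|Z|}$ guesses, a running time of $\OO^\star(\beta^k)$ for \diswfvs\ yields $\sum_{i}\binom{k+1}{i}\beta^{k-i} = \OO^\star((1+\beta)^k)$ for \wfvs. Since $1 + \phi^2 = 3.618\ldots$, where $\phi = (1+\sqrt 5)/2$ is the golden ratio, it suffices to solve \diswfvs\ in time $\OO^\star((\phi^2)^k) = \OO^\star(2.618^k)$.

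To solve \diswfvs\ I would branch, guided by the measure $\mu = k + c$, where $c$ is the number of trees of $G[W]$; since $|W| \le k+1$ we have $c \le k+1$ and hence $\mu \le 2k+1$. Writing $H = G - W$, I would first apply reduction rules: delete from the instance any vertex of degree at most one (it lies on no cycle), and move into $S$ any $v \in V(H)$ with two neighbours in a single tree of $G[W]$, since keeping such a $v$ would close a cycle that $S$ cannot otherwise hit. The branching step then selects a suitable vertex $v \in V(H)$ and branches on $v \in S$ versus $v \notin S$. In the delete branch $k$ decreases by one, so $\mu$ decreases by one; in the keep branch $v$ is absorbed into $W$ and merges the trees it meets, decreasing $c$. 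The rules and the choice of $v$ are arranged so that the search tree has $\OO^\star(\phi^{\mu}) = \OO^\star(\phi^{2k}) = \OO^\star(2.618^k)$ leaves, matching a branching vector dominated by $(1,2)$ in $\mu$.

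The branching terminates at a base case in which $V(H)$ is an independent set and every vertex of $V(H)$ has exactly two neighbours in $W$, lying in distinct trees. Here I would solve \diswfvs\ in polynomial time through a reduction to \wmp\ on a graphic matroid. Each $v \in V(H)$ is represented by the pair of edges joining it to its two neighbours in $W$; keeping $v$ corresponds to selecting both elements of its pair, so a set $F \subseteq V(H)$ can be kept --- meaning $W \cup F$ stays a forest --- exactly when the chosen pairs form an independent set of the graphic matroid. A maximum-weight collection of pairs is therefore a \wmp\ solution, its complement $V(H) \setminus F$ is the minimum-weight deletion set, and the algorithm of~\cite{Tong-weighted-matroid} computes it in polynomial time. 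Taking, over all leaves of the search tree, the minimum-weight set of size at most $k$ returned in this way gives the optimum for \diswfvs, and hence, through the compression step, for \wfvs.

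The crux of the argument is entirely inside \diswfvs\ and is twofold. First, the branching must be shown to stay within the golden-ratio bound: the steps that eliminate edges inside $H$ and vertices with three or more neighbours in $W$ are delicate, because a careless keep-branch can create a new isolated tree and \emph{increase} $c$; choosing $v$ (for instance, a deepest leaf of $H$) together with the accompanying rules so that every branch has vector dominated by $(1,2)$ in $\mu = k + c$ is precisely what separates the target $\OO^\star(3.618^k)$ bound from the naive $\OO^\star(5^k)$ one. Second, the matroid-parity reduction must be made weight-faithful: I must check that vertex weights transfer to element weights so that a \emph{maximum}-weight parity set corresponds to a \emph{minimum}-weight deletion set, and that the cardinality constraint $|S| \le k$ is enforced, so that the call to \wmp\ returns the genuine weighted optimum rather than merely an unweighted one. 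Establishing these two points is the main technical work; the remainder follows the \FVS\ template of~\cite{saurabh-book}.
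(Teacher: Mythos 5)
Your outer structure is the same as the paper's: iterative compression with the $\sum_i \binom{k+1}{i}\beta^{k-i} = \OO^\star((1+\beta)^k)$ accounting, a branching algorithm for \diswfvs\ with a golden-ratio measure, and a polynomial-time endgame via weighted graphic matroid parity. The arithmetic ($\beta = 2.618$ gives $3.618^k$) and the weight-faithfulness of the parity reduction (maximizing $w_{\mathcal M}$ over kept pairs minimizes the weight of the deleted complement, as in the paper's Lemma~\ref{correct-matroid-parity}) are both fine. But there is a genuine gap exactly where you yourself flag ``the main technical work'': with your measure $\mu = k + c$ and your base case, the branching vector $(1,2)$ is \emph{not achievable}. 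Consider a leaf $v$ of $H = G - W$ with exactly two neighbours in $W$ (in distinct trees) and one neighbour in $H$ --- a configuration your reduction rules do not eliminate and your base case does not cover, since $V(H)$ is not yet independent. Deleting $v$ drops $\mu$ by $1$; keeping $v$ merges only two trees, dropping $c$ by exactly $1$. That is vector $(1,1)$, i.e.\ $2^{\mu} \le 4^k$ for the disjoint problem and $5^k$ overall --- precisely the bound you set out to beat. No choice of branching vertex (deepest leaf or otherwise) fixes this, because such vertices can be the only ones left while $H$ still has edges.

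The paper's mechanism for this case is the piece missing from your proposal, and it has two interlocking parts. First, its Reduction Rule~\ref{redn:leaf_in_F}: for a leaf $x$ of $F$ with at most two neighbours in $R$, \emph{subdivide} the edge to its parent in $F$ and put the new (weight-$1$) vertex $x^*$ into $R$. This converts $x$ into a \emph{nice} vertex (degree $2$, both neighbours in $R$) or a \emph{tent} (degree $3$, all neighbours in $R$) without ever branching on it. Second, the measure must be refined to $\mu = k + \rho(R) - (\eta + \tau)$, where $\eta,\tau$ count nice vertices and tents: the subdivision increases $\rho(R)$ by one but simultaneously increases $\eta + \tau$ by one, so it is measure-neutral, and the branching rule then fires only on leaves with at least three $R$-neighbours in distinct components, which legitimately gives $(1,2)$. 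Consequently the base case is not ``all degree-$2$'' as you state, but ``every vertex of $H$ is nice \emph{or a tent}'', and the matroid-parity instance must accommodate degree-$3$ tents by contracting one of their three edges ($e^v_0$) into the forest $F = E(G[R]) \cup \{e^v_0 : v\ \text{a tent}\}$ and pairing the remaining two --- a step absent from your pairs-of-two construction. (Your worry that a keep-branch might \emph{increase} $c$ is, by contrast, unfounded: after the degree-one rule, any leaf of $H$ has a neighbour in $W$, so absorbing it never creates an isolated tree.) Without the subdivision rule, the corrected measure, and tents in the parity base case, the proposal does not reach $\OO^\star(2.618^k)$ for \diswfvs, and hence does not establish the theorem.
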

%
%
%
%
%
%

\section{Preliminaries}
We start with some basic definitions and terminology from graph theory and algorithms. We also establish some of the notation that will be used in this paper.

We will use the $\OO^\star$ notation to describe the running time of our algorithms. Given $f: \mathbb{N} \rightarrow \mathbb{N}$, we define $\OO^\star(f(n))$ to be $\OO(f(n) \cdot p(n))$, where $p(\cdot)$ is some polynomial function. That is, the $\OO^\star$ notation suppresses polynomial factors in the running-time expression.

\vspace*{2mm}
\noindent
\textbf{Graphs.} A graph is denoted by $G=(V,E)$, where $V$ and $E$ are the vertex and edge sets, respectively. We also denote the vertex set and edge set of $G$ by $V(G)$ and $E(G)$, respectively. All the graphs that we consider are finite graphs, possibly having loops and multi-edges. For any non-empty subset $W \subseteq V(G)$, the subgraph of $G$ induced by $W$ is denoted by $G[W]$; its vertex set is $W$ and its edge set consists of all those edges of $E(G)$ with both endpoints in $W$. For $W \subseteq V(G)$, by $G \setminus W$ we denote the graph obtained by deleting the vertices in $W$ and all edges which are incident to at least one vertex in $W$. 

For a graph $G$, we denote the degree of vertex $v$ in $G$ by $d_{G}(v)$. A vertex $v\in V(G)$ is called as a \emph{cut vertex} if the number of connected components in $G\setminus \{v\}$ is more than the number of connected components in $G$. For a vertex $v \in V(G)$, the neighborhood of $v$ in $G$ is the set $N_G(v)=\{u \lvert (v,u)\in E(G)\}$. We drop the subscript $G$ from $N_G(v)$, whenever the context is clear. Two vertices $u,v\in V(G)$ are called \emph{true-twins} in $G$ if $N(u)\setminus \{v\}=N(v)\setminus \{u\}$. For $A\subset V(G)$, an $A$-path in $G$ is a path with at least one edge, whose end vertices are in $A$ and all the internal vertices are from $V(G)\setminus A$.

Suppose that $u,v \in V(G)$, such that $u\neq v$ and neither $u$ nor $v$ has a self loop. By contracting an edge $(u,v)\in E(G)$ we mean the following operation. We create a new graph $G'$, where $V(G')=(V(G)\setminus \{u,v\})\cup \{uv^*\}$ and $E(G')=E(G[V(G)\setminus \{u,v\}]) \cup \{(uv^*,w)\lvert w \in (N(u)\cup N(v))\setminus \{u,v\}\}$. Note that there is bijection $f$ between $E(G)\setminus \{(u,v)\}$ and $E(G')$. The bijection $f$ is as follows. For $(x,y)\in E(G)$ $f((x,y))=(x,y)$ if both $x,y$ are not one of $\{u,v\}$. Otherwise, one of $x,y$ is same as $u,v$. Note that the edge $(x,y)$ is not same as the edge $(u,v)$, also since $u \neq v$ implies $x \neq y$. The only case left is when exactly one of $x$ or $y$ is same as one of $u,v$ say, $x=u$ then, $f((x,y))=(uv^*,y)$. Analogously, we can find $f((x,y))$ for the remaining cases. Slightly abusing the notation, for an edge $e\in E(G)\setminus \{(u,v)\}$ we will refer to $f(e)\in E(G')$ by $e$.

A weighted undirected graph is a graph $G=(V,E)$, with a weight function $w: V(G) \rightarrow \mathbb{Q}$. For a subset $X\subseteq V(G)$, $w(X)=\sum_{v\in X}w(v)$.

A \emph{feedback vertex set} is a subset $S\subseteq V(G)$ such that $G \setminus S$ is a forest.
A minimum weight \fvssolfull~of a weighted graph $G$ is a subset $X\subseteq V(G)$, such that $G \setminus X$ is a forest and $w(X)$ is minimum among all possible \wfvssol~in $G$. In a graph with vertex weights, an \fvs{} is called a \wfvssolfull{} (\wfvssol). Similarly, for a given positive integer $k$, a minimum \wfvssol{} of size $k$ is a subset $X \subseteq V(G)$ such that $\vert X \vert \leq k$, $G \setminus X$ is a forest and $w(X)$ is minimum among all possible \wfvssol{} in $G$ that are of size at most $k$. Given a graph $G$ 
and a vertex subset $S\subseteq V(G)$, we say that $S$ is a {\em \bvds} if $G-S$ is a block graph. 

A family $\mathcal{F}$ of sets over a finite universe $U$ is called a matroid if it satisfies the following properties.
\begin{itemize}
\item $\emptyset \in \mathcal{F}$,
\item if $A\in \mathcal{F}$ and $B\subseteq A$ then, $B \in \mathcal{F}$,
\item if $A,B \in \mathcal{F}$ and $\lvert A \rvert < \lvert B \rvert$ then, there exists $b \in B\setminus A$ such that $A\cup \{b\} \in \mathcal{F}$.
\end{itemize}
For a matroid $(U, \mathcal{F})$, the elements of $U$ are called the edges of the matroid and the sets $S \in \mathcal{F}$ are called as the independent sets. For an undirected graph $G$, a graphic matroid $\mathcal{M}_G$ is a matroid with $U=E(G)$ and a set $S\subseteq E(G)$ is an independent set in the matroid $\mathcal{M}_G$ if the graph $G'=(V(G),S)$ is acyclic. Note that $\mathcal{M}_G$ satisfies all the properties required for it to be a matroid.

A graph $G$ is called a \blockgraph{} if every maximal $2$-connected component of $G$ is a clique. A maximal $2$-connected subgraph in $G$ is called a \emph{block}. Another characterization of block graph is a graph which has no induced cycles of length more than $4$ and no induced $K_4-e$~\cite{graph-classes}. Here $K_4-e$ is a complete graph on $4$ vertices with one of the edges removed. For a graph $G$, let $V_c$ denote the set of cut vertices of $G$, and $\BB$ the set of its blocks. We then have a natural bipartite graph $F$ on $V_c \cup \BB$ formed by the edges $(v, B)$ if and only if $v \in V(B)$. Note that for a block graph $G$, $F$ is a forest~\cite{diestel-book}. The bipartite graph $F$ is called as block forest of $H$. We will arbitrarily root $F$ at some vertex $B\in V(F)$. 

 A leaf block of a \blockgraph{} $G$ is a maximal $2$-connected component with at most one cut vertex. For a maximal $2$-connected component $C$ in $G$ a vertex $v\in V(C)$ is called as an \emph{internal vertex} if $v$ is not a cut vertex in $G$. 

We refer the reader to~\cite{diestel-book} for details on standard graph theoretic notation and terminology we use in the paper.

\vspace*{2mm}
\noindent
\textbf{Parameterized Complexity.} A parameterized problem $\Pi$ is a subset of $\Gamma^{*}\times\mathbb{N}$, where $\Gamma$ is a finite alphabet. An instance of a parameterized problem is a tuple $(x,k)$, where $x$ is a classical problem instance, and $k$ is called the parameter. A central notion in parameterized complexity is {\em fixed-parameter tractability (FPT)} which means, for a given instance $(x,k)$, decidability in time $f(k)\cdot p(|x|)$, where $f$ is an arbitrary function of $k$ and $p$ is a polynomial in the input size.

\vspace*{2mm}
\noindent
\textbf{Kernelization.} A kernelization algorithm for a parameterized problem   $\Pi\subseteq \Gamma^{*}\times \mathbb{N}$ is an algorithm that, given $(x,k)\in \Gamma^{*}\times \mathbb{N} $, outputs, in time polynomial in $|x|+k$, a pair $(x',k')\in \Gamma^{*}\times \mathbb{N}$ such that (a) $(x,k)\in \Pi$ if and only if  $(x',k')\in \Pi$ and (b) $|x'|,k'\leq g(k)$, where $g$ is some computable function. The output instance $x'$ is called the kernel, and the function $g$ is referred to as the size of the kernel. If $g(k)=k^{\OO(1)}$ (resp. $g(k)=\OO(k)$) then we say that $\Pi$ admits a polynomial (resp. linear) kernel.\label{sec:prelims}
\section{Improved Algorithm for \wfvsfull}
In this section we give an improved algorithm for the \wfvs. We use the method of iterative compression together with branching and reduce \wfvs{} to the problem of \wmp, which can be solved in polynomial time. The algorithm we give is similar to the algorithm for \textsc{Feedback Vertex Set}~described in~\cite{saurabh-book,fvs-matriod-marcin}. We give an algorithm only for the disjoint variant of the problem. 

\begin{observation}[\cite{saurabh-book}] \label{obs:disj-actual}
The existence of an algorithm for the disjoint variant of \wfvs~with running time $c^{k}\cdot n^{\OO(1)}$, for a constant $c$, implies that \wfvs~can be solved in time $c^{k+1}\cdot n^{\OO(1)}$.
\end{observation}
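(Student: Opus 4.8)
The plan is to prove this by \emph{iterative compression}, the standard device for converting an algorithm for the disjoint/compression version of a deletion problem into one for the full problem. Fix an arbitrary ordering $v_1,\dots,v_n$ of $V(G)$ and set $G_i=G[\{v_1,\dots,v_i\}]$, so that $G_0\subseteq G_1\subseteq\cdots\subseteq G_n=G$. I would scan the $G_i$ in increasing order, maintaining as an invariant that after step $i$ I hold a minimum-weight \fvssolfull{} $S_i$ of $G_i$ with $|S_i|\le k$, or a correct report that $G_i$ has no \fvssolfull{} of size at most $k$; I start from $S_0=\emptyset$. The report is conclusive for the whole problem: if $S$ is a \fvssolfull{} of $G$ with $|S|\le k$, then for every $i$ the set $S\cap V(G_i)$ is a \fvssolfull{} of the induced subgraph $G_i$ of size at most $|S|$, so once some $G_i$ admits no such set we may safely answer \no{} for $G$.

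The entire work sits in the compression step. Given $S_{i-1}$ with $|S_{i-1}|\le k$, the set $Z:=S_{i-1}\cup\{v_i\}$ is a \fvssolfull{} of $G_i$ with $|Z|\le k+1$, and I must produce a minimum-weight \fvssolfull{} $S$ of $G_i$ of size at most $k$. I would structure the search by how $S$ meets $Z$: for each subset $Z_{\mathrm{in}}\subseteq Z$, interpreted as the vertices of $Z$ that $S$ keeps, delete $Z_{\mathrm{in}}$ and ask for a minimum-weight \fvssolfull{} of $G_i\setminus Z_{\mathrm{in}}$ of size at most $k-|Z_{\mathrm{in}}|$ that is moreover \emph{disjoint} from $Z\setminus Z_{\mathrm{in}}$. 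This is exactly an instance of the disjoint variant of \wfvs{}, and it is well posed since $Z\setminus Z_{\mathrm{in}}$ is a \fvssolfull{} of $G_i\setminus Z_{\mathrm{in}}$. Adding $w(Z_{\mathrm{in}})$ to each returned weight and minimising over the choices of $Z_{\mathrm{in}}$ for which the disjoint call succeeds yields the desired $S_i$; if no choice succeeds, $G_i$, and hence $G$, has no \fvssolfull{} of size at most $k$. Correctness is immediate for the optimum $S^{*}$: at the guess $Z_{\mathrm{in}}=S^{*}\cap Z$ the deleted part has weight $w(S^{*}\cap Z)$ and the disjoint call returns a completion of weight at most $w(S^{*}\setminus Z)$, so their sum does not exceed $w(S^{*})$, while every constructed candidate is a valid \fvssolfull{} of size at most $k$.

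For the running time there are at most $n$ compression steps. Viewing the disjoint variant as parameterized by the size of the \fvssolfull{} it is handed, the $2^{|Z|}$ guesses of $Z_{\mathrm{in}}$ followed by a disjoint search amount to nothing more than deciding, for each vertex of $Z$, whether it lies in $S$ and then resolving the rest; absorbing these in/out decisions into the branching of the disjoint algorithm keeps its branching base at $c$ while raising its parameter from $k$ to at most $|Z|\le k+1$. Thus one compression step runs in $c^{k+1}n^{\OO(1)}$ and the whole procedure in $c^{k+1}n^{\OO(1)}$, as claimed. The step I expect to be the genuine obstacle is precisely this last accounting: enumerating the $Z_{\mathrm{in}}$ and invoking the disjoint routine as an opaque black box multiplies the cost by $2^{k+1}$ and produces base $1+c$ rather than $c$, so one must argue that the intersection branching really can be merged into the disjoint variant's own branching without enlarging its base. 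A secondary point requiring care is correctness in the \emph{weighted} setting: the disjoint variant must deliver a genuinely minimum-weight disjoint completion rather than merely a feasible one, and one must verify that minimising over the guesses $Z_{\mathrm{in}}$ recovers the global minimum-weight solution and that seeding each round with a minimum-weight $S_{i-1}$ is both maintainable and sufficient.
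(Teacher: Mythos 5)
Your framework is the standard iterative-compression argument behind this observation (which the paper does not reprove but imports from the cited source), and your weighted bookkeeping is sound: guessing $Z_{\mathrm{in}}=S\cap Z$, calling the disjoint routine on $G_i\setminus Z_{\mathrm{in}}$ with forbidden set $Z\setminus Z_{\mathrm{in}}$ and budget $k-|Z_{\mathrm{in}}|$, and minimizing $w(Z_{\mathrm{in}})$ plus the returned weight is exactly right. The genuine gap is the one you flag yourself, and it is not repairable in the form you propose. Treating the disjoint algorithm as a black box, one compression step costs
$\sum_{i=0}^{k}\binom{k+1}{i}\,c^{\,k-i}\,n^{\OO(1)}\le \tfrac{(1+c)^{k+1}}{c}\,n^{\OO(1)}$,
i.e.\ $\OO^{\star}\bigl((1+c)^{k}\bigr)$, and your suggested rescue --- ``absorbing the in/out decisions into the disjoint algorithm's own branching, keeping base $c$ while raising the parameter to $k+1$'' --- is unsubstantiated. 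For this paper's own disjoint algorithm it demonstrably fails: that algorithm's base $c=2.618$ comes from branching analyzed against the measure $\mu=k+\rho(R)-(\eta+\tau)$ with $\mu\le 2k$ initially, and declaring a vertex $v\in Z$ ``kept'' means inserting it into $R$, which can \emph{increase} $\rho(R)$ and hence $\mu$; so the two-way in/out guess is not a measure-decreasing branch and cannot be merged into the recursion at base $c$. What your argument honestly proves is $\OO^{\star}\bigl((1+c)^{k}\bigr)$, not $c^{k+1}n^{\OO(1)}$.

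That said, the discrepancy is with the statement as printed rather than with the substance of your proof: the bound $(c+1)^{k}$ is precisely what the paper actually uses, since Lemma~\ref{lem:diswfvs} solves \diswfvs\ in $\OO^{\star}(2.618^{k})$ and Theorem~\ref{thm:wfvs-fpt} concludes $\OO^{\star}(3.618^{k})=\OO^{\star}\bigl((1+2.618)^{k}\bigr)$ for \wfvs. With $c=2.618$, the literal reading ``$c^{k+1}$'' would even give the absurdly stronger $\OO^{\star}(2.618^{k})$ for the full problem, so the observation's ``$c^{k+1}$'' is evidently a typo for ``$(c+1)^{k}$'' (which is also the form in the cited source). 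Two further small points to tidy: the paper's \diswfvs\ instance fixes $|R|=k+1$, while your calls pass forbidden sets of size $k+1-|Z_{\mathrm{in}}|$, so you should either note the routine tolerates $|R|\le k+1$ or pad; and maintaining a \emph{minimum-weight} $S_{i-1}$ is unnecessary --- any size-$(\le k)$ \fvssolfull{} seeds the next round, with weight-minimization needed only at the final stage. Deleting the merging claim and stating the conclusion as $\OO^{\star}\bigl((1+c)^{k}\bigr)$ turns your proposal into a correct proof of the observation in the form in which the paper uses it.
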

In the \diswfvs, we are given an undirected graph $G=(V,E)$, a weight function $w: V(G) \rightarrow \mathbb{N}$, an integer $k$ and a \fvssolfull~$R\subseteq V(G)$ of size $k+1$. The objective is to find a set $X\subseteq V(G)\setminus R$, such that $X$ is a minimum weight \fvssolfull{} of size at most $k$ in $G$.

\subsection{Reduction Rules for {\sc  \diswfvs}}
Let $(G=(V,E),w,R,k)$ be an instance of \diswfvs~and let $F=G \setminus R$. We call a vertex $v\in V(F)$ a \emph{nice} vertex if $d_G(v) = 2$ and both its neighbors are in $R$. A vertex $v\in V(F)$ is called \emph{tent} if $d_G(v)=3$ and all its neighbors are in $R$.  We start with some simple reduction rules that preprocesses the graph. The Reduction Rules are applied in the order that they are described.

\begin{redrule}\label{redn:degree_less_than_one}
 Delete all vertices of degree at most one as they do not participate in any cycle.
\end{redrule}

\begin{redrule}\label{redn:must_be_in_soln}
 If there is a vertex $v\in V(F)$, such that $G[R\cup \{v\}]$ has a cycle, then include $v$ to the solution $X$, delete $v$ from $G$ and decrease $k$ by $1$.
\end{redrule}

\begin{redrule}\label{redn:edge_multiplicity}
 If there is an edge of multiplicity larger than $2$ in $E(G)$, then reduce its multiplicity to $2$.
\end{redrule}
The correctness of the above Reduction Rules do not depend on the weights and thus it is similar to the one for undirected version of \wfvs\ and can be found  in \cite{saurabh-book}.

\begin{redrule}\label{redn:leaf_in_F}
 Let $x \in V(F)$ be a leaf with the only neighbor as $y$ in $F$. Also, $x$ has at most $2$ neighbors in $R$. Subdivide the edge $(x,y)$ add the newly created vertex $x^*$ to $R$. We define the new weight function $w^*: V(G)\cup \{x^*\} \rightarrow \mathbb{Q}$, as follows: $w^*(x^*)=1$ and $w^*(v)=w(v)$, for $v\in V(G)$. Let $G^*$ be the newly created graph after subdivision of the edge $(x,y)$. Our reduced instance for \diswfvs{} is $(G^*,w^*,R\cup\{x^*\},k)$.
\end{redrule}

\begin{lemma} \label{sub-div-edge}
Reduction Rule~\ref{redn:leaf_in_F} is safe. 
\end{lemma}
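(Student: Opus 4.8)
The plan is to show that the original instance $(G,w,R,k)$ and the reduced instance $(G^*,w^*,R\cup\{x^*\},k)$ have exactly the same feasible solutions carrying exactly the same weights, from which safety is immediate. I would begin by pinning down the candidate solution sets. Since $x^*$ is placed into $R$, we have
\[
V(G^*)\setminus(R\cup\{x^*\}) \;=\; V(G)\setminus R \;=\; V(F),
\]
so in both instances a solution is precisely a subset $X\subseteq V(F)$ of size at most $k$. In particular $x^*$ never belongs to any solution, which means its assigned weight $w^*(x^*)=1$ is irrelevant to the weight of any candidate (it is merely a convenient choice that keeps $w^*$ a well-defined weight function on $V(G^*)$). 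Moreover $w^*$ agrees with $w$ on $V(F)$, so $w^*(X)=w(X)$ for every candidate $X$.

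The crux is to verify that the forest condition is preserved, i.e.\ that for every $X\subseteq V(F)$ the graph $G\setminus X$ is a forest if and only if $G^*\setminus X$ is. Here I would invoke the standard fact that subdividing an edge neither creates nor destroys cycles, handling two cases according to whether the endpoints of the subdivided edge survive the deletion. If neither $x$ nor $y$ lies in $X$, then $G^*\setminus X$ is obtained from $G\setminus X$ simply by subdividing the edge $(x,y)$ into the path $x$--$x^*$--$y$, so one is a forest exactly when the other is. If at least one of $x,y$ lies in $X$, then the edge $(x,y)$ is already absent from $G\setminus X$, while in $G^*\setminus X$ the inserted vertex $x^*$ has degree at most one (it is pendant or isolated); since a vertex of degree at most one lies on no cycle, its presence is immaterial, and again $G\setminus X$ is a forest iff $G^*\setminus X$ is.

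Combining these observations, the two instances have identical feasible solution sets with identical weights, hence the same minimum-weight solution of size at most $k$; this is exactly what safety requires. For completeness I would also record that the reduced tuple is a legitimate \diswfvs\ instance, i.e.\ that $R\cup\{x^*\}$ is still a \fvssolfull\ of $G^*$: deleting $R\cup\{x^*\}$ leaves precisely $F$ with the edge $(x,y)$ removed, which is still a forest. None of these steps is genuinely hard; the only point requiring a little care is the case distinction above, because the ``subdivision preserves cycles'' argument applies cleanly only when both endpoints of the subdivided edge are retained, and the complementary case must be treated separately through the pendant-vertex observation.
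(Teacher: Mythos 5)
Your proof is correct and takes essentially the same route as the paper's: both reduce safety to showing that a set $X\subseteq V(F)$ of size at most $k$ is a \fvssolfull{} of $G$ if and only if it is one of $G^*$, via the cycle correspondence induced by subdividing the edge $(x,y)$, with weights preserved because $w^*$ agrees with $w$ on $V(G)$ and $x^*$ can never enter a solution. Your explicit case split (pendant or isolated $x^*$ when $x$ or $y$ is deleted) is handled implicitly in the paper's argument, where any cycle through $x^*$ must use both edges $(x,x^*)$ and $(x^*,y)$ and hence forces both endpoints to survive -- a presentational difference only.
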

\begin{proof}
Towards the proof we show that $G$ has a \wfvsmath~of size at most $k$ and weight at most $W$ if and only if $G^*$ has a \wfvsmath~of size at most $k$ and weight at most $W$.

In the forward direction, let $S \subseteq V(G) \setminus R$ be  a \wfvssol\ of  $G$ of size at most $k$ such that $w(S) \leq W$. We will show that indeed $S$ is a \wfvssol~in $G^*$ of size at most $k$ and $w^*(S) \leq W$. We first bound the weight of $S$ in $G^*$. Note that $w^*(v)=w(v)$, for $v \in V(G)$. Therefore $w^*(S)=w(S) \leq W$. We now show that $S$ is a \fvssolfull~in $G^*$. Suppose not, then there is a cycle $C$ in $G^*\setminus S$. If $C$ does not contain $x^*$. Then it also does not contain the edges $\{(x,x^*),(x^*,y)\}$. By definition, $C$ is also a cycle of $G \setminus S$, which is a contradiction. On the other hand, let $C$ contain the vertex $x^*$. By construction, $C$ must also contain the edges $\{(x,x^*),(x^*,y)\}$. However, this means that the cycle $C' = (C\setminus \{(x,x^*),(x^*,y)\}) \cup \{(x,y)\})$ is a cycle in $G \setminus S$, which is a contradiction. Therefore, $S \subseteq V(G) \setminus R$ is also a \wfvssol{} in $G^*$.


In the reverse direction, consider a \wfvssol{} $S\subseteq V(G^*) \setminus (R \cup \{x^*\})$ of size at most $k$ and $w^*(S) \leq W$. As $S$ is a \wfvssol~disjoint from $R \cup \{x^*\}$, $x^* \notin S$. Thus, $w(S)=w^*(S)$, . We now show that $S$ is a \fvssolfull~in $G$. Suppose not, then there is a cycle $C$ in $G\setminus S$. If $C$ does not contain the edge $(x,y)$, then by construction, $C$ is also a cycle of $G^* \setminus S$. This is a contradiction to the fact that $S$ was a \wfvssol{} of $G^*$. Otherwise, $C$ contains the edge $(x,y)$. But then, the cycle $C'=(C\setminus \{(x,y)\})\cup \{(x,x^*),(x^*,y)\}$ belongs to $G^* \setminus S$, which is a contradiction. Hence, $S$ must also be a \wfvssol{} for $G$.
\qed
\end{proof}

Now, we are ready to describe the main algorithm. To measure the running time of our algorithm for an instance $I=(G,w,R,k)$, we define the following measure.
\[\mu(I)=k+\rho(R)-(\eta+\tau)\]
Here, $\rho(R)$ is the number of connected components in $G[R]$ and $\eta,\tau$ are the number of nice vertices and tents in $F$, respectively.

Let $I=(G,w,R,k)$ be an instance where none of the  Reduction Rules~\ref{redn:degree_less_than_one}, \ref{redn:must_be_in_soln}, \ref{redn:edge_multiplicity} and \ref{redn:leaf_in_F} apply. It is clear that Reduction Rules~\ref{redn:degree_less_than_one}, \ref{redn:must_be_in_soln} and  \ref{redn:edge_multiplicity} do not increase the measure. We only need to worry about Reduction Rule~\ref{redn:leaf_in_F} which increases the number of vertices in $R$. 
The number of vertices in the resulting  $R$ increases and thus the number of connected component in $G^*[R]$ increases by one. However, we create either a \emph{nice} vertex or a \emph{tent} in $G^*$, therefore one of $\eta$ or $\tau$ increases by $1$. 
Hence, $\mu(I^*)= k + (\rho(R)+1) - (\eta+\tau+1) \leq \mu(I)$. Note that we do not increase the number of vertices in $F$, therefore we can apply the Reduction Rule~\ref{redn:leaf_in_F} at most $\lvert V(F) \rvert$ times.

In Lemma~\ref{mu-less-than-0}, we show that if $\mu <0$, then $(G,w,R,k)$ is a \no{} instance.  This will form a base case of our branching algorithm.

\begin{lemma} \label{mu-less-than-0}
For an instance $I=(G,w,R,k)$ of \diswfvs, if $\mu<0$, then $I$ is a \no{} instance.
\end{lemma}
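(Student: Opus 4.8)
The plan is to prove the contrapositive: assuming that the instance $I$ admits a solution, I will show that $\mu(I)\ge 0$. So suppose there is a set $X\subseteq V(F)$ with $|X|\le k$ such that $G\setminus X$ is a forest. Since $X\cap R=\emptyset$, the graph $G[R]$ is an induced subgraph of the forest $G\setminus X$ and is therefore itself a forest; in particular each of the $\rho(R)$ connected components of $G[R]$ is a tree. I will also use that, as the instance is reduced, Reduction Rule~\ref{redn:must_be_in_soln} does not apply: hence no vertex of $F$ has two neighbours in the same component of $G[R]$ (two such neighbours would close a cycle through the tree-path joining them). In particular the two neighbours of a nice vertex lie in two distinct components of $G[R]$, and the three neighbours of a tent lie in three pairwise distinct components.

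The core step is a counting argument on a contracted forest. In $G\setminus X$, contract each of the $\rho(R)$ tree-components of $G[R]$ to a single vertex; since we contract connected acyclic subgraphs of a forest, the resulting multigraph $\widehat{G}$ is again a forest, on vertex set $C\cup(V(F)\setminus X)$, where $C$ is the set of $\rho(R)$ component-vertices. Let $\eta'$ and $\tau'$ denote the numbers of nice vertices and tents that survive in $V(F)\setminus X$. By the previous paragraph, in $\widehat{G}$ each surviving nice vertex has exactly two edges, to two distinct vertices of $C$, and each surviving tent has exactly three edges, to three distinct vertices of $C$. Now restrict to the induced subforest $\widehat{G}[S]$ on $S=C\cup A$, where $A$ is the set of all surviving nice vertices and tents. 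Since every vertex of $A$ sends all its edges into $C$ and there are no edges inside $C$, the number of edges of $\widehat{G}[S]$ equals exactly $2\eta'+3\tau'$. As $\widehat{G}[S]$ is a forest on $|S|=\rho(R)+\eta'+\tau'$ vertices, it has at most $\rho(R)+\eta'+\tau'-1$ edges, which gives $2\eta'+3\tau'\le \rho(R)+\eta'+\tau'-1$, i.e.\ $\eta'+2\tau'\le \rho(R)-1$.

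To finish, I would combine this with the fact that $|X|\le k$ removes at most $k$ vertices overall, hence at most $k$ of the nice vertices and tents, so $\eta'+\tau'\ge(\eta+\tau)-k$. Together with $\eta'+\tau'\le\eta'+2\tau'\le\rho(R)-1$ this yields $(\eta+\tau)-k\le\rho(R)-1$, that is $\mu(I)=k+\rho(R)-(\eta+\tau)\ge 1\ge 0$, contradicting $\mu<0$ and completing the argument. The step I expect to be the main obstacle is justifying that $\widehat{G}$ really is a forest with the claimed \emph{exact} edge contributions: this is precisely where reducedness under Reduction Rule~\ref{redn:must_be_in_soln} is essential, since it rules out two neighbours of a nice vertex or tent landing in the same $R$-component, which would otherwise create parallel edges (a cycle) after contraction and destroy both the forest property and the edge count. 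A minor point to dispatch first is the degenerate case $\eta+\tau\le k$, where $\mu\ge\rho(R)\ge 0$ holds immediately and no counting is needed.
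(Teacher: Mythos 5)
Your proof is correct and takes essentially the same route as the paper's: both contract the components of $G[R]$ inside the forest obtained after deleting the solution, observe that the surviving nice vertices and tents form an independent set with degrees exactly $2$ and $3$ into the contracted component-vertices, and compare the edge count $2\eta'+3\tau'$ against the forest bound $\rho(R)+\eta'+\tau'-1$ to conclude $\mu(I)\ge 0$ (the paper phrases this as a contradiction rather than a contrapositive, which is immaterial). One minor remark: you lean on non-applicability of Reduction Rule~\ref{redn:must_be_in_soln} to ensure the neighbours of a surviving nice vertex or tent lie in distinct components of $G[R]$, but for the vertices you actually count this holds unconditionally --- two neighbours in the same tree component of $G[R]$ would close a cycle in the forest $G\setminus X$ --- which is how the paper's proof gets by without any reducedness hypothesis, none being present in the lemma statement.
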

\begin{proof}
Let us assume for contradiction that $I$  is a \yes{} instance and  $\mu<0$. Let $S$ be a \wfvssol~in $G$ of size at most $k$. Therefore, $F'=G\setminus S$ is a forest. Let $N\subseteq V(G)\setminus R$, $T\subseteq V(G)\setminus R$ be the set of nice vertices and tents in $V(G)\setminus R$, respectively. Since $F'$ is a forest we have that $G'=G[(R \cup N\cup T)\setminus S]$ is a forest. In $G'$, we contract each of the connected components in $R$ to a single vertex to obtain a forest $\tilde{F}$. Observe that $\tilde{F}$ has at most $\lvert V(\tilde{F})\rvert \leq \rho(R)+\lvert N \setminus S \rvert+\lvert T \setminus S \rvert$ vertices and thus can have at most $\rho(R)+\lvert N \setminus S \rvert+\lvert T  \setminus S \rvert-1$ many edges.
The vertices in $(N\cup T) \setminus S \subseteq V(G)\setminus R$ forms an independent set in $\tilde{F}$, since they are nice vertices or tents. 
The vertices in $N\setminus S$ and $T\setminus S$ have  degree $2$ and degree $3$ in $\tilde{F}$,  
respectively,  since their degree cannot drop while contracting the components of $G[R]$. This implies that, 
\begin{eqnarray*}
2 \lvert N \setminus S \rvert + 3\lvert T \setminus S \rvert & \leq |E(\tilde{F})| \leq & \rho(R)+ \lvert N \setminus S \rvert + \lvert T \setminus S \rvert -1.
\end{eqnarray*}
Therefore, $\lvert N \setminus S \rvert + \lvert T \setminus S \rvert < \rho(R)$. But $N \cap T = \emptyset$ and thus   
\begin{eqnarray}
\lvert N \rvert + \lvert T \rvert < \rho(R) + \lvert S \rvert \leq \rho(R)+k.
\label{eqn:measure}
\end{eqnarray}
However, by our assumption,  $\mu(I) = \rho(R) + k - (\lvert N \rvert + \lvert T \rvert)<0$ and thus $\lvert N \rvert + \lvert T \rvert > \rho(R) + k$. This, contradicts the inequality given in Equation~\ref{eqn:measure} contradicting our assumption that $I$ is a \yes{} instance. This completes the proof. 
\qed
\end{proof}

\subsection{Algorithm for {\sc  \diswfvs}}
In this section we conjure all that we have developed and give the description of the algorithm for 
\diswfvs, prove its correctness and analyze its running time assuming a polynomial time procedure that we explain in the next subsection. 

\medskip

\noindent \textbf{Description of the Algorithm.} Let $I=(G,w,R,k)$ be an instance of \diswfvs. If $G[R]$ is not a forest, then return that $(G,w,R,k)$ is a \no{} instance. Hereafter, we will assume that $G[R]$ is a forest. First, the algorithm exhaustively applies the Reduction Rules \ref{redn:degree_less_than_one} to \ref{redn:leaf_in_F}. If at any point $\mu(I)<0$, then we return that $(G,w,R,k)$ is a \no{} instance. For sake of clarity, we will denote the reduced instance by $(G,w,R,k)$. If all the vertices $v \in V(G\setminus R)$ are either nice vertices or tents then we solve the problem in polynomial time by using Theorem~\ref{weighted-matroid-parity-poly}. We defer the proof of Theorem~\ref{weighted-matroid-parity-poly} to the following subsection, where we solve the instance using \textsc{Weighted Matroid Parity}. Otherwise, we apply the following single Branching rule.


\begin{branchrule}\label{branch:no_good_not_nice}
{\rm 
 If there is a leaf vertex $v\in V(G)\setminus R$, which is neither a nice vertex nor a tent then, we branch as the following:
\begin{enumerate}[(i)]
\item $v$ belongs to the solution. In this branch we delete $v$ from $G$ and decrease $k$ by $1$. The resulting instance is $(G\setminus \{v\},w',R,k-1)$. Here, $w'$ is restriction of $w$ to $V(G)\setminus \{v\}$. 

\texttt{The measure $\mu$ decreases by at least  $1$}
\item $v$ does not belong to the solution. Note that $v$ is neither a nice vertex nor a tent and none of the Reduction rule applies. Therefore, $v$ has at least $3$ neighbors in $R$. We add the vertex $v$ to $R$. As a result, the number of components in $G[R]$ decreases by at least $2$. The resulting instance is $(G,w,R\cup \{v\},k)$. 

\texttt{The measure $\mu$ decreases by at least $2$.}
\end{enumerate}
}
\end{branchrule}

The worst case branching vector corresponding to the above branching rule is \branchvector{1,2}.

\begin{lemma}\label{correctness-wfvs}
The algorithm presented is correct.
\end{lemma}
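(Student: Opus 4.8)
The plan is to prove correctness of the branching algorithm for \diswfvs{} by establishing three things: that the reduction rules are safe (already done for Rules~\ref{redn:degree_less_than_one}--\ref{redn:leaf_in_F}), that the branching rule is exhaustive and safe, and that the polynomial-time base case is invoked on a valid instance. Since safety of the reduction rules and the base-case correctness (via Theorem~\ref{weighted-matroid-parity-poly}) are granted, the heart of the argument is Branching Rule~\ref{branch:no_good_not_nice}. First I would verify that the branching rule applies whenever the algorithm has not already terminated: after exhaustive application of the reduction rules, if some vertex of $V(G)\setminus R$ is neither nice nor a tent, then I must produce a \emph{leaf} vertex $v$ of $V(G)\setminus R$ (i.e.\ a vertex with at most one neighbor inside $F = G\setminus R$) that is neither nice nor a tent. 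This follows because $F$ is a forest (otherwise Reduction Rule~\ref{redn:must_be_in_soln} would have fired on a vertex whose deletion leaves a cycle, or $G[R]$ itself would not be a forest), and a nonempty forest has a leaf; one then argues that some such leaf fails to be nice or a tent, since otherwise every vertex of $F$ would be nice or a tent and the base case would apply.

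Next I would establish safety of the two branches. The branching is on whether $v$ lies in the solution $X$. Since $X$ must be a feedback vertex set and we seek the minimum-weight such set of size at most $k$ disjoint from $R$, every solution either contains $v$ or does not, so the two branches are exhaustive and the algorithm returns the minimum over them. In branch~(i) we delete $v$ and decrement $k$, obtaining $(G\setminus\{v\}, w', R, k-1)$; any feedback vertex set of $G$ containing $v$ restricts to one of $G\setminus\{v\}$ of the same weight minus $w(v)$, and conversely, so correctness is immediate. In branch~(ii) we move $v$ into $R$: here I would note that a solution avoiding $v$ is exactly a feedback vertex set of $G$ disjoint from $R\cup\{v\}$, so $(G, w, R\cup\{v\}, k)$ is the correct recursive instance, provided $G[R\cup\{v\}]$ remains a forest (if it does not, that instance is a no-instance, consistent with branch~(ii) yielding no valid solution).

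I would then confirm the claimed measure drops, which are needed for the running-time analysis rather than correctness but should be stated alongside. In branch~(i), $k$ drops by one and $\rho(R)$ does not increase, so $\mu$ decreases by at least $1$. In branch~(ii), adding $v$ (which has at least $3$ neighbors in $R$, since $v$ is a leaf of $F$ that is neither nice nor a tent and no reduction rule applies) merges at least three components of $G[R]$ into one, hence $\rho(R)$ drops by at least $2$; the counts $\eta,\tau$ do not decrease, so $\mu$ decreases by at least $2$. Combined with Lemma~\ref{mu-less-than-0}, which guarantees the recursion terminates when $\mu<0$, this yields a finite branching tree whose leaves are either no-instances or polynomial-time-solvable base cases.

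The main obstacle I anticipate is the applicability claim, namely arguing cleanly that after the reduction rules are exhausted, the dichotomy ``either every vertex of $F$ is nice or a tent, or else there exists a leaf of $F$ that is neither.'' One must handle the degree bookkeeping carefully: a leaf of $F$ could still have many neighbors in $R$, and I need to rule out that the only non-nice, non-tent vertices are internal vertices of $F$ with no available leaf to branch on. The forest structure of $F$ together with Reduction Rule~\ref{redn:leaf_in_F} (which eliminates leaves of $F$ having at most two neighbors in $R$) is the key: any remaining leaf of $F$ has at least three neighbors in $R$, which is exactly what branch~(ii) requires, so the leaf selected is automatically a valid branching vertex. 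Verifying that such a leaf exists whenever the base case does not apply is the crux of the correctness proof.
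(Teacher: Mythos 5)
Your proposal is correct and follows essentially the same route as the paper: safeness of the reduction rules, exhaustiveness of Branching Rule~\ref{branch:no_good_not_nice}, the base cases via Lemma~\ref{mu-less-than-0} and Theorem~\ref{weighted-matroid-parity-poly}, and termination through the measure $\mu$ (which the paper packages as an induction on $\mu$). The one point you rightly flag as delicate --- that whenever the base case does not apply there is a leaf of $F$ that is neither nice nor a tent, with at least $3$ neighbors in $R$ --- is left implicit in the paper and closes immediately once one observes that nice vertices and tents have all their neighbors in $R$ and are therefore isolated in $F$.
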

\begin{proof}
Let $I=(G,w,R,k)$ be an input to the \diswfvs. We prove the correctness of the algorithm by induction on the measure $\mu=\mu(I)$. By Lemma~\ref{mu-less-than-0} when $\mu<0$, then we correctly conclude that $I$ is a \no{} instance. 

For induction hypothesis, let us assume that the algorithm correctly decides whether the input is a \yes{/}\no{} instance for $\mu=t$. We will prove it for $\mu=t+1$. If any of the Reduction Rules apply then we create an equivalent instance by the safeness of the Reduction rules. We either get an instance $I'$, where $\mu(I') < \mu(I)$ (the case when Reduction Rule~\ref{redn:must_be_in_soln} is applied) and then by induction hypothesis the algorithm correctly decides for the measure $\mu=t$. Otherwise, we have an instance with the same measure. If none of the reduction rules are applicable, then we have the following cases:
\begin{itemize}
\item Each $v \in V(G)\setminus R$ is either a nice vertex or a tent. In this case we solve the problem in polynomial time, and the correctness follows from the correctness of Theorem~\ref{weighted-matroid-parity-poly}.
\item There is a leaf $v\in V(G)\setminus R$, such that $v$ is neither a nice vertex nor a tent. In this case we apply the Branching Rule~\ref{branch:no_good_not_nice}. The Branching Rule is exhaustive. Moreover, at each branch the measure decreases at least by one. Hence, by induction hypothesis it follows that the algorithm correctly decides whether $I$ is a \yes{} instance or not.
\end{itemize}
This completes the proof of correctness. 
\qed
\end{proof}

Thus, we have an \FPT{} algorithm for \diswfvs.
\begin{lemma}\label{lem:diswfvs}
\diswfvs~can be solved in time $\OO^{\star}(2.618^k)$.
\end{lemma}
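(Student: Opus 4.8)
The plan is to prove Lemma~\ref{lem:diswfvs} by analyzing the branching algorithm described above together with the measure $\mu(I)=k+\rho(R)-(\eta+\tau)$. First I would establish that the measure is a legitimate potential function for bounding the search tree: I would verify that initially $\mu(I) \le k + \rho(R) \le 2(k+1) = \OO(k)$, using that $R$ has size $k+1$ so $G[R]$ has at most $k+1$ connected components, and that $\mu$ never goes below $0$ at a live node of the tree (by Lemma~\ref{mu-less-than-0}, any node with $\mu<0$ is a \no{} leaf). The key point is that every operation the algorithm performs either leaves the measure unchanged (Reduction Rules, which we apply only polynomially often by the remark that Rule~\ref{redn:leaf_in_F} fires at most $|V(F)|$ times) or strictly decreases it at the branching step.

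The heart of the argument is the branching vector. At Branching Rule~\ref{branch:no_good_not_nice}, branch (i) decreases $\mu$ by at least $1$ (since $k$ drops by $1$ while $\rho(R)$ and $\eta+\tau$ can only change favorably), and branch (ii) decreases $\mu$ by at least $2$ (adding the degree-$\ge 3$ vertex $v$ to $R$ merges at least three components of $G[R]$ into one, so $\rho(R)$ drops by at least $2$, while $k$ and $\eta+\tau$ are unaffected up to the reduction-rule cleanup). This yields the branching vector $(1,2)$, whose characteristic equation is $x^{-1}+x^{-2}=1$, i.e. $x^2 = x+1$, with positive root the golden ratio $\varphi = (1+\sqrt 5)/2 \approx 1.618$. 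Hence the number of leaves of the search tree is bounded by $\varphi^{\mu} = \OO^\star(1.618^{\,2k})$; since $1.618^2 \approx 2.618$, this gives $\OO^\star(2.618^k)$ nodes. At each leaf where every vertex of $V(G)\setminus R$ is nice or a tent, Theorem~\ref{weighted-matroid-parity-poly} solves the residual instance in polynomial time, and the Reduction Rules along each root-to-leaf path also cost only polynomial time per node. Multiplying the number of nodes by the polynomial work per node gives the claimed $\OO^\star(2.618^k)$ bound.

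I would present the measure bookkeeping carefully as the main formal content: the claim $\mu \le 2k+\OO(1)$ at the root is what converts a golden-ratio bound in $\mu$ into a $2.618^k$ bound in $k$, so I would spell out $\varphi^{2k} = (\varphi^2)^k = (\varphi+1)^k$ and note $\varphi+1 = (3+\sqrt5)/2 \approx 2.618$. I would also confirm that the branch-(ii) guarantee $\rho(R)$ decreases by $2$ is valid: the vertex $v$ is a leaf of $V(G)\setminus R$ that is neither nice nor a tent and survives all Reduction Rules, so by the argument in the branching rule it has at least three neighbors in $R$, and these neighbors lie in at least three distinct components of $G[R]$ (otherwise $G[R\cup\{v\}]$ would contain a cycle and Reduction Rule~\ref{redn:must_be_in_soln} would have applied), so absorbing $v$ into $R$ reduces the component count by at least $2$ while keeping $G[R]$ a forest.

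The step I expect to be the main obstacle is ensuring that the Reduction Rules genuinely do not inflate the measure and that their application does not disturb the branching analysis. In particular, Reduction Rule~\ref{redn:leaf_in_F} increases $\rho(R)$ by one, which in isolation would raise $\mu$; the delicate point, already noted in the text, is that it simultaneously creates a new nice vertex or tent, so $\eta+\tau$ also rises by one and the net change in $\mu$ is non-positive. I would make this cancellation explicit and confirm the other three Reduction Rules are measure-non-increasing (Rule~\ref{redn:must_be_in_soln} even decreases $k$ and hence $\mu$), so that the only measure-decreasing branching steps are the ones governing the $(1,2)$ recurrence. Once this accounting is airtight, the running-time bound follows by the standard branching-tree size estimate.
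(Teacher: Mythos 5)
Your proposal is correct and follows essentially the same route as the paper: the identical measure $\mu(I)=k+\rho(R)-(\eta+\tau)$ with $\mu\leq 2k+\OO(1)$ at the root, the $(1,2)$ branching vector from Branching Rule~\ref{branch:no_good_not_nice} yielding the recurrence $T(\mu)\leq T(\mu-1)+T(\mu-2)$ and the golden-ratio bound $1.618^{\mu}\leq \OO^\star(1.618^{2k})=\OO^\star(2.618^k)$, with Lemma~\ref{mu-less-than-0} handling $\mu<0$ and Theorem~\ref{weighted-matroid-parity-poly} handling the polynomial-time leaves. Your additional bookkeeping (the cancellation in Reduction Rule~\ref{redn:leaf_in_F} between $\rho(R)$ and $\eta+\tau$, and the three-distinct-components argument for branch (ii) via Reduction Rule~\ref{redn:must_be_in_soln}) merely spells out details the paper states in the surrounding text rather than inside the proof itself.
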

\begin{proof}
The correctness of the algorithm follows from Lemma~\ref{correctness-wfvs}. All of the Reduction rules~\ref{redn:degree_less_than_one} to \ref{redn:leaf_in_F} can be applied in polynomial time. Also, at each branch we spend a polynomial amount of time. For each of the recursive calls at a branch, the measure $\mu$ decreases at least by $1$. When $\mu<0$, then we are able to correctly conclude that the given input is a no instance by Lemma~\ref{mu-less-than-0}. The number of leaves and thus the size of the branching tree is upper bounded by the  solution to the following recurrence,  
\[T(\mu)\leq T(\mu-1)+T(\mu-2).\]
The above recurrence solves to $1.618^\mu$. Since at  the start of the algorithm $\mu \leq 2k$, we have that the number of leaves is upper bounded by $\OO(1.618^{2k})$. 
Therefore, \diswfvs~can be solved in time $\OO^{\star}(2.618^k)$.
\qed
\end{proof}

Using Lemma~\ref{lem:diswfvs} and Observation~\ref{obs:disj-actual}, we prove Theorem~\ref{thm:wfvs-fpt}.

\subsection{Algorithm for sub-cubic \diswfvs}
Let $(G,w,R,k)$ be an instance of \diswfvs~where each vertex in $V(G)\setminus R$ is either a nice vertex or a tent. An instance of the {\sc Matroid Parity} problem we create is same as that in~\cite{fvs-matriod-marcin}.
In fact, what we will use is the {\sc Weighted Matroid Parity} problem. 

The {\sc Weighted Matroid Parity} problem for the graphic matroid $\mathcal{M}_H$ of a graph $H$ is defined as follows. Let $H$ be a graph with even number of edges, i.e. $\lvert E(H)\rvert=2m$ and we have a partition of $E(H)$ into pairs, say $E(H)=\{e^1_1,e^1_2\} \cup \{e^2_1,e^2_2\} \cup \dots \cup \{e^m_1,e^m_2\}$. Furthermore, for each pair $\{e^i_1,e^i_2\}$, for $i \in \{1,2,\dots, m\}$ there is a positive weight $w_{\mbox{\tiny $\mathcal{M}$}}(\{e^i_1,e^i_2\})$. That is,  $w_{\mbox{\tiny $\mathcal{M}$}}$ is a weight function on pairs. We want to find a set $I \subseteq \{1,2,\dots, m\}$ of maximum weight such that $\cup_{i \in I} \{e^i_1,e^i_2\}$ is an independent set of $\mathcal{M}_H$. Equivalently, $\cup_{i \in I} \{e^i_1,e^i_2\}$ is acyclic in $H$. The {\sc Weighted Matroid Parity} problem is polynomial time solvable on graphical matroids (gammoids)~\cite{Tong-weighted-matroid}.

For each vertex $v \in V(G) \setminus R$, we arbitrarily label the edges incident to $v$. 
If $v$ is a nice vertex then we label it as $\{e^v_1,e^v_2\}$; otherwise if $v$ is a tent vertex then we label it as $\{e^v_0,e^v_1,e^v_2\}$. We let $w_{\mbox{\tiny $\mathcal{M}$}}(\{e^v_1,e^v_2\})=w(v)$. Note that, $F=E(G[R]) \cup \{e^v_0: v \in V(G)\setminus R\}$ is a forest. We contract all the edges in $G$ which are in $F$ to get a new graph $H$. In the process of contraction, we have not contracted any multiple edge or self loops. Also, $$E(H)=\bigcup_{v\in V(G)\setminus R}\{e^v_1,e^v_2\}.$$ 
The input to the \textsc{Weighted Matroid Parity} algorithm for graphical matroid is the graph $H$, the set of pairs $\{e^v_1,e^v_2\}$ with weight $w_{\mbox{\tiny $\mathcal{M}$}}(\{e^v_1,e^v_2\})$, for $v \in V(G)\setminus R$. In Lemma~\ref{correct-matroid-parity} we prove that finding a minimum weight \fvssolfull~$X\subseteq V(G)\setminus R$ in $(G,w,R,k)$ is equivalent to computing a maximum weight subset $I \subseteq \{\{e^v_1,e^v_2\}\lvert v \in V(G)\setminus R\}$, such that $\cup_{v \in I} \{e^v_1,e^v_2\}$ is an independent set in $\mathcal{M}_H$.

\begin{lemma} \label{correct-matroid-parity}
For a subset $I\subseteq V(G)\setminus R$, $\cup_{i \in I}\{e^i_1,e^i_2\}$ is an independent set in $\mathcal{M}_H$ of maximum weight if and only if $(V(G) \setminus R)\setminus I$ is a \fvssolfull~in $G$ of minimum weight.
\end{lemma}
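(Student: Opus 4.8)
The plan is to establish the biconditional by exploiting the structure of the contraction that produced $H$. The key observation to make precise first is the correspondence between cycles in $G$ and cycles in $H$. Since $F = E(G[R]) \cup \{e^v_0 : v \in V(G)\setminus R\}$ is a forest, contracting all edges of $F$ does not create or destroy the cyclic structure carried by the remaining edges: a subset of edges $\cup_{i\in I}\{e^i_1,e^i_2\}$ is acyclic in $H$ if and only if the corresponding subgraph, together with the (acyclic) forest $F$, is acyclic in $G$. I would state and use the standard matroid fact that contracting a forest preserves independence in the graphic matroid, so that independence in $\mathcal{M}_H$ of the edge set associated with $I$ translates exactly to acyclicity in $G$ of the subgraph on $R \cup (N\cup T)$ restricted to the vertices indexed by $I$.

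The heart of the argument is to show that keeping vertex $v\in V(G)\setminus R$ outside the feedback vertex set corresponds precisely to selecting the pair $\{e^v_1,e^v_2\}$ into $I$. Concretely, I would argue that for a set $I \subseteq V(G)\setminus R$, the graph $G\setminus((V(G)\setminus R)\setminus I)$ is a forest if and only if $\cup_{i\in I}\{e^i_1,e^i_2\}$ is independent in $\mathcal{M}_H$. For the forward direction, given that $(V(G)\setminus R)\setminus I$ is a feedback vertex set, the surviving graph $G[R \cup I]$ is a forest; since every vertex of $I$ is a nice vertex or a tent with all neighbors in $R$, each such vertex contributes its two edges $e^v_1, e^v_2$ (and, for a tent, the third edge $e^v_0$ which lies in $F$ and is contracted away), and acyclicity of $G[R\cup I]$ forces acyclicity of the contracted edge set in $H$. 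For the reverse direction, independence in $\mathcal{M}_H$ lifts back to a forest in $G$ after re-expanding $F$, precisely because $F$ is itself a forest and the $e^v_0$ edges only attach tents as degree-one appendages once the $R$-components are contracted.

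It remains to connect the weight functions and the optimization directions. Here I would use the identity $w_{\mbox{\tiny $\mathcal{M}$}}(\{e^v_1,e^v_2\}) = w(v)$ together with the observation that the total weight $w(V(G)\setminus R)$ is a fixed constant independent of $I$. Thus $w(\cup_{i\in I}\{e^i_1,e^i_2\}) = \sum_{v\in I} w(v) = w(V(G)\setminus R) - w((V(G)\setminus R)\setminus I)$, so maximizing the weight of the independent set $I$ in $\mathcal{M}_H$ is equivalent to minimizing the weight $w((V(G)\setminus R)\setminus I)$ of the complementary feedback vertex set. Combining this weight identity with the acyclicity equivalence from the previous paragraph yields the claimed biconditional.

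I expect the main obstacle to be the careful bookkeeping in the reverse direction of the acyclicity equivalence: one must verify that a cycle surviving in $G$ after deleting $(V(G)\setminus R)\setminus I$ genuinely maps to a cycle in $H$, handling the tents correctly. A tent $v\in I$ has three incident edges $e^v_0, e^v_1, e^v_2$ with all endpoints in $R$; the edge $e^v_0$ is contracted into the forest $F$, so after contraction $e^v_1$ and $e^v_2$ become a genuine pair of edges in $H$, and I must confirm that a cycle through $v$ in $G$ using any two of its three edges corresponds to a cycle in $H$ among the selected pairs. Making this translation airtight—ensuring no cycle is lost or spuriously introduced by the contraction of $F$—is the delicate step; the weight argument, by contrast, is routine once the structural equivalence is in hand.
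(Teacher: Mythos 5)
Your proposal is correct and takes essentially the same route as the paper's proof: both establish that, because $F=E(G[R])\cup\{e^v_0 : v \in V(G)\setminus R\}$ is a forest, independence of $\cup_{i\in I}\{e^i_1,e^i_2\}$ in $\mathcal{M}_H$ is equivalent to acyclicity of $G[R\cup I]$ (with the tent edges $e^v_0$ absorbed into the contracted forest, exactly the bookkeeping you flag), and both finish with the same complementary weight identity, since $w_{\mathcal{M}}(\{e^v_1,e^v_2\})=w(v)$ gives $w((V(G)\setminus R)\setminus I)=w(V(G))-w(R)-w_{\mathcal{M}}(I)$, so maximizing the one minimizes the other. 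No gaps; your plan is a faithful reconstruction of the paper's argument.
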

\begin{proof}
Note that by the definition of $H$, $\cup_{i \in I}\{e^i_1,e^i_2\}$ is an independent set in $\mathcal{M}_H$ if and only if $F \cup (\cup_{i \in I}\{e^i_1,e^i_2\})$ is acyclic in $G$. As mentioned earlier, $F=E(G[R]) \cup \{e^v_0: v \in V(G)\setminus R\}$ is a forest. Therefore, if $\cup_{i \in I}\{e^i_1,e^i_2\}$ is an independent set in $\mathcal{M}_H$, then $G'=G\setminus ((V(G) \setminus R)\setminus I)$ is a forest. In other words, $(V(G) \setminus R)\setminus I$ is a \fvssolfull~in $G$. 

In the reverse direction, consider $I \subseteq V(G)\setminus R$ such that $(V(G) \setminus R)\setminus I$ is a \fvssolfull~in $G$. This implies that $G[I \cup R]$ is a forest. Define $F'=\bigcup_{i \in I}\{e^i_1,e^i_2\}$. Clearly, $F'\subseteq E(G[I \cup R])$. Suppose, $F'$ contains a cycle in $H$. This means that, upon uncontracting the edges of $F$, there is a cycle contained in $G[I\cup R]$, which is a contradiction. Therefore, $\cup_{i \in I}\{e^i_1,e^i_2\}$ is an independent set in $\mathcal{M}_H$.


Note that by definition of $w_{\mbox{\tiny $\mathcal{M}$}}$, $w_{\mbox{\tiny $\mathcal{M}$}}(I)=w(I)$. Therefore, $w((V(G) \setminus R)\setminus I)=w(V(G))-w(R)-w(I)=w(V(G))-w(R)-w_{\mbox{\tiny $\mathcal{M}$}}(I)$. This implies that whenever $w_{\mbox{\tiny $\mathcal{M}$}}(I)$ is maximized then $w((V(G) \setminus R)\setminus I)$ is minimized and vice-versa. This completes the proof.
\qed
\end{proof}

Lemma~\ref{correct-matroid-parity} immediately implies the following theorem.

\begin{theorem}\label{weighted-matroid-parity-poly}
Let $(G,w,R,k)$ be an instance of \diswfvs. If every vertex $v\in V(G)\setminus R$ is either a nice vertex or a tent then, \diswfvs~in $(G,w,R,k)$ can be solved in polynomial time.
\end{theorem}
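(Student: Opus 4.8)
The plan is to obtain the theorem as a direct consequence of the correspondence already established in Lemma~\ref{correct-matroid-parity}, combined with the fact that \wmp{} is polynomial-time solvable on graphic matroids~\cite{Tong-weighted-matroid}. First I would assemble, in polynomial time, the \wmp{} instance described just above the statement: label the edges incident to every $v\in V(G)\setminus R$ (two edges $\{e^v_1,e^v_2\}$ if $v$ is a nice vertex, three edges $\{e^v_0,e^v_1,e^v_2\}$ if $v$ is a tent), set $w_{\mbox{\tiny $\mathcal{M}$}}(\{e^v_1,e^v_2\})=w(v)$, and contract the forest $F=E(G[R])\cup\{e^v_0:v\in V(G)\setminus R\}$ to get $H$. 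Here I would first verify that the construction is well defined: because Reduction Rules~\ref{redn:degree_less_than_one}--\ref{redn:leaf_in_F} have been applied exhaustively — in particular Reduction Rule~\ref{redn:must_be_in_soln}, which deletes any $v$ closing a cycle with $R$, and Reduction Rule~\ref{redn:edge_multiplicity} — no self-loop or spurious parallel edge is produced when contracting $F$, so $\mathcal{M}_H$ is a genuine graphic matroid and $E(H)=\bigcup_{v\in V(G)\setminus R}\{e^v_1,e^v_2\}$ really is partitioned into the desired pairs.

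Next I would invoke the algorithm of~\cite{Tong-weighted-matroid} on $(\mathcal{M}_H,\{\{e^v_1,e^v_2\}\}_v,w_{\mbox{\tiny $\mathcal{M}$}})$ to obtain, in polynomial time, a maximum-weight index set $I\subseteq V(G)\setminus R$ such that $\bigcup_{i\in I}\{e^i_1,e^i_2\}$ is independent in $\mathcal{M}_H$. At this point Lemma~\ref{correct-matroid-parity} does all the substantive work: it guarantees that $X:=(V(G)\setminus R)\setminus I$ is a \fvssolfull{} of $G$ disjoint from $R$, and that maximizing $w_{\mbox{\tiny $\mathcal{M}$}}(I)$ is equivalent to minimizing $w(X)$, so $X$ is of minimum weight among all feedback vertex sets of $G$ disjoint from $R$. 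Since every individual step — building $H$, calling the matroid-parity routine, and reading off $X$ — is polynomial, the overall procedure is polynomial, which is exactly what the theorem asserts.

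The single place where I expect to need genuine care, and which I regard as the main obstacle, is reconciling the weight objective of \wmp{} with the cardinality budget that is part of the definition of \diswfvs{}: the problem asks for a lightest feedback vertex set of size \emph{at most} $k$, whereas the matroid-parity routine is blind to $\lvert X\rvert$ and returns the globally lightest disjoint solution, which in general need not be the lightest one of size at most $k$. I would stress that a naive size test on the returned $X$ is unsound — a lightest disjoint feedback vertex set may legitimately be larger than a heavier feasible one — so the budget has to be folded into the optimization itself rather than checked afterwards, exploiting the rigid structure of the reduced subcubic instance. Verifying that such a reconciliation is correct and still polynomial is the step I would spend the most effort on; the remainder is a routine assembly of the construction, the cited matroid-parity algorithm, and Lemma~\ref{correct-matroid-parity}.
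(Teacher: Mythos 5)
Your first two paragraphs coincide exactly with the paper's proof: the paper builds the same contracted graph $H$, invokes the same polynomial algorithm of~\cite{Tong-weighted-matroid}, and then simply declares that Lemma~\ref{correct-matroid-parity} ``immediately implies'' Theorem~\ref{weighted-matroid-parity-poly}. Your well-definedness check (Reduction Rule~\ref{redn:must_be_in_soln} forces the neighbors of each nice vertex, and of each tent, to lie in distinct components of $G[R]$, so contracting $F$ creates no self-loops among the pair edges) is exactly the observation the paper makes in passing. So up to that point you have reproduced the paper's argument.

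The interesting part is your third paragraph: the cardinality budget you flag is a \emph{genuine} issue, and it is one the paper's one-line proof silently skips, since Lemma~\ref{correct-matroid-parity} speaks only of the minimum-weight \fvssolfull{} disjoint from $R$ and never of the size bound $k$. Your worry is not hypothetical: take $R=\{r_1,r_2,r_3\}$ with no edges inside $R$ (so $\rho(R)=3$), a tent $t$ adjacent to all three $r_i$ with $w(t)=10$, and nice vertices $u_1,u_2,u_3$ of weight $1$ adjacent to $\{r_1,r_2\}$, $\{r_1,r_3\}$, $\{r_2,r_3\}$ respectively, with $k=2$. No reduction rule applies and $\mu=2+3-4=1\geq 0$, yet in $H$ the pair of $t$ is incompatible with every $u_i$-pair and with the triple $\{u_1,u_2,u_3\}$, so the unique maximum-weight parity set is $\{t\}$ (weight $10$ versus $2$), giving $X=\{u_1,u_2,u_3\}$ of weight $3$ but size $3>k$, whereas the correct output is, e.g., $\{t,u_1\}$ of weight $11$ and size $2$. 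Hence both returning the parity optimum and rejecting after a naive size check are wrong, confirming your claim that the budget must enter the optimization. Where your proposal stops short is in actually carrying this out, but the fix is routine and preserves polynomiality: since $\lvert X\rvert=(\eta+\tau)-\lvert I\rvert$, it suffices to compute a maximum-weight parity set of each cardinality $c\geq \eta+\tau-k$ and take the lightest resulting complement; the reduction of~\cite{Tong-weighted-matroid} to weighted graphic matching supports prescribed cardinalities, as maximum-weight matchings of each fixed size are computable in polynomial time. It is also worth noting why the omission is harmless where the paper uses the subroutine: in Theorem~\ref{thm:bgvd-wfvs} every admissible solution vertex of $\hat{G}$ has weight one, so weight equals size and the unconstrained parity optimum automatically respects the budget; but for Theorem~\ref{weighted-matroid-parity-poly} (and hence Theorem~\ref{thm:wfvs-fpt}) with general weights, the repair you anticipated is genuinely needed, and your proof is incomplete without it.
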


\section{\FPT{} algorithm for \Bgvd}
In this section, we present an \FPT{} algorithm for the \bgvd~problem. First, we look at the special case, when the input graph does not have any small obstructions in the form of $D_4$'s and $C_4$'s. Here, $D_4 = K_4 -e$. We show that, in this case, \bgvd{} reduces to \wfvs. Later, we solve the general problem, using the algorithm of the special case.

\subsection{{\sc Restricted \bgvd}} \label{instance-weighted-fvs}

In this part, we solve the following special case of \bgvd{} in \FPT{} time.

\defparproblem{\sc Restricted \bgvd}{A connected undirected graph $G$, which is $\{D_4,C_4\}$-free, and a positive integer $k$.}{$k$}
{Does there exist a set $S$ such that $G\setminus S$ is a block graph?}

Let $G$ be the input graph. Let $\mathcal{C}$ be the set of maximal cliques in $G$. We start with the following simple observation about graphs without $C_4$ and $D_4$. 

\begin{lemma}\label{lem:maximal-cliques}
 Let $G$ be a graph that does not contain $C_4$ and $D_4$ as an induced subgraph then (a) any two maximal cliques intersect on at most one vertex and (b) 
 the number of maximal cliques in $G$ is at most $n^2$.
\end{lemma}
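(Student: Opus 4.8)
The plan is to establish (a) first, since (b) will follow from it by a short counting argument. For (a), I would argue by contradiction. Suppose two distinct maximal cliques $C_1, C_2$ satisfy $|C_1 \cap C_2| \geq 2$, and fix two vertices $u, v \in C_1 \cap C_2$. Since $C_1 \neq C_2$ are both maximal, there is a vertex $a \in C_1 \setminus C_2$. Because $a \notin C_2$ and $C_2$ is a maximal clique, $a$ cannot be adjacent to every vertex of $C_2$ (otherwise $C_2 \cup \{a\}$ would be a larger clique), so there is some $w \in C_2$ with $aw \notin E(G)$. As $a$ is adjacent to both $u$ and $v$ (they lie in the clique $C_1$ together with $a$), we have $w \notin \{u, v\}$, and $w \neq a$ since $w \in C_2$ but $a \notin C_2$; thus $a, u, v, w$ are four distinct vertices.

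Now I would inspect the induced subgraph on $\{a, u, v, w\}$: the edges $au, av$ are present (since $a, u, v \in C_1$), the edges $uv, uw, vw$ are present (since $u, v, w \in C_2$), and only $aw$ is missing. Hence $G[\{a, u, v, w\}]$ is exactly $K_4 - e = D_4$, contradicting the hypothesis that $G$ is $D_4$-free. This proves (a). I note that this direction uses only $D_4$-freeness, not $C_4$-freeness.

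For (b), I would use (a) in the contrapositive-friendly form: two distinct maximal cliques share at most one vertex, and therefore share no edge. Consequently, assigning to each maximal clique that contains at least one edge an arbitrary edge of that clique yields an injection into $E(G)$, so there are at most $|E(G)| \leq \binom{n}{2}$ maximal cliques of size at least two. The only remaining maximal cliques have a single vertex, which must then be an isolated vertex of $G$, and there are at most $n$ of these. Summing, the number of maximal cliques is at most $\binom{n}{2} + n = \frac{n(n+1)}{2} \leq n^2$.

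The steps are all elementary; the one place that needs care is (a), specifically verifying via maximality of $C_2$ that the required non-neighbour $w$ exists and that $a, u, v, w$ are genuinely four distinct vertices inducing precisely $D_4$ rather than, say, a $K_4$ (which would \emph{not} be forbidden and would give no contradiction). Once (a) is in hand, the \emph{``no shared edge''} reformulation is the key observation that converts the clique count into an edge count, after which (b) is immediate.
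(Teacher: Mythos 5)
Your proof is correct and follows essentially the same route as the paper's: (a) via $D_4$-freeness and (b) by charging each maximal clique to an edge it owns exclusively. You simply spell out the details the paper leaves implicit — the explicit construction of the induced $K_4-e$ from $a,u,v,w$, and the separate accounting for single-vertex (isolated) maximal cliques — both of which are welcome additions to the paper's one-line argument.
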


\begin{proof}
 Let $C_1$ and $C_2$ be two distinct maximal cliques in $\mathcal{C}$. Since $G$ is $D_4$-free, $V(C_1)\cap V(C_2)$ can have at most one vertex. Thus, each edge of $G$ belongs to exactly one maximal clique. This gives a bound of $n^2$ on the number of maximal cliques. \qed
\end{proof}

We construct an auxiliary weighted bipartite graph $\hat{G}$ in the following way: $\hat{G}$ is a bipartite graph with vertex set bipartition $V(G) \cup V_{\CC}$, where $V_{\CC}$ is the set where we add a vertex $v_{C}$ corresponding to each maximal clique $C \in \CC$. Note that there is a bijective correspondence between the vertices of $V_{\mathcal{C}}$ and the maximal cliques in $\mathcal{C}$. A vertex $v$ of a clique  $C$ is called {\em external} if it is part of at least two maximal cliques in $\cal C$. We add an edge between a vertex $v\in V(G)$ and a vertex $v_C \in V_{\mathcal{C}}$ in $E(\hat G)$ if and only if $v$ is an external vertex of the clique $C \in \mathcal{C}$. 


\begin{lemma}\label{lem:bgvd-wfvs}
Let $G$ be a graph without induced $C_4$ and $D_4$ and $S \subseteq V(G)$. Then $S$ is  \bvds\ of $G$ if and only if $\hat{G} \setminus S$ is acyclic.
\end{lemma}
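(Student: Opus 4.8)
The plan is to prove the contrapositive-style equivalence that $\hat G\setminus S$ contains a cycle if and only if $G\setminus S$ fails to be a block graph. Throughout I would use that $G\setminus S$ is an induced subgraph of $G$ and hence is itself $\{C_4,D_4\}$-free. Since $G\setminus S$ is already $D_4$-free, the characterisation of block graphs as chordal graphs with no induced $D_4$ shows that $G\setminus S$ is a block graph iff it is chordal; and being $C_4$-free, it is chordal iff it has no induced cycle of length at least $5$. I would also record at the outset that, since $S\subseteq V(G)$ lies entirely on the $V(G)$ side of the bipartition, every clique-vertex of $V_\CC$ survives in $\hat G\setminus S$, so this graph is bipartite on $(V(G)\setminus S)\cup V_\CC$ and any cycle in it alternates between the two sides. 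The one structural fact about block graphs I rely on is that any cycle of a block graph lies inside a single block, and every block is a clique; I would state this explicitly before the forward direction.

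For the direction that $S$ a \bvds\ implies $\hat G\setminus S$ acyclic, I would argue the contrapositive. A cycle in $\hat G\setminus S$ reads $u_1,v_{C_1},u_2,v_{C_2},\dots,u_t,v_{C_t},u_1$ with the $u_i\in V(G)\setminus S$ distinct, the maximal cliques $C_i$ distinct, and $u_i,u_{i+1}\in C_i$ (indices mod $t$). The case $t=2$ is impossible, since it would force the two distinct maximal cliques $C_1,C_2$ to share both $u_1$ and $u_2$, contradicting part (a) of Lemma~\ref{lem:maximal-cliques}; hence $t\geq 3$. As $u_i,u_{i+1}$ both lie in $C_i$, the edge $u_iu_{i+1}$ survives in $G\setminus S$, so $u_1u_2\cdots u_tu_1$ is a genuine simple cycle of $G\setminus S$ whose $t$ edges come from the $t$ distinct maximal cliques $C_1,\dots,C_t$. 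If $G\setminus S$ were a block graph this cycle would lie inside a single block, which is a clique; then $u_1,u_2,u_3$ would be pairwise adjacent, hence contained in one maximal clique of $G$, which by the unique-clique-per-edge property (Lemma~\ref{lem:maximal-cliques}) would force $C_1=C_2$, a contradiction. Thus $G\setminus S$ is not a block graph.

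For the converse I would again take the contrapositive and start from an induced cycle $D=w_1w_2\cdots w_\ell w_1$ of $G\setminus S$ with $\ell\geq 5$, which exists whenever $G\setminus S$ is not a block graph. Let $C_i$ be the unique maximal clique of $G$ containing the edge $w_iw_{i+1}$. Each $w_i$ lies in $C_{i-1}$ and $C_i$, and these are distinct: if $C_{i-1}=C_i$ then $w_{i-1}$ and $w_{i+1}$ would be adjacent, a chord of an induced cycle of length at least $5$. Hence every $w_i$ is an \emph{external} vertex, so the edges $(w_i,v_{C_i})$ and $(w_{i+1},v_{C_i})$ all exist in $\hat G\setminus S$. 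Moreover the $C_i$ are pairwise distinct, since $C_i=C_j$ would place $w_i$ and $w_j$ in a common clique, forcing them adjacent and so at cyclic distance $1$, which reduces to the consecutive case just handled. Therefore $w_1,v_{C_1},w_2,v_{C_2},\dots,w_\ell,v_{C_\ell},w_1$ is a cycle of $\hat G\setminus S$.

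The bipartite/alternating bookkeeping is routine; the main obstacle, on which both directions turn, is keeping the clique structure of $G$ separate from that of $G\setminus S$. The cleanest way around it is to observe that the property ``each edge lies in a unique maximal clique'' (Lemma~\ref{lem:maximal-cliques}) is inherited by the induced subgraph $G\setminus S$, so I never need the maximal cliques of $G\setminus S$ explicitly: I always read off the clique $C_i$ in $G$ and use only that the relevant edges and external vertices survive the deletion of $S$.
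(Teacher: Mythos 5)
Your proof is correct, and in the forward direction it takes a genuinely different route from the paper's. The reverse direction essentially matches the paper: both lift an induced cycle of $G\setminus S$ of length at least four (at least five in your version, after the chordality reduction) to a cycle of $\hat{G}\setminus S$ via the unique maximal clique of each edge; you additionally verify explicitly that each cycle vertex is \emph{external} (so that the edges to the clique-vertices actually exist in $\hat{G}$), a point the paper leaves implicit in its remark that no two cycle edges share a maximal clique. For the forward direction, however, the paper proceeds by case analysis on the length of a cycle in $\hat{G}\setminus S$: length $4$ is excluded by Lemma~\ref{lem:maximal-cliques}, length $6$ produces an explicit $D_4$ in $G$ from three pairwise intersecting cliques, and for length at least $8$ it asserts that the external vertices form an \emph{induced} cycle of length at least four in $G\setminus S$. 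You instead project any alternating cycle $u_1,v_{C_1},\dots,u_t,v_{C_t}$ (with $t\geq 3$ after disposing of $t=2$) uniformly to a simple cycle of $G\setminus S$, then use the structural fact that a cycle in a block graph lies inside a single block, hence a clique, so pairwise adjacency of $u_1,u_2,u_3$ forces $C_1=C_2$ by the unique-clique-per-edge property --- a contradiction with no case split. This buys real robustness: it sidesteps two delicate points in the paper's argument, namely that in the length-$6$ case $\{x,u,v,w\}$ induces a $D_4$ only when $x$ is nonadjacent to $v$ (if $x$ is adjacent to $v$ one needs a separate argument that the resulting $K_4$ merges two of the cliques), and that in the length-$\geq 8$ case the projected cycle need not be induced (chords are possible), so the paper's claimed induced cycle would require an additional shortcutting or minimality step it does not supply. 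What the paper's approach buys in exchange is the explicit exhibition of concrete obstructions ($D_4$ or a long induced cycle), consistent with the obstruction-based reasoning used elsewhere in the kernelization; both arguments are valid, and yours is the tighter of the two.
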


\begin{proof}
First, let $S$ be a \bvds\ solution for $G$. Suppose that $\hat{G} \setminus S$ has a cycle $C$. Notice that $C$ cannot be a $C_4$, as this corresponds to two maximal cliques that share $2$ vertices. Thus, $C$ is an even cycle of length at least $6$. Suppose $C$ has length $6$. This corresponds to maximal cliques $C_1, C_2,C_3$ such that $u = C_1 \cap C_2$, $v=C_2 \cap C_3$ and $w = C_1\cap C_3$. Since $C_1,C_2,C_3$ are distinct maximal cliques, at least one of them must have a vertex other than $u,v$ or $w$. Without loss of generality, let $C_1$ have a vertex $x \notin\{u,v,w\}$. Then, the set $\{x,u,v,w\}$ forms a $D_4$ in $G$. However, this is not possible, as $G$ did not have a $D_4$ to start with. Hence, $C$ must be an even cycle of length at least $8$. However, this corresponds to a set of maximal cliques and external  vertices, such that the external vertices form an induced cycle of length at least four. This contradicts that $S$ was a \bvds\ for $G$. Thus, $\hat{G} \setminus S$ must be acyclic.
 
 On the other hand, let $\hat{G}\setminus S$ be acyclic. Suppose $G\setminus S$ has an induced cycle $C$, of length at least four. As $C$ is an induced cycle of length at least four, no two edges of $C$ can belong to the same maximal clique. For an edge $(u,v)$ of $C$, let $C_{(u,v)}$ be the maximal clique containing it. Also, let $c_{(u,v)}$ be the corresponding vertex in $\hat{G}$. We replace the edge $(u,v)$ in $C$ by two edges $(u,c_{(u,v)})$ and $(v,c_{(u,v)})$. In this way, We obtain a cycle $C'$ of $\hat{G}\setminus S$, which is a contradiction. Thus, $S$ must be a \bvds\ for $G$. \qed
 \end{proof}

If the input graph $G$ is without induced $C_4$ and $D_4$ then Lemma~\ref{lem:bgvd-wfvs} tells us that 
to find \bvds\ of $G$ of size at most $k$ one can check whether there is a feedback vertex set of size at most $k$ for $\hat{G}$ contained in $V(G)$. To enforce that we find feedback vertex set for $\hat{G}$ completely contained in $V(G)$  we solve an appropriate instance of \wfvs. In particular we give  the weight function $w: V(\hat G)\rightarrow \mathbb{Q}$ as follows. For $v\in V(G)$, $w(v)=1$ and for $v_{C}\in V_{\CC}$, $w(v_{C})= n^4$. Clearly, $V(G)$ is a feedback vertex set of $\hat{G}$ and thus the weight of a minimum sized  feedback vertex set of $\hat{G}$ is at most $n$. This implies that running an algorithm for 
\wfvs\ on an instance $(\hat{G},w,k)$ either returns a feedback vertex set contained inside $V(G)$ or returns that the given instance is a \no{} instance.

%
%

%

\begin{theorem}\label{thm:bgvd-wfvs}
 {\sc Restricted \bgvd} can be solved in $\OO^{\star}(3.618^k)$.
\end{theorem}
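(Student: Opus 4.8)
The plan is to combine the structural reduction of Lemma~\ref{lem:bgvd-wfvs} with the weighting trick described immediately before the theorem statement, and then invoke the \wfvs{} algorithm of Theorem~\ref{thm:wfvs-fpt}. First I would note that on a $\{D_4,C_4\}$-free graph $G$, Lemma~\ref{lem:bgvd-wfvs} shows that a vertex set $S \subseteq V(G)$ is a \bvds{} of $G$ if and only if $\hat{G}\setminus S$ is acyclic. So solving {\sc Restricted \bgvd} on $(G,k)$ amounts to deciding whether $\hat{G}$ admits a feedback vertex set of size at most $k$ that is entirely contained in the $V(G)$ side of the bipartition.

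The key step is to enforce the ``contained in $V(G)$'' constraint through weights rather than through a separate combinatorial argument. I would set up the \wfvs{} instance $(\hat{G},w,k)$ with $w(v)=1$ for every $v\in V(G)$ and $w(v_C)=n^4$ for every clique-vertex $v_C\in V_{\CC}$, exactly as prescribed before the statement. The correctness argument has two halves. For the \emph{size/reachability} half, observe that $V(G)$ itself is a feedback vertex set of $\hat{G}$ (since $\hat{G}\setminus V(G)$ is an independent set, hence acyclic), so any minimum-size feedback vertex set of $\hat{G}$ has size at most $n$ and in particular the minimum-\emph{weight} solution among sets of size at most $k$ has weight at most $n$. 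For the \emph{purity} half, I would argue that because any single clique-vertex already carries weight $n^4 > n$, any solution of weight at most $n$ cannot contain a clique-vertex at all, and is therefore contained in $V(G)$. Thus running the \wfvs{} algorithm on $(\hat{G},w,k)$ either returns a feedback vertex set lying wholly inside $V(G)$—which by Lemma~\ref{lem:bgvd-wfvs} is precisely a \bvds{} of $G$ of size at most $k$—or correctly reports a \no{} instance.

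For the running time, I would first check that the instance can be built in polynomial time: by Lemma~\ref{lem:maximal-cliques} the graph $G$ has at most $n^2$ maximal cliques, these can be enumerated in polynomial time, and $\hat{G}$ therefore has a polynomial number of vertices and edges, with the weight function $w$ assignable in polynomial time. Then, since $\hat{G}$ is a genuine weighted graph and we are seeking a weighted feedback vertex set of size at most $k$, I would invoke the $\OO^{\star}(3.618^k)$ algorithm of Theorem~\ref{thm:wfvs-fpt} directly on $(\hat{G},w,k)$. The polynomial overhead of constructing $\hat{G}$ is absorbed into the $\OO^{\star}$ notation, yielding the claimed $\OO^{\star}(3.618^k)$ bound.

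I expect the only delicate point to be the purity half of the correctness argument, i.e.\ making airtight the claim that the $n^4$ weight forces every returned solution into $V(G)$. The subtlety is that \wfvs{} minimizes weight among sets of size at most $k$, so I must confirm that a minimum-weight solution of size at most $k$ indeed has weight at most $n$ (guaranteed because $V(G)$ restricted to a size-$k$ feedback vertex set, if one exists, gives a light solution) and then that the weight gap between $n^4$ and $n$ rules out any clique-vertex; the inequality $n^4 > n$ suffices for $n\ge 2$, and the small cases are trivial. Everything else—the equivalence itself and the polynomial-time constructibility of $\hat{G}$—is already supplied by Lemmas~\ref{lem:maximal-cliques} and~\ref{lem:bgvd-wfvs}, so the theorem follows almost immediately once this weighting argument is stated carefully.
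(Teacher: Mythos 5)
Your proposal is correct and follows essentially the same route as the paper's proof: build $\hat{G}$ with weights $w(v)=1$ on $V(G)$ and $w(v_C)=n^4$ on $V_{\CC}$, run the $\OO^{\star}(3.618^k)$ \wfvs{} algorithm of Theorem~\ref{thm:wfvs-fpt} on $(\hat{G},w,k)$, and use the weight gap plus Lemma~\ref{lem:bgvd-wfvs} to conclude that a light returned solution is a \bvds\ of size at most $k$ while a heavy (or absent) one certifies a \no{} instance. The only cosmetic difference is that you threshold the returned weight at $n$ whereas the paper checks $w(S)\leq k$; both checks are valid (after dismissing the trivial case $k\geq n$), since purity forces $w(S)=|S|$ and the size bound is already enforced by the \wfvs{} definition.
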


\begin{proof}
 We apply  \wfvs{} on an instance $(\hat{G},w,k)$. Let $S$ be the \wfvssol{} of size at most $k$ in $\hat{G}$ 
returned by \wfvs{} (of course if there exists one). By the discussion above we know that if \wfvs\ does not return that the given instance is a \no{} instance then $S\subseteq V(G)$. If it returns  that the given instance is a \no{} instance then we return the same. Else, assume that $S$ is non-empty. Now we check whether $w(S)$ is at most $k$ or not. Since every vertex in $V(G)$ has been assigned weight one we have that $w(S)=|S|$ and thus if $w(S)\leq k$ then we return $S$ 
 as \bvds\ of $G$.  In the case when  $w(S)> k$ we return that the given instance is a \no{} instance for  {\sc Restricted \bgvd}. Correctness of these steps are guaranteed by Lemma~\ref{lem:bgvd-wfvs}.
 The running time of the algorithm is dominated by the running time of \wfvs\ and thus it is $\OO^{\star}(3.618^k)$. This completes the proof. \qed
\end{proof}

\subsection{\Bgvd}
 
 We are now ready to describe an \FPT{} algorithm for \bgvd, and hence prove Theorem~\ref{thm:bgvd-fpt}. We design the algorithm for the general case with the help of the algorithm for {\sc Restricted \bgvd}.

 \begin{proof}[of Theorem~\ref{thm:bgvd-fpt}]
  Let $O$ be a $D_4$ or $C_4$ present in the input graph $G$. For any potential solution $S$, at least one of the vertices of $O$ must belong to $S$. Therefore, we branch on the choice of these vertices, and for every vertex $v\in O$, we recursively apply the algorithm to solve \bgvd\ instance $(G-\setminus \{v\},k-1)$. If one of these branches returns a solution $X$, then clearly $X\cup \{v\}$ is a \bvds\ of size at most $k$ for 
  $G$. Else, we return that the given instance is a \no{} instance.   
%
  On the other hand, if $G$ is $\{D_4,C_4\}$-free, then we do not make any further recursive calls. Instead, we run the algorithm for {\sc Restricted \bgvd} on $G$ and return the output of the algorithm. Thus, the running time of the algorithm is upper bounded by the following recurrence. 
  
  
  $T(n,k) = \left\{
\begin{array}{ll}
1 & \mbox{if } k=0 \mbox{ or } n=0 \\
3.168^k  & \mbox{if there is no } D_4,C_4 \\
4T(n-1,k-1) + n^{\OO(1)}& \mbox{otherwise}
\end{array}
\right.
  $
 
 Thus, the running time of this algorithm is upper bounded by $\OO^*(4^k)$. \qed
\end{proof}

\section{An Approximation Algorithm for \bgvd}
In this section, we present a simple approximation algorithm $\mathcal{A}_1$ for \bgvd. Given a graph $G$, we give a \bvds\ $S$ of size at most $4 \cdot {\sf OPT}$, where ${\sf OPT}$ is the size of a minimum 
sized \bvds\ for $G$. 

\begin{theorem}
\bgvd\ admits a factor four approximation algorithm. 
\end{theorem}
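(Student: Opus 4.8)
The plan is to design a factor-four approximation algorithm $\mathcal{A}_1$ for \bgvd\ by reducing, as in the exact algorithm, to the \FVS-like structure captured by the auxiliary bipartite graph $\hat{G}$. The key insight I would exploit is Lemma~\ref{lem:bgvd-wfvs}: once the input is $\{D_4,C_4\}$-free, a set $S \subseteq V(G)$ is a \bvds\ if and only if $\hat{G} \setminus S$ is acyclic, i.e.\ $S$ is a feedback vertex set of $\hat{G}$ contained in $V(G)$. Since \FVS\ itself admits a polynomial-time factor-two approximation (the classical Bafna--Berman--Fujito / Becker--Geiger result), I would try to combine a constant-factor removal of the small obstructions with a constant-factor \FVS\ approximation on $\hat{G}$.

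Concretely, the algorithm would proceed in two phases. First I would \emph{destroy all induced $D_4$'s and $C_4$'s}: repeatedly find an induced $D_4$ or $C_4$ in the current graph and add all of its vertices to the partial solution $S$, then delete them from $G$. Each such obstruction has exactly four vertices, and any \bvds\ (in particular the optimum) must contain at least one of them, so this phase adds at most $4 \cdot {\sf OPT}$ vertices to $S$ while leaving a $\{D_4,C_4\}$-free graph $G'$. The second phase handles $G'$: I would build the auxiliary bipartite graph $\hat{G'}$ as in Section~\ref{instance-weighted-fvs} and run a factor-$c$ \FVS\ approximation on it; by Lemma~\ref{lem:bgvd-wfvs} the feedback vertices falling inside $V(G')$ form a \bvds\ of $G'$, and those falling on clique-vertices $V_{\CC}$ can be replaced by an incident original vertex of $V(G')$ without increasing the count (since deleting a vertex of $G'$ kills every clique through it). The total solution is the union of the two phases.

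The main obstacle I expect is \emph{bounding the second phase against ${\sf OPT}$ rather than against the optimum of $G'$}. Removing the obstacles in phase one can only shrink the optimum, so the optimum \bvds\ of $G'$ has size at most ${\sf OPT}$; combined with a factor-two \FVS\ approximation this would naively give $2\,{\sf OPT}$ in phase two and hence $4\,{\sf OPT} + 2\,{\sf OPT}$ overall, which is too weak. To hit the clean factor of four I would instead argue more carefully: I suspect the intended algorithm charges \emph{everything} to a single combinatorial structure rather than summing two independent approximations. A cleaner route is to observe that a $D_4$ or $C_4$ in $G$ corresponds to a short cycle ($C_4$) in $\hat{G}$, so \emph{both} obstruction removal and feedback-vertex selection can be phrased uniformly as hitting cycles in $\hat{G}$, and then invoke a single factor-$2$ \FVS\ approximation on $\hat{G}$ for the \emph{whole} instance, translating clique-vertices back to original vertices as above. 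The delicate point, and the step I would spend the most care on, is verifying that this translation never more than doubles the contribution of the clique side, so that the factor-$2$ guarantee on $\hat{G}$ degrades to at most factor $4$ on $G$; this requires a bound relating the number of maximal cliques a deleted vertex participates in to the cycles it kills, which Lemma~\ref{lem:maximal-cliques} should provide.

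Thus the proof outline is: (i) reduce \bgvd\ to feedback vertex set on the auxiliary graph (Lemmas~\ref{lem:maximal-cliques} and~\ref{lem:bgvd-wfvs}), (ii) apply a known constant-factor \FVS\ approximation, and (iii) perform a weight- or cardinality-preserving transformation of the returned set back into $V(G)$, bounding the blow-up by a factor of two to reach the overall factor of four. I would finish by checking that the construction of $\hat{G}$, the \FVS\ approximation, and the back-translation all run in polynomial time, which is immediate from Lemma~\ref{lem:maximal-cliques}'s bound of $n^2$ on the number of maximal cliques.
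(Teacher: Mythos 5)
Your first phase is, in fact, exactly the paper's algorithm: greedily finding an induced $D_4$ or $C_4$ and deleting all four of its vertices produces a maximal family $\mathcal{S}$ of \emph{pairwise vertex-disjoint} obstructions, with $S_1=\bigcup_{O\in\mathcal{S}}O$ and $G'=G-S_1$ being $\{D_4,C_4\}$-free. The route you then dismissed as ``too weak'' actually gives factor four; what you are missing is the charging argument that splits the optimum between the two phases rather than bounding each phase against all of ${\sf OPT}$. Since the obstructions in $\mathcal{S}$ are pairwise disjoint, $S_{\sf OPT}$ contains a distinct vertex of each, and all of these vertices lie inside $S_1$, so $|S_1|=4|\mathcal{S}|\leq 4|S_{\sf OPT}\cap S_1|$. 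Moreover $S_{\sf OPT}\setminus S_1$ is a \bvds\ of $G'$, so the factor-two feedback vertex set approximation on $\hat{G'}$ returns $S_2$ with $|S_2|\leq 2|S_{\sf OPT}\setminus S_1|$, whence $|S_1|+|S_2|\leq 4|S_{\sf OPT}\cap S_1|+2|S_{\sf OPT}\setminus S_1|\leq 4\,{\sf OPT}$ --- not $6\,{\sf OPT}$.

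The alternative you pivoted to is unsound. For graphs containing $D_4$'s, the equivalence of Lemma~\ref{lem:bgvd-wfvs} fails in the direction you need: a \bvds\ of $G$ need not be a feedback vertex set of $\hat{G}$, and the gap is unbounded. Concretely, take $m$ diamonds $\{u_i,v_i,x,y_i\}$ (missing edge $(x,y_i)$) pairwise intersecting only in the common vertex $x$. Deleting $x$ alone leaves disjoint triangles, so ${\sf OPT}=1$; yet the two maximal cliques $\{u_i,v_i,x\}$ and $\{u_i,v_i,y_i\}$ of the $i$-th diamond create in $\hat{G}$ the four-cycle through $u_i$, $v_i$ and the two corresponding clique-vertices, and these $m$ four-cycles are vertex-disjoint and avoid $x$, so every feedback vertex set of $\hat{G}$ has size at least $m$. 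Hence no constant factor can be transferred back from $\hat{G}$, and in addition the $n^2$ bound of Lemma~\ref{lem:maximal-cliques} requires $D_4$-freeness, so for general $G$ the graph $\hat{G}$ need not even have polynomial size (a graph can have $3^{n/3}$ maximal cliques). Your back-translation step is also false as stated: swapping a clique-vertex $v_C$ for a single incident external vertex $v$ does not preserve acyclicity of the fixed graph $\hat{G'}$, since a cycle passing through $v_C$ via two \emph{other} external vertices of $C$ survives. The paper sidesteps this entirely --- and this is exactly why it invokes \wfvs{} rather than unweighted \FVS{}: it runs the factor-two approximation of Bafna et al.~\cite{2-approx-fvs-bafna} for weighted feedback vertex set on $(\hat{G'},w)$ with weight $1$ on $V(G')$ and $n^4$ on $V_{\CC}$, which forces the returned set inside $V(G')$ at no loss in the approximation factor.
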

\begin{proof}
Let $G$ be the given instance of \bgvd\ and ${\sf OPT}$ be the size of a minimum  sized \bvds\ for $G$ and $S_{\sf OPT}$ be a minimum sized \bvds\ for $G$. 

Let $\cal S$ be a maximal family of $D_4$ and $C_4$ such that any two members of $\cal S$ are pairwise disjoint. One can easily construct such a family  $\cal S$ greedily in polynomial time.  Let $S_1$ be the set of vertices contained in any obstruction in $\cal S$. That is, $S_1=\bigcup_{O\in \cal S} O$.  
Since any \bvds\ must contain a vertex from each obstruction in  $\cal S$ and any two members of $\cal S$ are pairwise disjoint, we have that $|S_{\sf OPT} \cap S_1|  \geq |{\cal S}|$.

Let $G'=G-S_1$. Observe that $G'$ does not contain either $D_4$ or $C_4$ as an induced subgraph. 
Now we construct $\hat{G'}$, as described in Section~\ref{instance-weighted-fvs}. We apply the factor two approximation algorithm $\mathcal{A}$ given in \cite{2-approx-fvs-bafna} on the instance $(\hat{G'},w)$. This returns an \fvssol{} $S_2$ of $\hat{G'}$ such that $w(S_2)$ is at most twice the weight of a 
minimum weight feedback vertex set.  By our construction $S_2\subseteq V(G')$. Lemma~\ref{lem:bgvd-wfvs} implies 
that $S_2$ is a factor two approximation for \bgvd\ on $G'$. We return the set $S=S_1 \cup S_2$ as our solution. Since $S_{\sf OPT}-S_1$ is also an optimum solution for $G'$ we have that $|S_2|\leq  2 |S_{\sf OPT}-S_1|$.

It is evident that $S$ is \bvds\ of $G$. To conclude the proof of the theorem we will show that 
$|S|\leq 4 {\sf OPT}$. Towards this observe that 
\begin{eqnarray*}
|S|=|S_1|+|S_2| & \leq & 4|{\cal S}| + 2 |S_{\sf OPT}-S_1|\\
  & \leq & 4 |S_{\sf OPT} \cap S_1|  + 2 |S_{\sf OPT}-S_1| \\
   & \leq & 4 |S_{\sf OPT}|=4{\sf OPT}. \\
\end{eqnarray*}
This completes the proof. 
\qed 
\end{proof}

\section{Improved Kernel for Block Graph Vertex Deletion}
In this section, we give a kernel of $\OO(k^4)$ vertices for  \bgvd. Let $(G,k)$ be an instance of the \bgvd~problem. We start with some of the known reduction rules from~\cite{bgvd-kim}. 

\begin{redrulebgvd}\label{redn:bock-component-rule} 
 If $G$ has a component $H$, where $H$ is a block graph, then remove $H$ from $G$.
\end{redrulebgvd}

\begin{redrulebgvd}\label{redn:cut-vertex-rule}
If there is a vertex $v\in V(G)$, such that $G\setminus \{v\}$ has a component $H$, where $G[\{v\}\cup V(H)]$ is a connected block graph then, remove $H$ from $G$.
\end{redrulebgvd}

\begin{redrulebgvd}\label{redn:twin-rule}
Let $S\subseteq V(G)$, where each $u,v\in S$ are true-twins in $G$. If $\lvert S \rvert > k+1$, then remove all the vertices from $S$ except $k+1$ vertices.
\end{redrulebgvd}

\begin{redrulebgvd}\label{redn:induced-path}
Let $t_1,t_2,t_3,t_4$ be an induced path in $G$. For $i\in \{1,2,3\}$, let $S_i\subseteq V(G)\setminus \{t_1,t_2,t_3,t_4\}$ be a clique in $G$ such that the following holds.
\begin{itemize}
\item For $i\in \{1,2,3\}$, $v\in S_i$, $N_G(v)\setminus S_i=\{t_i,t_{i+1}\}$, and
\item For $i\in \{2,3\}$, $N_G(t_i)=\{t_{i-1},t_{i+1}\} \cup S_{i-1}\cup S_i$.
\end{itemize}
Remove $S_2$ from $G$ and contract the edge $(t_2,t_3)$.
\end{redrulebgvd}

\begin{proposition}[Proposition 3.1 \cite{bgvd-kim}]\label{find-Sv}
Let $G$ be a graph and $k$ be a positive integer. For a vertex $v\in V(G)$, in $\OO(kn^3)$ time, we can find one of the following.
\begin{enumerate}[i.]
\item $k+1$ pairwise vertex disjoint obstructions,
\item $k+1$ obstructions whose pairwise intersection is exactly $v$, 
\item $S'_v\subseteq V(G)$, such that $\lvert S'_v \rvert \leq 7k$ and $G\setminus S'_v$ has no block graph obstruction containing $v$.
\end{enumerate}
\end{proposition}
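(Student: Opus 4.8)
The plan is to establish Proposition~\ref{find-Sv} through an iterative, greedy extraction of block-graph obstructions that either witnesses many disjoint obstructions or isolates a small hitting set around $v$. Recall that the obstructions for being a block graph are precisely induced cycles of length at least four and the induced $D_4 = K_4 - e$; all of these are ``small'' in the sense of being detectable locally, and a $D_4$ or a long induced cycle through $v$ can be found by examining $G$ in polynomial time. The overall strategy mirrors standard sunflower-type arguments for deletion problems: build a maximal packing and then use its maximality to conclude that a small set covers all remaining obstructions.

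\textbf{Step one:} I would show how to detect, in polynomial time, a single induced obstruction containing a fixed vertex $v$, or certify that none exists. Short obstructions ($C_4$ and $D_4$) are found by brute force over the bounded-size neighborhood structure; for long induced cycles through $v$, one runs a shortest-induced-cycle computation restricted to cycles passing through $v$, which is polynomial. \textbf{Step two:} I would greedily build a family $\mathcal{O}$ of obstructions, each containing $v$, that are pairwise disjoint \emph{except} for the common vertex $v$. At each stage, having collected some obstructions $O_1,\dots,O_j$ all meeting exactly in $v$, let $Z_j = \bigcup_{i\le j} (V(O_i)\setminus\{v\})$ and search for an obstruction through $v$ in $G \setminus Z_j$. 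If we ever reach $j = k+1$ such obstructions, we output case (ii). \textbf{Step three:} Running in parallel, I would also maintain a packing of \emph{fully} vertex-disjoint obstructions (not necessarily through $v$); if this reaches size $k+1$ we output case (i). The bound $7k$ in case (iii) comes from the sizes of the obstructions accumulated before the process halts: since each $C_4$ and $D_4$ has four vertices and the relevant induced cycles can be taken short, the union of the vertices of at most $k$ obstructions collected around $v$ has size $\OO(k)$, and a careful accounting (including the branch vertices where the search fails) yields the constant $7$.

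\textbf{The key conceptual point} is the termination argument: when neither case (i) nor case (ii) triggers, the greedy search in Step two fails, meaning $G \setminus Z_j$ has no obstruction through $v$ for some $j \le k$. Setting $S'_v = Z_j$ (together with any vertices needed to also kill the short obstructions through $v$) gives a set of size $\le 7k$ whose removal destroys every obstruction containing $v$, which is exactly case (iii). \textbf{The main obstacle} I anticipate is the interplay between long induced cycles and the requirement that the obstructions in case (ii) intersect in \emph{exactly} $v$: when the greedy search returns a long induced cycle, one must argue that it can be shortened or cleaned so that it shares only $v$ with the previously selected obstructions, and that this cleaning does not itself create a different (smaller) obstruction that should have been caught earlier. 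Handling this correctly — ensuring the extracted obstructions are genuinely chordless and properly disjoint, while keeping the cumulative vertex count within $7k$ — is the delicate part of the proof; the running-time bound $\OO(kn^3)$ then follows from at most $\OO(k)$ iterations, each performing a polynomial-time obstruction search over $\OO(n)$ choices.
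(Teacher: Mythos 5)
There is a genuine gap at the heart of your Step two/three accounting: the bound $\lvert S'_v\rvert \leq 7k$. Your plan is to take a maximal family of obstructions through $v$, pairwise meeting only at $v$, and set $S'_v = Z_j$; maximality does make $Z_j$ a correct hitting set for obstructions through $v$, but obstructions here include induced cycles of \emph{unbounded} length, so $\lvert Z_j\rvert$ need not be $\OO(k)$ at all. Your parenthetical fix --- that ``the relevant induced cycles can be taken short'' --- is false: a chordless cycle admits no shortcut, and an obstruction through $v$ may be forced to be long. For instance, if $G$ is a single induced cycle $C_n$ through $v$, the only obstruction containing $v$ has $n-1$ vertices besides $v$, so your greedy process outputs a set of size $n-1$ even though a single vertex suffices. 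A purely greedy maximality argument cannot yield a size-$\OO(k)$ hitting set when the objects being packed have unbounded size; some min--max tool is indispensable. In the source of this proposition (Kim and Kwon, cited as Proposition~3.1 --- note the present paper does not reprove it but imports it), the argument splits the obstructions through $v$ into bounded-size ones (where your greedy packing idea does work, each packed obstruction contributing $\OO(1)$ vertices to the hitting set) and long induced cycles, which are handled via Gallai's theorem on $\mathcal{A}$-paths applied with $\mathcal{A}=N(v)$ in a suitably modified graph: one either finds $k+1$ vertex-disjoint $N(v)$-paths or a set of at most $2k$ vertices meeting all of them, and it is the sum of these separate bounds that produces the constant~$7$.

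A secondary, related weakness is your treatment of outcome (i) and of the ``cleaning'' step you flag as the main obstacle. When a long cycle through $v$ obtained from an $N(v)$-path is not induced, shortening it along chords can yield an induced obstruction that \emph{avoids} $v$ entirely; this is precisely how pairwise vertex-disjoint obstructions (outcome (i)) arise in the actual proof --- from the disjoint paths whose cleanings escape $v$ --- rather than from an independent ``parallel'' packing as you propose, which you never specify how to grow or why its failure is compatible with the other two outcomes. Without the Gallai-type dichotomy, your scheme has no mechanism forcing one of the three outcomes with the stated size bound, so the proof as written does not establish the proposition.
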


\begin{redrulebgvd}\label{redn:distinct-obstruction}
Let $v\in V(G)$ and $G'=G \setminus \{v\}$. We remove the edges between $N_G(v)$ from $G'$, i.e. $E(G')=E(G')\setminus \{(u,w)\lvert u,w \in N_G(v)\}$. In $G'$ if there are at least $2k+1$ vertex-disjoint $N_G(v)$-paths in $G'$ then we do one of the following.
\begin{itemize} 
\item If $G$ contains $k+1$ vertex disjoint obstructions, then return that the graph is a no-instance. 
\item Otherwise, delete $v$ from $G$ and decrease $k$ by $1$.
\end{itemize}
\end{redrulebgvd}

The Reduction rules \bgvd.\ref{redn:bock-component-rule} to \bgvd.\ref{redn:distinct-obstruction} are safe and can be applied in polynomial time~\cite{bgvd-kim}. For sake of clarity we denote the reduced instance at each step by $(G,k)$. We always apply the lowest numbered Reduction Rule, in the order that they have been stated, that is applicable at any point of time. For the rest of the discussion, we assume that Reduction rules \bgvd.\ref{redn:bock-component-rule} to \bgvd.\ref{redn:distinct-obstruction} are not applicable.

For a vertex $v\in V(G)$, by Proposition~\ref{find-Sv}, we may find $k+1$ pairwise vertex-disjoint obstructions, and we can safely conclude that the graph is a \no{} instance. Secondly, if we find $k+1$ obstructions whose pairwise intersection is exactly $v$ then the Reduction rule \bgvd.\ref{redn:distinct-obstruction} will be applicable. Thus, we assume that for each vertex $v\in V(G)$, the third condition of Proposition~\ref{find-Sv} holds. In other words, we have a set $S'_v$ of size at most $7k$, such that $G\setminus S'_v$ does not contain any obstruction passing through $v$.  In fact, for each $v\in V(G)$, we can find a \bvds\ $S_v \subseteq V(G) \setminus \{v\}$ of bounded size. 

\begin{observation}\label{obs:approx_soln_S_v}
For every vertex $v\in V(G)$, we can find in $n^{\OO(1)}$ time, a set $S_v \subseteq V(G) \setminus \{v\}$ such that $\lvert S_v \rvert \leq 11k$ and $G\setminus S_v$ is a block graph.
\end{observation}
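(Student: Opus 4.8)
The plan is to combine the local obstruction-hitting set supplied by Proposition~\ref{find-Sv} with the factor-four approximation for \bgvd{} developed in the previous section, while taking care that the second set we add never contains $v$. Since we are working in the fully reduced instance, the first two outcomes of Proposition~\ref{find-Sv} are ruled out for every $v$: if we obtained $k+1$ pairwise vertex-disjoint obstructions we would already have declared a \no{} instance, and if we obtained $k+1$ obstructions pairwise intersecting exactly in $v$ then Reduction rule \bgvd.\ref{redn:distinct-obstruction} would still be applicable. Hence for each $v$ the third outcome holds, yielding in $\OO(kn^3)$ time a set $S'_v$ with $\lvert S'_v\rvert\le 7k$ such that $G\setminus S'_v$ has no induced $C_{\ge 4}$ and no induced $D_4$ passing through $v$.

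First I would observe that every obstruction remaining in $G\setminus S'_v$ must avoid $v$, and therefore already lives in the induced subgraph $H:=G\setminus(S'_v\cup\{v\})$. I would then run the factor-four approximation algorithm for \bgvd{} on $H$ to obtain a \bvds{} $Y\subseteq V(H)$ of $H$, and set $S_v:=S'_v\cup Y$. Because $Y\subseteq V(G)\setminus(S'_v\cup\{v\})$ by construction, we have $v\notin S_v$, exactly as required.

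The correctness step is to verify that $G\setminus S_v$ is a \blockgraph{}. Here I would use the forbidden-induced-subgraph characterization: $G\setminus S_v$ is a \blockgraph{} if and only if it contains no induced cycle of length at least four and no induced $D_4$. Any such obstruction $O$ of $G\setminus S_v$ is also an induced obstruction of the supergraph $G\setminus S'_v$; since the latter has no obstruction through $v$, the obstruction $O$ cannot contain $v$, so $O$ is in fact an obstruction of $(G\setminus S_v)\setminus\{v\}=H\setminus Y$. But $H\setminus Y$ is a \blockgraph{} by the choice of $Y$, so it has no obstruction at all, a contradiction. Hence $G\setminus S_v$ has no obstruction and is a \blockgraph{}.

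Finally I would bound the size. We have $\lvert S'_v\rvert\le 7k$, and the factor-four approximation guarantees $\lvert Y\rvert\le 4\cdot{\sf OPT}(H)$. Since $H$ is an induced subgraph of $G$ and the class of \blockgraph{}s is hereditary, ${\sf OPT}(H)\le{\sf OPT}(G)\le k$ on a \yes{} instance, giving $\lvert Y\rvert\le 4k$ and thus $\lvert S_v\rvert\le 11k$; all steps run in $n^{\OO(1)}$ time. The step I expect to be most delicate is precisely this size bound: one must either invoke the standing \yes{}-instance assumption (so that ${\sf OPT}(G)\le k$) or, equivalently, treat the event that the approximation returns more than $4k$ vertices as a certificate that ${\sf OPT}(H)>k$, hence that $(G,k)$ is a \no{} instance. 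The other point requiring care is the re-insertion of $v$: the whole argument hinges on $G\setminus S'_v$ being free of obstructions through $v$, which is exactly what Proposition~\ref{find-Sv} provides.
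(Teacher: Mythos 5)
Your proposal is correct and takes essentially the same route as the paper: both combine the set $S'_v$ from Proposition~\ref{find-Sv} with the factor-four approximation for \bgvd{} to get the bound $7k+4k=11k$, with the same hereditary/obstruction-avoiding-$v$ correctness argument (and the same implicit convention that $v\notin S'_v$). The only cosmetic difference is that the paper computes a single approximate solution $A$ on $G$ itself and sets $S_v=A$ if $v\notin A$ and $S_v=(A\setminus\{v\})\cup S'_v$ otherwise, whereas you run the approximation per vertex on $G\setminus(S'_v\cup\{v\})$; this changes nothing of substance.
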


\begin{proof}
 Using the approximation algorithm for \bgvd~we compute an approximate solution $A$ of size at most $4k$. If $v\notin A$, then $S_v=A$. Otherwise, $S_v=(A \setminus \{v\})\cup S'_v$. Note that for each $v\in V(G)$, $\lvert S_v \rvert \leq 11k$ and $G\setminus S_v$ is a block graph.
\qed
\end{proof}

For a vertex $v \in V(G)$, \compdeg{} of $v$ is the number of connected components in $\CC$, where $\CC$ is the set of connected components in $G\setminus (S_v \cup \{v\})$ that have a vertex adjacent to $v$. We give a reduction rule that bounds the \compdeg{} of a vertex $v \in V(G)$, using \emph{Expansion Lemma}~\cite{saurabh-book}.

A {\em $q$-star}, $q \geq 1$, is a graph with $q + 1$ vertices, one vertex of degree $q$ and all other vertices of degree $1$. Let $\BB$ be a bipartite graph
with the vertex bipartition as $(X,Y)$. A set of edges $M \subseteq E(\BB)$ is called a {\em $q$-expansion} of $X$ into $Y$ if (i) every vertex of $X$ is incident with exactly $q$ edges of $M$ and (ii) $M$ saturates exactly $q|X|$ vertices in $Y$, i.e. edges in $M$ are adjacent to exactly $q|X|$ vertices in $Y$.

\begin{lemma}[Expansion Lemma~\cite{saurabh-book}]\label{qexpansion}
Let $q$ be a positive integer and $\BB$ be a bipartite graph with vertex bipartition $(X,Y)$ such that $\lvert Y \rvert \geq q \lvert X \rvert$ and there are no isolated vertices in $Y$. Then, there exist nonempty vertex sets $X' \subseteq X$ and $Y' \subseteq Y$ such that:
\begin{enumerate}
\item $X'$ has a $q$-expansion into $Y'$ and
\item no vertex in $Y'$ has a neighbour outside $X'$, i.e. $N(Y') \subseteq X'$.
\end{enumerate}
Furthermore, the sets $X'$ and $Y'$ can be found in polynomial time.
\end{lemma}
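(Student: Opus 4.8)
The plan is to prove the statement in two stages: first settle the base case $q=1$, and then reduce the general $q$-expansion to it by a vertex-multiplication trick. Throughout I may assume $X\neq\emptyset$, since otherwise the hypothesis that $Y$ has no isolated vertices (together with $N(y)\subseteq X$ for every $y$) forces $Y=\emptyset$ and there is nothing to prove.

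For the case $q=1$ I would induct on $|X|$, using Hall's theorem together with a deficient-set removal step. If $X$ satisfies Hall's condition in $\BB$, then there is a matching $M$ saturating $X$; I set $X'=X$ and let $Y'$ be the set of vertices matched by $M$, so that $M$ is a $1$-expansion of $X'$ into $Y'$ and condition~2 holds trivially because $\BB$ is bipartite with the whole $Y$-side available. If instead Hall's condition fails, there is a nonempty $W\subseteq X$ with $|N(W)|<|W|$. A first useful observation is that such a $W$ must be a proper subset of $X$: since no vertex of $Y$ is isolated and $N(y)\subseteq X$ for every $y$, we get $N(X)=Y$ and hence $|N(X)|=|Y|\ge|X|$, so the deficiency cannot occur at $W=X$. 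I would then delete $W$ and $N(W)$ and recurse on the bipartite graph induced on $(X\setminus W,\,Y\setminus N(W))$. This smaller instance still satisfies the hypotheses: every surviving $y\in Y\setminus N(W)$ has no neighbor in $W$, so it keeps all of its original neighbors and remains non-isolated, and $|Y\setminus N(W)|=|Y|-|N(W)|>|Y|-|W|\ge|X|-|W|=|X\setminus W|$. The induction then returns a nonempty $X'\subseteq X\setminus W$ and $Y'\subseteq Y\setminus N(W)$ with the desired saturating matching, and I pass these up unchanged.

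The crucial point, and what I expect to be the main obstacle, is transferring condition~2 ($N(Y')\subseteq X'$) back to the original graph $\BB$. Here the bookkeeping is delicate but clean: a vertex $y\in Y$ is deleted only when it lies in some removed set $N(W)$, i.e. only when it has a neighbor among the deleted $X$-vertices. Consequently any surviving $y\in Y'$ has no neighbor among the vertices removed from $X$, so its neighborhood in $\BB$ coincides with its neighborhood in the final induced subgraph, which by the inductive guarantee lies inside $X'$. This is precisely the place where the hypothesis ``no isolated vertices in $Y$'' and the decision to delete $N(W)$ (rather than only $W$) are both essential; getting this invariant right is the heart of the argument.

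Finally, to obtain general $q$ I would build an auxiliary bipartite graph $\BB'$ by replacing each $x\in X$ with $q$ identical copies $x^1,\dots,x^q$, each inheriting the neighborhood of $x$, and keeping $Y$ unchanged; call the full set of copies $X_q$. Then $|Y|\ge q|X|=|X_q|$ and $Y$ still has no isolated vertices, so the $q=1$ result yields a nonempty copy-set $X_q'\subseteq X_q$ and a set $Y'$ with a matching saturating $X_q'$ into $Y'$ and $N_{\BB'}(Y')\subseteq X_q'$. Because all copies of a vertex $x$ share the same neighborhood, $X_q'$ is automatically \emph{copy-closed}: if a copy $x^i\in X_q'$ is matched to some $y\in Y'$, then $y$ is adjacent to every copy $x^j$, forcing $x^j\in N(Y')\subseteq X_q'$ for all $j$. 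Setting $X'=\{x: x^i\in X_q' \text{ for some } i\}$ then gives $|Y'|=|X_q'|=q|X'|$ and a genuine $q$-expansion of $X'$ into $Y'$ with $N_\BB(Y')\subseteq X'$. Polynomial-time computability follows because each stage needs only a bipartite maximum-matching computation to test Hall's condition and extract a witness deficient set (via the König/alternating-path structure of a maximum matching), the recursion strictly shrinks $X$ and so runs for at most $|X|$ rounds, and the blow-up $\BB'$ has size $q|X|+|Y|$, which is polynomial.
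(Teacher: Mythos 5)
The paper offers no proof of this statement at all: the Expansion Lemma is imported verbatim, with citation, from \cite{saurabh-book}, so there is no in-paper argument to compare against. Your proof is correct and self-contained, and it follows the standard textbook route for this lemma: reduce the $q$-expansion to the case $q=1$ by replacing each $x\in X$ with $q$ twin copies (your copy-closure observation, that $N(Y')\subseteq X_q'$ forces all copies of a matched vertex into $X_q'$, is exactly the point that makes the projected edge set a genuine $q$-expansion), and handle $q=1$ by Hall's theorem with iterated removal of a deficient set $W$ together with $N(W)$. Your key invariant --- that a $Y$-vertex is only ever deleted when it has a neighbour among deleted $X$-vertices, so every survivor keeps its full original neighbourhood and condition~2 lifts back to $\BB$ --- is handled correctly, as are the nonemptiness of the recursive instance (the deficient set is proper because $N(X)=Y$ and $|Y|\ge|X|$) and the polynomial running time via maximum matching and alternating-path extraction of a deficient set. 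The only cosmetic remark is the degenerate case $X=\emptyset$, where the lemma as literally stated (demanding nonempty $X'$) is vacuously unsatisfiable; you correctly note the hypotheses then force $Y=\emptyset$, and the standard statement implicitly assumes a nonempty graph, so nothing is lost.
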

 
For a vertex $v \in V(G)$, let $\CC_v$ be the set of connected components in $G\setminus (S_v \cup \{v\})$ that have a vertex adjacent to $v$. Consider a connected component $C \in \CC_v$, such that no vertex $u \in V(C)$ is adjacent to any vertex in $S_v$. But then, $G\setminus \{v\}$ has a component which is a \blockgraph{} (namely, the connected component $C$) therefore, Reduction rule \bgvd.\ref{redn:cut-vertex-rule} is applicable, a contradiction to the assumption that none of the previous Reduction rules are applicable. Therefore, for each $C\in \CC$ there is a vertex $u \in V(C)$ and $s \in S_v$, such that $(u,s)\in E(G)$. Let $\DD$ be a vertex set, with a vertex $d$ corresponding to each component $D \in \CC$. Consider the bipartite graph $\BB_v$ with the vertex set bipartitioned as $(\DD,S_v)$. There is an edge between $d\in \DD$ and $s \in S_v$ if and only if the component $D$ corresponding to which the vertex $d$ was added to $\DD$ has a vertex $u_d$ such that $(u_d,s)\in E(G)$.

\begin{redrulebgvd}\label{redn:reduce-no-components}
For a vertex $v \in V(G)$ if $|\CC_v| > 33k$, then we do the following.
\begin{itemize}
\item Let $\DD' \subseteq \DD$ and $S \subseteq S_v$ be the sets obtained after applying Lemma~\ref{qexpansion} with $q=3$, $X=S_v$ and $Y=\DD$;
\item For each $d\in \DD'$, let the component corresponding to $d$ be $D \in \CC_v$. Delete all the edges between $(u,v)$, where $u\in V(D)$;
\item For each $s\in S$, add two vertex disjoint paths between $v$ and $s$.
\end{itemize}
\end{redrulebgvd}

Safeness of the Reduction rule \bgvd.\ref{redn:reduce-no-components} follows from the safeness of Reduction rule 6 in~\cite{bgvd-kim}.

\subsection{Bounding the number of blocks in $G\setminus A$}
Using the approximation algorithm for \bgvd~we compute an approximate solution $A$ of size at most $4k$.  
Of course if $|A|>4k$ then we can immediately return that $G$ is a \no{} instance. 
First, we bound the number of leaf blocks in $G\setminus A$, when none of the Reduction rules apply. Note that $G\setminus A$ is a \blockgraph{}, since $A$ is an approximate solution to  \bgvd. For $v\in A$, let $S'_v$ be the set obtained from Proposition~\ref{find-Sv} and $S_v$ be the set obtained from Observation~\ref{obs:approx_soln_S_v}. Let $\CC_v$ be the set of connected components in $G\setminus (S_v\cup \{v\})$ which have a vertex adjacent to $v$. All the connected components in $G\setminus A$, which do not have a vertex that is adjacent to $v$, must be adjacent to some $v'\in A$. Otherwise, Reduction rule \bgvd.\ref{redn:bock-component-rule} will be applicable.  Also, all the leaf blocks in $G\setminus A$ must have an internal vertex that is adjacent to some vertex in $A$, since the Reduction rules \bgvd.\ref{redn:bock-component-rule} and \bgvd.\ref{redn:cut-vertex-rule} are not applicable. The number of leaf blocks, in $G \setminus A$, whose set of internal vertices have a non-empty intersection with $S'_v$, is at most $7k$. Therefore, it is enough to count, for each $v \in A$, the number of leaf blocks in $\CC_v$. In the Observation~\ref{leaves-equal}, we give a bound on the number of leaf blocks in $G \setminus A$, not containing any vertex from $S'_v$.

\begin{observation}\label{leaves-equal}
For $v \in A$, the number of leaf blocks in $G\setminus A$ not containing any vertex from $S'_v$ is at most the number of leaf blocks in $G\setminus (S_v \cup \{v\})$. 
\end{observation}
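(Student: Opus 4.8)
The plan is to exhibit an injection from the leaf blocks of $G\setminus A$ that avoid $S'_v$ into the leaf blocks of $G\setminus(S_v\cup\{v\})$, and in fact I expect the identity map on blocks to do the job. First I would rewrite the target graph: since $v\in A$, $v\notin S'_v$, and $S_v=(A\setminus\{v\})\cup S'_v$, we have $S_v\cup\{v\}=A\cup S'_v$, so $G\setminus(S_v\cup\{v\})=(G\setminus A)\setminus S'_v$. Thus the claim is purely about deleting $S'_v$ from the \blockgraph{} $G\setminus A$, and I only have to understand how leaf blocks behave under vertex deletion.

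Next I would fix a leaf block $L$ of $G\setminus A$ with $V(L)\cap S'_v=\emptyset$ and argue that $L$ survives as a leaf block of $(G\setminus A)\setminus S'_v$. Because the class of block graphs is hereditary (it is characterised by forbidden induced subgraphs: no induced cycle of length at least four and no induced $K_4-e$), $(G\setminus A)\setminus S'_v$ is again a block graph. Since no vertex of $L$ lies in $S'_v$, all of $V(L)$ survives and $L$ is still a clique; moreover it is still a maximal $2$-connected subgraph, i.e. a block, since any strictly larger $2$-connected subgraph of the smaller graph would already have been present in $G\setminus A$, contradicting the maximality of $L$ there.

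The heart of the argument is showing that $L$ still has at most one cut vertex, and this is where I expect the only real subtlety. I would invoke the standard fact that a non-cut vertex of a block graph belongs to a unique block and hence has all its neighbours inside that block. Thus the internal (non-cut) vertices of the leaf block $L$ have their entire neighbourhood contained in $L$; deleting vertices of $S'_v$, which are disjoint from $V(L)$, leaves these neighbourhoods unchanged, so these vertices stay non-cut and cannot be promoted to cut vertices. Consequently the only vertex of $L$ that can be a cut vertex of $(G\setminus A)\setminus S'_v$ is the unique cut vertex $c_L$ of $L$; and if $c_L$ loses its separating role (because its other side was deleted), then $L$ simply becomes its own connected component with no cut vertex. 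Either way $L$ has at most one cut vertex, so $L$ is a leaf block of $(G\setminus A)\setminus S'_v$.

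Finally I would observe that deletion adds no edges, so two distinct leaf blocks of $G\setminus A$ can never merge into a single block of $(G\setminus A)\setminus S'_v$; hence $L\mapsto L$ is a well-defined injection from the leaf blocks of $G\setminus A$ avoiding $S'_v$ into the leaf blocks of $G\setminus(S_v\cup\{v\})$, which yields the stated inequality. The main obstacle is the cut-vertex bookkeeping in the previous paragraph — making precise that removing vertices outside $L$ cannot create a new cut vertex inside $L$ — and the ``all neighbours inside the block'' property for internal vertices of a leaf block is exactly what closes this gap.
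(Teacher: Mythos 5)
Your proof is correct, and it takes a genuinely different---and in fact more careful---route than the paper's. Both arguments share the first step, which you make explicit: since $v\in A$ and $S_v=(A\setminus\{v\})\cup S'_v$, we have $S_v\cup\{v\}=A\cup S'_v$, so the comparison is between $G\setminus A$ and $(G\setminus A)\setminus S'_v$. From there the paper deletes the vertices of $S'_v$ one at a time and asserts a global monotonicity claim: deleting a vertex from the block graph can only increase the number of leaf blocks (for a cut vertex, because the number of components grows; for a non-cut vertex, because ``the number of cut vertices can only increase''). Taken literally this claim is false: in the path $a$--$b$--$c$, a block graph with two leaf blocks $\{a,b\}$ and $\{b,c\}$, deleting the non-cut vertex $c$ leaves a single leaf block, and the number of cut vertices drops from one to zero. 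The observation is nonetheless true precisely because the left-hand count is restricted to leaf blocks avoiding $S'_v$, and that is exactly what your injection argument exploits: each such leaf block $L$ is mapped to itself, it remains a maximal $2$-connected subgraph because vertex deletion creates no new $2$-connected subgraphs, its internal vertices cannot become cut vertices, and at worst its unique cut vertex $c_L$ ceases to separate, which still leaves $L$ a leaf block under the paper's definition (at most one cut vertex); injectivity is immediate since blocks cannot merge under deletion. One small point of precision: an internal vertex stays non-cut not merely because its neighbourhood is unchanged (that inference alone is invalid in general graphs), but because its entire neighbourhood lies inside the intact clique $L$, so any path through it can be rerouted within $L$---a fact you do invoke at the end, and which is the correct justification. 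In short, your local survival argument is the sound version of what the paper's sketchier vertex-by-vertex monotonicity argument gestures at, at the cost of a little more bookkeeping about blocks and cut vertices.
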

\begin{proof}
Note that for $v\in A$, $S_v=(A\setminus \{v\}) \cup S'_v$. By deleting a vertex $u\in S'_v$ from $G\setminus A$ one of the following can happen. If $u$ was a cut vertex in $G \setminus A$, then we increase the number of components after deleting $u$ from $G\setminus A$. By increasing the number of components, the number of leaves cannot decrease. If $u$ was not a cut vertex in $G\setminus A$ then by deleting $u$ the number of cut vertices can only increase. Therefore, the number of leaf blocks after deletion of $u$ from $G\setminus A$ can only increase. Hence, the claim follows.
\qed
\end{proof}

Therefore, for each $v\in A$ we count those leaf blocks in $\CC_v$ which do not contain any vertex from 
$S'_v$. 

\begin{lemma}\label{one-component-neighbour}
Consider a vertex $v\in V(G)$ and its corresponding set $S_v$. Let $\CC$ be the set of connected components in $G\setminus (S_v \cup \{v\})$. For each $C\in \CC$, there is a block $\tilde B$ in $C$, such that $N_C(v)\subseteq V(\tilde B)$. 
\end{lemma}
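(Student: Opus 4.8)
The plan is to reason inside the \blockgraph\ $G\setminus S_v$ and to exploit that deleting $v$ leaves the component $C$ connected. Since $S_v$ is a \bvds\ of $G$ by Observation~\ref{obs:approx_soln_S_v}, the graph $G\setminus S_v$ is a \blockgraph. Because $v\notin S_v$ and $V(C)\subseteq V(G)\setminus(S_v\cup\{v\})$, the graph $H:=G[V(C)\cup\{v\}]$ is an induced subgraph of $G\setminus S_v$. As block graphs are characterised by forbidden induced subgraphs (no induced cycle of length at least four and no induced $K_4-e$), they are hereditary, so $H$ is again a \blockgraph.

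If $v$ has no neighbour in $C$ then $N_C(v)=\emptyset$ and any block of $C$ works, so assume $N_C(v)\neq\emptyset$; then $H$ is connected, since $C$ is connected and $v$ is joined to it. The crucial step is to observe that $H\setminus\{v\}=C$ is connected by hypothesis, so $v$ is not a cut vertex of $H$. In a block graph, a vertex is a cut vertex precisely when it is contained in at least two blocks; hence $v$ lies in exactly one block $B$ of $H$. In particular, every edge of $H$ incident to $v$ lies inside $B$, which gives $N_C(v)=N_H(v)\subseteq V(B)$.

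Finally I would transfer this block to $C$. Since $H$ is a \blockgraph, $B$ is a clique, so $N_C(v)$ is a clique of $C$. Every clique of a graph is contained in a single block, so there is a block $\tilde B$ of $C$ with $N_C(v)\subseteq V(\tilde B)$, which is exactly the required conclusion.

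I expect the main obstacle to be the identification that $v$ sits in a single block of $H$: this rests on the equivalence ``$v$ is a cut vertex $\iff$ $v$ belongs to at least two blocks'' in a \blockgraph, combined with the assumption that $C$ remains connected after removing $v$. Everything afterwards is routine bookkeeping, using only that induced subgraphs of block graphs are block graphs and that a clique lies within a single block of its host graph.
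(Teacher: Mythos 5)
Your proof is correct, and it takes a genuinely different route from the paper's. The paper argues by contradiction: assuming $v$ has neighbours $t$ and $t'$ lying in distinct blocks $B$ and $B'$ of $C$, it runs an induction on the length of a shortest $t$--$t'$ path, in each case exhibiting an induced $C_4$, an induced $D_4$, or a longer induced cycle through $v$, contradicting that $C\cup\{v\}$ is a \blockgraph{} (with a separate case for $(t,t')\in E(G)$). You instead work structurally with the block--cut decomposition of $H=G[V(C)\cup\{v\}]$: since $H\setminus\{v\}=C$ is connected, $v$ is not a cut vertex of $H$, hence lies in a unique block $B$ of $H$, giving $N_C(v)\subseteq V(B)$; the block-graph hypothesis enters only to make $B$ a clique, whence $N_C(v)$ is a clique of $C$ and therefore lies inside a single block $\tilde B$ of $C$. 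Every ingredient you invoke is standard and correctly applied (heredity of block graphs via the forbidden-induced-subgraph characterisation, ``cut vertex if and only if contained in at least two blocks'' in a connected graph, and ``a clique lies within one block''), and you rightly identify the clique step as the crux: the transfer from a block of $H$ to a block of $C$ genuinely fails for general graphs --- take $H=C_4$ with $v$ on the cycle, where $N_C(v)$ consists of the two endpoints of the path $C=H\setminus\{v\}$ and is split between two blocks --- so your use of block-graph-ness is exactly where it is needed. What your approach buys is brevity and transparency, replacing the paper's delicate obstruction-hunting induction by two block--cut-tree facts; what the paper's explicit construction buys is that it only uses the weaker property that no obstruction of $C\cup\{v\}$ passes through $v$, but since $S_v$ is defined so that $G\setminus S_v$ is a \blockgraph{} (Observation~\ref{obs:approx_soln_S_v}), that extra generality is not needed here. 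The degenerate cases ($N_C(v)=\emptyset$, or $\lvert N_C(v)\rvert=1$ under the usual convention that every vertex of $C$ belongs to some block) are handled or trivially fine, as you note.
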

\begin{proof}
Let $\CC$ be the set of connected components of $G\setminus (S_v \cup \{v\})$, $v\in V(G)$. By definition of $S_v$, for each $C\in \CC$, $C \cup \{v\}$ is a \blockgraph. 

If for some $C\in \CC$, $N_C(v)=\emptyset$, then the condition is trivially satisfied for that connected component $C$. Let $C\in \CC$ be a connected component such that $N_C(v)\neq \emptyset$. Let $t$ be a vertex in $N_B(v)$, where $B$ is a block in $C$. Let $B'$ be a block in $C$, where $B'\neq B$ and $B'$ has a vertex $t' \in V(B') \setminus V(B)$ that is adjacent to $v$. Note that $B,B'$ are in the same connected component $C$. Let $P$ be the shortest path from $t$ to $t'$. 

We first argue for the case when $(t,t')\notin E(G)$. Therefore, the path $P$ has at least $2$ edges. We prove that we can find an obstruction, by induction on the length of the path (number of edges). If length of path $P$ is $2$, say $P=t,u,t'$. If $(u,v)\in E(G)$, then $\{t,t',u,v\}$ forms an induced $D_4$, otherwise they form an induced $C_4$, contradicting that $C \cup \{v\}$ is a \blockgraph.

Let us assume that we can find an obstruction if the path length is $l$. We now prove it for paths of length $l+1$. Let $P=t,x_1,x_2,\dots,x_{l-1},t'$ and $y$ be the first vertex other than $t$ in $P$ such that $(y,v)\in E(G)$. If $y=t'$, then $P$ along with $v$ forms an induced cycle of length at least $5$, contradicting that $C\cup \{v\}$ is a \blockgraph. If $y=x_1$, then $\{t,x_1,x_2,v\}$ either forms a $D_4$, the case when $(x_2,v)\in E(G)$, or $\hat P=x_1,x_2,\dots,t'$ is a path of shorter length with at least $2$ edges and by induction hypothesis has an obstruction along with $v$. Otherwise, $P'=t,x_1,\dots,y$ is a path of length less than $l$, with at least $2$ edges, such that $(y,t)\in E(G)$. Therefore, by induction hypothesis there is an obstruction along with the vertex $v$, contradicting that $C\cup \{v\}$ is a \blockgraph.

From the above arguments it follows that if $v$ has a neighbour $t$ in block $B$ in $C$, then $v$ cannot have a neighbour $t'$ in block $B'$, if the shortest path between $t,t'$ has at least $2$ edges. 

If $(t,t')\in E(G)$, then $t,t'$ are contained in some block $\hat B$. If $v$ is adjacent to any other vertex $u$ not in $V(\hat B)$ then at most one of $(t,u)$ or $(t',u)$ can be an edge in $G$, since $t,t'$ and $u$ are in different blocks. If there is an edge, say $(t,u)$, then $t,t',u,v$ forms an induced $D_4$, contradicting that $C\cup \{v\}$ is a \blockgraph. Otherwise, there is a path with at least two edges between $u$ and $t$. Therefore, by the previous arguments we can find an obstruction along with the vertex $v$. Therefore, $N_C(v)\subseteq V(\hat B)$ when $(t,t')\in E(G)$. 

Hence, it follows that there is a block $\tilde B$ in $C$ such that $N_C(v) \subseteq V(\tilde B)$.
\qed
\end{proof}

\begin{lemma}\label{no-of-leaves}
For every $v\in A$ , the number of leaf blocks in $\CC_v$ is $\OO(k)$.
\end{lemma}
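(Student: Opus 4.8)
The plan is to fix $v \in A$ and bound the leaf blocks of the components of $\CC_v$ in three stages. First, since the instance is reduced, Reduction rule \bgvd.\ref{redn:reduce-no-components} guarantees $\lvert \CC_v \rvert \le 33k$, so there are at most $\OO(k)$ components to account for. For each component $C \in \CC_v$, Lemma~\ref{one-component-neighbour} provides a single block $\tilde B_C$ of $C$ with $N_C(v) \subseteq V(\tilde B_C)$, and I would root the block tree of $C$ at $\tilde B_C$. The block $\tilde B_C$ can itself be a leaf block, but it accounts for at most one leaf block per component, hence for at most $\lvert \CC_v \rvert = \OO(k)$ leaf blocks overall.

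Second, I would show that every leaf block $L \neq \tilde B_C$ of a component $C \in \CC_v$ must contain an internal vertex adjacent to $S_v$. Suppose not. Since $L$ is a block of the block graph $C$, $G[V(L)]$ is a clique, and the unique cut vertex $c_L$ of the leaf block separates its internal vertices $V(L)\setminus\{c_L\}$ from the rest of $C$. As $L \neq \tilde B_C$, none of these internal vertices is adjacent to $v$, and by the assumption none is adjacent to $S_v$; hence all their $G$-neighbours lie in $V(L)$. Then $V(L)\setminus\{c_L\}$ is a connected component of $G \setminus \{c_L\}$ and $G[\{c_L\}\cup(V(L)\setminus\{c_L\})]$ is a clique, i.e.\ a connected \blockgraph{}, so Reduction rule \bgvd.\ref{redn:cut-vertex-rule} would apply, contradicting that the instance is reduced.

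Third, and this is the crux, I would bound the number of leaf blocks $L \neq \tilde B_C$, each of which (by the previous step) carries an internal vertex adjacent to $S_v$, where $\lvert S_v \rvert \le 11k$ by Observation~\ref{obs:approx_soln_S_v}. I would build the bipartite graph with these leaf blocks on one side and $S_v$ on the other, joining a leaf block to a vertex of $S_v$ whenever some internal vertex of the block is adjacent to it; every leaf block then has degree at least one. If the number of such leaf blocks exceeded a fixed constant times $\lvert S_v \rvert$, the Expansion Lemma (Lemma~\ref{qexpansion}) would yield $X' \subseteq S_v$ and a set $Y'$ of leaf blocks with a $q$-expansion into $X'$ and $N(Y') \subseteq X'$; I would argue that such a configuration is reducible in the same spirit as Reduction rule \bgvd.\ref{redn:reduce-no-components}, again contradicting reducedness. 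This bounds the count by $\OO(\lvert S_v \rvert)=\OO(k)$, and together with the $\OO(k)$ blocks $\tilde B_C$ this gives the lemma. The main obstacle is precisely this last conversion of ``too many leaf blocks attached to $S_v$'' into an applicable reduction, because a single $s \in S_v$ may a priori be adjacent to internal vertices of leaf blocks in many different components of $\CC_v$. The structural lever I would exploit is that $G\setminus A$ is a \blockgraph{}, so for $s \in S_v \setminus A$ the graph $G[V(C)\cup\{s\}]$ remains free of induced $C_4,D_4$ and longer cycles, forcing $s$ to meet at most one leaf block per component; controlling the few vertices of $A\setminus\{v\}\subseteq S_v$, for which this block-graph argument fails, through the expansion step is where the real work lies.
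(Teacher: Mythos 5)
Your first stage is, on its own, the paper's entire proof of this lemma: Lemma~\ref{one-component-neighbour} confines $N_C(v)$ to a single block $\tilde B_C$ of each component $C\in \CC_v$, so $v$ can be adjacent to internal vertices of at most one leaf block per component, and $\lvert \CC_v \rvert \leq 33k$ (Reduction rule \bgvd.\ref{redn:reduce-no-components} being inapplicable) gives the $\OO(k)$ bound. Crucially, this weaker count---leaf blocks in which $v$ has a neighbour---is all the lemma is used for: in the paragraph after the lemma, every leaf block of $G\setminus A$ is charged to some $v\in A$ adjacent to one of its internal vertices (such a $v$ exists because Reduction rules \bgvd.\ref{redn:bock-component-rule} and \bgvd.\ref{redn:cut-vertex-rule} do not apply), with Observation~\ref{leaves-equal} transferring the count from $G\setminus A$ to $G\setminus (S_v\cup\{v\})$. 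So under the paper's intended reading you are done after your first paragraph, by essentially the paper's own argument.

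Your second and third stages aim at the stronger literal reading (\emph{all} leaf blocks of all components in $\CC_v$), and there the proposal has a genuine gap, which you yourself flag. Stage two is sound: a leaf block $L\neq \tilde B_C$ whose internal vertices had no neighbour in $S_v\cup\{v\}$ would trigger Reduction rule \bgvd.\ref{redn:cut-vertex-rule}. But stage three rests on the claim that an expansion configuration (a $3$-expansion from a subset of $S_v$ into leaf blocks whose neighbourhoods lie inside that subset) is ``reducible in the same spirit as Reduction rule \bgvd.\ref{redn:reduce-no-components}''---no such rule exists in the paper, and its safeness is not established: Reduction rule \bgvd.\ref{redn:reduce-no-components} operates on the components $\CC_u$ of a single vertex $u$ with respect to its \emph{own} set $S_u$, not on leaf blocks attached to an arbitrary vertex set. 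Moreover, your structural lever only covers $s\in S_v\setminus A$, where $G[V(C)\cup\{s\}]$ inherits block-graph-ness from $G\setminus A$ and an argument as in Lemma~\ref{one-component-neighbour} confines $N_C(s)$ to one block; for the up to $4k$ vertices $s\in A\setminus\{v\}\subseteq S_v$, a single $s$ may a priori be adjacent to internal vertices of arbitrarily many leaf blocks inside one component of $\CC_v$, and nothing in the proposal rules this out. So as a proof of the strong reading the argument is incomplete at exactly the step you identified as ``the real work''; as a proof of what the paper needs and proves, only your first paragraph is required.
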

\begin{proof}
 Note that every leaf block must have at least one internal vertex. By Lemma~\ref{one-component-neighbour} we know that neighbours of $v$ are contained in a block of $C$, where $C\in \CC_v$. Therefore, $v$ cannot be adjacent to internal vertices of two leaf blocks in $C\in \CC_v$. In other words, $v$ can be adjacent to vertices in at most one leaf block from $C\in \CC_v$. But $\lvert \CC_v \rvert \leq 33k$, since the Reduction rule \bgvd.\ref{redn:reduce-no-components} is not applicable. Therefore, the number of leaf blocks in $G\setminus (S_v\cup \{v\})$ in which $v$ can have a neighbour is at most $\OO(k)$. 
  \qed
 \end{proof}
 
Observe that in $G\setminus A$, a vertex $v\in A$ can be adjacent to at most $\OO(k)$ leaf blocks by Observation~\ref{leaves-equal} and Lemma~\ref{no-of-leaves}. Also, for a leaf block $B$ in $G\setminus A$, there must be an internal vertex $b\in V(B)$, such that $b$ is adjacent to some vertex in $S_v$, since the Reduction rule \bgvd.\ref{redn:cut-vertex-rule} is not applicable. Therefore, the number of leaf blocks in $G\setminus A$ is $\OO(k^2)$.

\begin{lemma}\label{no-of-degree-3-blocks}
The number of blocks $B$ in $G \setminus A$ such that the vertex set of $B$ intersects with the vertex set of at least three other block in $G\setminus A$ is $\OO(k^2)$.
\end{lemma}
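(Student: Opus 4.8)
The plan is to bound the number of blocks in $G \setminus A$ with high \emph{block-tree degree} by viewing the block forest $F$ of the block graph $G \setminus A$ and counting branching nodes. Recall $G \setminus A$ is a block graph, so its block forest $F$ (the bipartite incidence graph on cut vertices and blocks) is a forest. The statement asks to bound the number of blocks $B$ whose vertex set meets at least three other blocks; since two distinct blocks of a block graph share at most one (cut) vertex, $B$ meeting three other blocks means $B$ contains at least three cut vertices, i.e. $B$ has degree at least three in $F$. So the goal reduces to bounding the number of degree-$\geq 3$ block-nodes in the forest $F$.

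The key combinatorial fact I would invoke is the standard forest argument: in any forest, the number of nodes of degree at least three is at most the number of leaves (more precisely, bounded by the number of leaves minus one per tree). Hence it suffices to show that $F$ has $\OO(k^2)$ leaves. The leaves of $F$ are exactly the leaf blocks of $G \setminus A$ (blocks with at most one cut vertex) together with possibly some cut-vertex leaves; the cut-vertex leaves correspond to cut vertices lying in only one block, which can be folded into the count. Thus I would first argue that the number of leaves of $F$ is dominated by the number of leaf blocks of $G \setminus A$, which was just established to be $\OO(k^2)$ in the paragraph immediately preceding this lemma (via Observation~\ref{leaves-equal} and Lemma~\ref{no-of-leaves}). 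Combining the forest inequality with this $\OO(k^2)$ bound on leaves yields the desired $\OO(k^2)$ bound on degree-$\geq 3$ blocks.

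Concretely, the steps in order are: (i) translate ``$B$ intersects at least three other blocks'' into ``$B$ is incident to at least three cut vertices in $F$'', using Lemma~\ref{lem:maximal-cliques}-style reasoning that two blocks share at most one vertex (here blocks are cliques, and sharing happens only at cut vertices); (ii) observe that the number of leaf blocks of $G \setminus A$ is $\OO(k^2)$ from the preceding discussion; (iii) bound the total number of leaves of the block forest $F$ by $\OO(k^2)$, noting that every leaf of $F$ is either a leaf block or a pendant cut vertex, and pendant cut vertices can be charged to their unique incident block; (iv) apply the forest lemma that the number of internal nodes of degree at least three is at most the number of leaves, to conclude the count of high-degree block-nodes is $\OO(k^2)$.

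The main obstacle I anticipate is the bookkeeping in step (i) and (iii): one must be careful that ``intersecting three other blocks'' genuinely corresponds to block-forest degree at least three, rather than, say, a single shared cut vertex being counted multiple times, and one must correctly handle cut vertices that are themselves leaves of $F$ so that they do not inflate the leaf count beyond $\OO(k^2)$. Once the reduction to counting high-degree nodes of a forest with $\OO(k^2)$ leaves is cleanly set up, the remainder is the elementary forest inequality and requires no further heavy machinery.
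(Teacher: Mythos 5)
Your step (i) contains a genuine gap: from the fact that two distinct blocks share at most one vertex you conclude that a block $B$ meeting at least three other blocks must contain at least three cut vertices, i.e.\ have degree at least three in the block forest $F_A$. This inference is false, because several other blocks can all intersect $B$ at the \emph{same} cut vertex. Concretely, let $B$ have exactly two cut vertices $u_B, v_B$, where $v_B$ also lies in two further blocks $B_1, B_2$ and $u_B$ lies in one further block $B_3$. Then $V(B)$ intersects three other blocks, yet $B$ has degree exactly $2$ in $F_A$, so your count of degree-$\geq 3$ block nodes misses it entirely. (Blocks with a single cut vertex shared by many blocks are harmless, since those are leaf blocks and already covered by the $\OO(k^2)$ bound, but the two-cut-vertex case is not.) This is precisely the case the paper's proof spends its second paragraph on: for such a block $B$ with cut vertices $u_B, v_B$ where, say, $v_B$ lies in at least two blocks other than $B$, the paper contracts the edge $(b, v_B)$ in $F_A$, observes that the contracted node has degree at least $3$ while the numbers of leaves and of degree-$\geq 3$ nodes do not change, and thereby bounds these blocks by $\OO(k^2)$ as well.

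The gap is repairable within your framework, but it needs an explicit extra step rather than the claimed equivalence. For instance, every such two-cut-vertex block $B$ is adjacent in $F_A$ to a \emph{cut-vertex node} of degree at least $3$, so it suffices to bound the total degree of degree-$\geq 3$ nodes of $F_A$: in a forest with $L$ leaves one has $\sum_{\deg(x)\geq 3}(\deg(x)-2)\leq L$, hence $\sum_{\deg(x)\geq 3}\deg(x) = \OO(L) = \OO(k^2)$, and each problematic block is charged to one incident high-degree cut-vertex node. Either this charging argument or the paper's contraction argument closes the hole; your proposal as written, resting on the false identification of ``intersects three other blocks'' with ``forest degree at least three,'' does not. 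The remainder of your argument (leaves of $F_A$ are dominated by the $\OO(k^2)$ leaf blocks, and a forest has at most as many degree-$\geq 3$ nodes as leaves) matches the first paragraph of the paper's proof and is fine.
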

\begin{proof}
Consider the block forest $F_A$ for the \blockgraph{} $G \setminus A$. The number of blocks in $G\setminus A$ is at most the number of vertices in $F_A$. Note that the leaves in $F_A$ correspond to the blocks in $G\setminus A$ with at most one cut vertex. The number of leaf blocks is $G'$ is bounded by $\OO(k^2)$ and therefore the number of leaves in $F_A$ is $\OO(k^2)$. For a forest the number of vertices of degree at least $3$ is bounded by the number of leaves. Therefore, the number of degree three vertices in $F_A$ is bounded by $\OO(k^2)$. For a block $B$ in $G\setminus A$ which has at least $3$ cut vertices, the vertex $b$ corresponding to block $B$ in $F_A$ will be of degree at least $3$. Therefore, the number of blocks in $G\setminus A$ with at least three cut vertices is bounded by $\OO(k^2)$. 

Consider a block $B$ in $G\setminus A$, such that $B$ has exactly $2$ cut vertex, but $V(B)$ intersects with at least three blocks in $G\setminus A$. Let $b$ be the vertex corresponding to $B$ in $F_A$ and $v_{B},u_{B}$ the cut-vertices in $B$. At least one of $v_{B},u_{B}$ is a cut-vertex in at least two blocks other than $B$, say $v_{B}$ is such a cut vertex. If we contract the edge $(b,v_B)$ in $F_A$ to obtain $F'_A$, then number of leaves and degree $3$ vertices in $F'_A$ remains the same which is bounded by $\OO(k^2)$. Furthermore, the contracted vertex $b^*$ is of degree at least $3$. There is a bijection $f$ between the vertices corresponding to blocks in $F_A$ and $F'_A$, where $f(b)=b^*$ and $f(b')=b'$, for all $b'\in V_B(F_A)\setminus \{b\}$, where $V_B(F_A)$ is the set of vertices corresponding to blocks in $G\setminus A$. Hence follow that the number of blocks $B$ with exactly $2$ cut vertices, but $V(B)$ intersects with at least three blocks in $G\setminus A$ is bounded by $\OO(k^2)$.
\qed
\end{proof}

Let $\cal L$ be the set of leaf blocks in  $G\setminus A$ and $\cal T$ be the set of  blocks in $G\setminus A$ such that each block in $\cal T$ intersects with at least three other blocks in $G\setminus A$. By Lemmas~\ref{no-of-leaves} and \ref{no-of-degree-3-blocks}, we have that $|{\cal L}|= \OO(k^2)$ and $|{\cal T}|= \OO(k^2)$. 

%

Let $B$ be a block in $G^*=G\setminus (S_v\cup \{v\})$ such that the vertex set of $B$ has exactly two cut vertices, and intersects with exactly two blocks of $G^*$. Furthermore, the vertex set of $B$ has an empty intersection with leaf blocks of $G^*$ and those blocks in $G^*$ which vertex set intersects with at least three other blocks of $G^*$. Also, $B$ has a vertex that is a neighbor of $v$. Such blocks are called {\em nice degree two blocks} of $v$. If a block satisfies the above conditions for some vertex $w \in A$, the block is called a {\em nice degree two block}. We denote the set of nice degree two blocks by ${\cal{T}}_1$.

\begin{lemma} \label{deg-2-blocks}
Let $G^*=G\setminus (S_v\cup \{v\})$. Then $G^*$ has at most $\OO(k)$ nice degree two blocks of $v$. 
%
\end{lemma}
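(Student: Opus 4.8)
The plan is to exploit Lemma~\ref{one-component-neighbour} to confine the neighbourhood of $v$ to a single block inside each relevant component of $G^*$, and then read off from the defining constraints of a nice degree two block that each such component can host only a constant number of them. Throughout I would work with $\CC_v$, the set of connected components of $G^*=G\setminus(S_v\cup\{v\})$ that contain a neighbour of $v$; since Reduction rule~\bgvd.\ref{redn:reduce-no-components} is not applicable, $\lvert\CC_v\rvert\le 33k$.

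First I would invoke Lemma~\ref{one-component-neighbour}: for every component $C\in\CC_v$ there is one block $\tilde B_C$ of $C$ with $N_C(v)\subseteq V(\tilde B_C)$. By definition every nice degree two block of $v$ contains a vertex adjacent to $v$, so each such block lies in some $C\in\CC_v$ and shares a vertex with $\tilde B_C$. Because two distinct blocks of a \blockgraph{} meet in at most one (necessarily cut) vertex, such a block is either $\tilde B_C$ itself or a block adjacent to $\tilde B_C$ in the block forest of $C$.

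The main step is to cap, per component, the number of neighbouring blocks of $\tilde B_C$ that can qualify. Suppose $B\neq\tilde B_C$ is a nice degree two block of $v$ in $C$; then $V(B)\cap V(\tilde B_C)\neq\emptyset$. The definition requires $V(B)$ to be disjoint from every leaf block and from every block meeting at least three other blocks, so $\tilde B_C$ can be neither a leaf block nor a block meeting three or more blocks. Hence $\tilde B_C$ has exactly two cut vertices and meets exactly two other blocks, so at most two blocks are adjacent to it. Together with $\tilde B_C$, each $C\in\CC_v$ therefore contributes at most three nice degree two blocks of $v$, and summing over components yields at most $3\lvert\CC_v\rvert\le 3\cdot 33k=\OO(k)$.

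I expect the third paragraph to be the delicate point: a nice degree two block of $v$ need not coincide with $\tilde B_C$, so one cannot simply count one block per component, and a naive estimate would allow arbitrarily many blocks hanging off the cut vertices of $\tilde B_C$ that happen to be neighbours of $v$. The disjointness-from-leaf-and-high-degree-blocks condition is precisely what forces $\tilde B_C$ to have block-forest degree at most two, collapsing this potential blow-up to a constant per component; the rest is the routine summation against the bound $\lvert\CC_v\rvert\le 33k$ guaranteed by Reduction rule~\bgvd.\ref{redn:reduce-no-components}.
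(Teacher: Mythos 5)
Your proposal is correct and follows essentially the same route as the paper: both invoke Lemma~\ref{one-component-neighbour} to confine $N_C(v)$ to a single block $\tilde B$ in each component of $\CC_v$, use the requirement that a nice degree two block avoids leaf blocks and blocks meeting at least three others to force $\tilde B$ to have block-forest degree at most two, and then multiply the resulting $\OO(1)$ per-component bound by $\lvert \CC_v\rvert \le 33k$ (Reduction rule \bgvd.\ref{redn:reduce-no-components} being inapplicable). Your explicit bound of at most three candidates per component ($\tilde B$ and its two neighbouring blocks) is just a cleaner bookkeeping of the paper's case analysis on whether the two neighbours $b,b'$ of $v$ coincide.
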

\begin{proof}
Recall that  $\CC_v$ is the set of connected components in $G\setminus (S_v\cup \{v\})$ which have a vertex adjacent to $v$.  
From Lemma~\ref{one-component-neighbour}, for each of the connected component $C\in \CC_v$, there is a block $\tilde B$ in $C$ such that, $N_C(v)\subseteq V(\tilde B)$.  Consider two nice degree two blocks $B$ and $B'$ in $C$ such that both of them have at least one neighbor of $v$. 
Let $b\in V(B)$ and $b'\in V(B')$ be the vertices such that $(v,b),(v,b')\in E(G)$. Consider the following cases.
\begin{itemize}
\item $b\neq b'$. In this case, $b,b' \in V(\tilde B)$, where $N_C(v)\subseteq V(\tilde B)$ and $\tilde B$ is a block with exactly $2$ cut vertices. But then, for all blocks $\hat B$ in $C$, $V(\hat B)\cap V(\tilde B)=\emptyset$. Therefore, $v$ cannot be adjacent to any vertex in $\hat B$. Hence it follows that the number of blocks with a vertex having neighbour in $v$ is $\OO(1)$.
\item If $b=b'$, then $b$ is a cut vertex in $B,B'$. Therefore, one of $B,B'$ is same as $\tilde B$, say $B=\tilde B$. But $B$ shares cut vertices with exactly two blocks. Therefore, $v$ can have neighbous in at most $\OO(1)$ blocks in $C$. 
\end{itemize}
But note that $|\CC_v|=\OO(k)$. Hence,  the claim follows.
\qed
\end{proof}

What remains is to bound the number of blocks which have exactly two cut vertices and are not nice degree two blocks. 

\begin{lemma}\label{no-of-degree-2-blocks}
The number of blocks in $G \setminus A$ with exactly two cut vertices is $\OO(k^2)$.
\end{lemma}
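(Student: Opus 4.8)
The plan is to classify every block $B$ of $G\setminus A$ with exactly two cut vertices according to the number of blocks it intersects, and to charge each class to a quantity already bounded. A block with exactly two cut vertices is a block-vertex of degree exactly two in the block forest $F_A$ of $G\setminus A$. If such a block intersects at least three other blocks then it lies in ${\cal T}$, and Lemma~\ref{no-of-degree-3-blocks} already gives $|{\cal T}|=\OO(k^2)$. Hence the real task is to bound the blocks $B$ with exactly two cut vertices that intersect exactly two other blocks: for such a $B$ each cut vertex is shared with a single further block, so $B$ and its two degree-two shared cut vertices lie on a maximal degree-two path (a \emph{chain}) of $F_A$.

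First I would bound the number of chains by $\OO(k^2)$ using only the structure of $F_A$. Since every cut vertex belongs to at least two blocks, the leaves of $F_A$ are exactly the leaf blocks of $G\setminus A$, so $F_A$ has $\OO(k^2)$ leaves by Lemma~\ref{no-of-leaves} together with Observation~\ref{leaves-equal}; consequently $F_A$ has $\OO(k^2)$ vertices of degree at least three, and suppressing its degree-two vertices yields a forest with $\OO(k^2)$ edges, one per chain. Each chain terminates at a block with a single cut vertex (hence in ${\cal L}$) or at a block meeting a vertex shared by at least three blocks, in which case that terminating block intersects at least three blocks and lies in ${\cal T}$. Thus it suffices to bound, up to an additive $\OO(1)$ per chain, the number of interior blocks of a single chain.

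Next I would classify the interior blocks. A chain-interior block that is not one of the two blocks adjacent to a chain endpoint touches only other degree-two blocks, so it has empty intersection with ${\cal L}$ and ${\cal T}$; if it additionally has a vertex adjacent to some $w\in A$, then it is precisely a nice degree two block and lies in ${\cal T}_1$. Summing the bound of Lemma~\ref{deg-2-blocks} over the at most $4k$ vertices of $A$ gives $|{\cal T}_1|=\OO(k^2)$. The two interior blocks nearest each endpoint (which may touch an ${\cal L}$- or ${\cal T}$-block and hence fail to be nice even when adjacent to $A$) are charged directly to their chain, contributing only $\OO(1)$ per chain and $\OO(k^2)$ in total.

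The only blocks left are chain-interior blocks no vertex of which, cut vertices included, is adjacent to $A$; such a block sees nothing outside its own chain. I expect this to be the main obstacle. Along a maximal run of consecutive blocks of this kind the shared cut vertices play the role of a spine $t_1,t_2,t_3,t_4$ and each clique-interior plays the role of an $S_i$, so one must verify that once the run is long enough the degree and neighbourhood requirements $N_G(v)\setminus S_i=\{t_i,t_{i+1}\}$ and $N_G(t_i)=\{t_{i-1},t_{i+1}\}\cup S_{i-1}\cup S_i$ of Reduction rule~\bgvd.\ref{redn:induced-path} are met, which holds exactly because these blocks have no neighbour outside the chain. Exhaustive application of that rule then forces every such run to have length $\OO(1)$. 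Since within each chain the runs are separated by nice blocks (counted in ${\cal T}_1$) or by the two endpoint-adjacent blocks, there are $\OO(k^2)$ runs in total, each of constant length. Combining the $\OO(k^2)$ bounds for ${\cal T}$, for ${\cal T}_1$, for the endpoint-charged blocks, and for these runs shows that the number of blocks of $G\setminus A$ with exactly two cut vertices is $\OO(k^2)$.
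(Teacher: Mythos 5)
Your proposal is correct and follows essentially the same route as the paper's proof: bound $\mathcal{T}$ and $\mathcal{T}_1$ (the latter by summing Lemma~\ref{deg-2-blocks} over $A$), decompose the degree-two portion of the block forest $F_A$ into $\OO(k^2)$ paths whose endpoints lie in ${\cal L}\cup{\cal T}\cup{\cal T}_1$ (your ``runs'' are exactly the paper's path family $\mathcal{P}$), and use Reduction rule~BGVD.\ref{redn:induced-path} to cap each path's interior at $\OO(1)$ blocks. If anything, your write-up is slightly more careful than the paper's in checking the rule's preconditions, in particular that the cut vertices on a run, not just the clique interiors, have no neighbours in $A$.
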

\begin{proof}
From Lemma~\ref{no-of-leaves} the number of leaf blocks in $G \setminus A$ is $\OO(k^2)$. Let $F_A$ be the block forest for the \blockgraph{} $G \setminus A$. From Lemmas~\ref{no-of-leaves} and \ref{no-of-degree-3-blocks}, we know that $|{\cal L} \cup {\cal T}|= \OO(k^2)$. Also, the number of blocks in ${\cal{T}}_1$ is bounded by $\OO(k^2)$ by Lemma~\ref{deg-2-blocks}. 

Let $\mathcal{P}$ be the set of paths in $F_A$ such that the endpoints are vertices corresponding to blocks in ${\cal L} \cup {\cal T} \cup {\cal{T}}_1$ and all internal block vertices do not correspond to blocks in ${\cal L} \cup {\cal T} \cup {\cal{T}}_1$. Note that, all internal vertices of such paths have degree exactly two in $F_A$. Since $F_A$ is a tree, the number of paths in $\mathcal{P}$ is at most $\OO(k^2)$. 


 Denote the remaining blocks with exactly two cut vertices by ${\cal{T}}_2$. By definition of $F_A$ and $\mathcal{P}$, the vertex corresponding to a block $B \in {\cal{T}}_2$ must be an internal vertex in some path of $\mathcal{P}$. Let $P_A$ be a path in $\mathcal{P}$, Note that $F_A$ is a bipartite graph with the vertex bipartitions as $(\BB,V_C)$, where $\BB$ is the set of blocks in $G'$ and $V_C$ is the set of cut vertices in $G'$. Therefore, in $P$ no two cut vertices in $V(G')$ can be adjacent to each other. Similarly for $b,b'\in \BB$, $(b,b')\notin E(P)$. Therefore, the $b \in V(P)$ such that $b\in \BB$ corresponds to block $B$ in $G'$ with exactly two cut vertices. Let $\mathcal{S}$ be the sequence of blocks in the order they appear in path $P_A$. In $\mathcal{S}$ each two adjacent blocks in the sequence share a cut vertex. We remove the starting block and the end block from $\mathcal{S}$. We do this because these blocks in the subsequence either belong to ${\cal L} \cup {\cal T} \cup {\cal{T}}_1$. If $\mathcal{S}$ has more than two blocks then, Reduction rule \bgvd.\ref{redn:induced-path} would be 
applicable. Therefore, the number of blocks in $\mathcal{P}$ can be at most $\OO(1)$.

The set of blocks in $G \setminus A$ with exactly two cut vertices is contained in ${\cal {T}} \cup {\cal{T}}_1 \cup {\cal{T}}_2$. Hence, it follows that the number of blocks with exactly $2$ cut vertices in $G\setminus A$ is bounded by $\OO(k^2)$.
\qed
\end{proof}

Now, we have a bound on the total number of blocks in $G \setminus A$.
\begin{lemma}\label{bounding-no-of-blocks}
Consider a graph $G$, a positive integer $k$ and an approximate \bvds\ set $A$ of size $\OO(k)$. If none of the Reduction rules \bgvd.\ref{redn:bock-component-rule} to \bgvd.\ref{redn:reduce-no-components} is applicable then the number of blocks in $G\setminus A$ is bounded by $\OO(k^2)$.
\end{lemma}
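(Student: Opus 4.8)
The plan is to classify every block of the block graph $G\setminus A$ according to how many cut vertices it contains (equivalently, the degree of its corresponding vertex in the block forest $F_A$) and to bound each class separately, then sum. Concretely, consider the block forest $F_A$ associated with $G\setminus A$; its block-vertices partition into three types: (i) blocks with at most one cut vertex, i.e. leaf blocks; (ii) blocks with exactly two cut vertices; and (iii) blocks with at least three cut vertices. Every block of $G\setminus A$ falls into exactly one of these three categories, so the total number of blocks is the sum of the three counts.

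The heart of the argument is that all three counts have already been established to be $\OO(k^2)$ in the preceding lemmas, so the proof is essentially an assembly step. First I would invoke Lemma~\ref{no-of-leaves} together with the surrounding discussion (using Observation~\ref{leaves-equal} and Reduction rules \bgvd.\ref{redn:bock-component-rule}, \bgvd.\ref{redn:cut-vertex-rule}, \bgvd.\ref{redn:reduce-no-components}) to conclude that the set $\cal L$ of leaf blocks satisfies $|{\cal L}| = \OO(k^2)$. Next I would invoke Lemma~\ref{no-of-degree-3-blocks} to bound the set $\cal T$ of blocks intersecting at least three other blocks (equivalently, blocks whose block-vertex has degree at least three in $F_A$) by $|{\cal T}| = \OO(k^2)$. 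Finally I would invoke Lemma~\ref{no-of-degree-2-blocks} to bound the number of blocks with exactly two cut vertices by $\OO(k^2)$.

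Having these three bounds in hand, the conclusion follows by adding them: the number of blocks in $G\setminus A$ is at most $|{\cal L}| + |{\cal T}| + |{\cal T}_2'| = \OO(k^2)$, where ${\cal T}_2'$ denotes the degree-two blocks counted in Lemma~\ref{no-of-degree-2-blocks}. I would take care to note that the three categories are exhaustive and (modulo the convention for blocks with exactly two cut vertices, which may or may not also intersect three blocks) that summing the bounds cannot overcount beyond a constant factor, since each bound is already $\OO(k^2)$ and a constant number of $\OO(k^2)$ terms is still $\OO(k^2)$.

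The main obstacle is not any hard new calculation but rather ensuring that the case analysis is genuinely exhaustive and that the hypotheses of each cited lemma are met — in particular that none of the Reduction rules \bgvd.\ref{redn:bock-component-rule} through \bgvd.\ref{redn:reduce-no-components} are applicable, which is precisely the standing assumption in the statement, and that $A$ is an approximate solution of size $\OO(k)$ so that $G\setminus A$ is indeed a block graph and $F_A$ is a forest. The one subtlety worth a sentence is that a block can simultaneously satisfy more than one defining condition (for instance a block in ${\cal T}_1$); but since every block has a well-defined number of cut vertices, assigning it to the leaf / exactly-two / at-least-three partition is unambiguous, and each part is separately $\OO(k^2)$, so no block escapes the count.
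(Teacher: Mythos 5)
Your proposal is correct and matches the paper's proof, which consists of exactly the same assembly step: the paper simply cites Lemmas~\ref{no-of-leaves}, \ref{no-of-degree-3-blocks} and \ref{no-of-degree-2-blocks} and sums the three $\OO(k^2)$ bounds. Your extra care about exhaustiveness of the leaf\,/\,exactly-two\,/\,at-least-three classification and about the overlap with ${\cal T}_1$ is sound and, if anything, slightly more explicit than the paper's one-line argument.
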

\begin{proof}
Follows from Lemmas~\ref{no-of-leaves},~\ref{no-of-degree-3-blocks} and \ref{no-of-degree-2-blocks}.
\qed
\end{proof}

\subsection{Bounding the number of internal vertices in a maximal clique of the \blockgraph}
We start by bounding the number of internal vertices in a maximal $2$-connected component of $G\setminus A$. Consider a block $B$ in $G\setminus A$. We partition the internal vertices $V_I(B)$ of block $B$ into three sets $\BB,\RR$ and $\II$ depending on the neighborhood of $A$ in block $B$. We also partition the vertices in $A$ depending on the number of vertices they are adjacent to in $B$. In Lemma~\ref{bounding-no-of-vertices} we show that the number of internal vertices in a block $B$ of 
$G\setminus A$ is upper bounded by $\OO(k^2)$. We do so by partitioning the vertices into different sets and bounding each of these sets separately.

\begin{lemma}\label{bounding-no-of-vertices}
Let $(G,k)$ be an instance to \bgvd\ and let $A$ be an approximate \bvds\ of $G$ of size $\OO(k)$. 
If none of the Reduction rules \bgvd.\ref{redn:bock-component-rule} to \bgvd.\ref{redn:reduce-no-components} is applicable then the number of internal vertices in a block $B$ of $G\setminus A$ is bounded by $\OO(k^2)$.
\end{lemma}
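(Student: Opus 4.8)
The plan is to bound the number of internal vertices of a block $B$ in $G \setminus A$ by partitioning them according to how they interact with the approximate solution $A$. Recall that $|A| = \OO(k)$, and the key structural fact I will rely on is Lemma~\ref{one-component-neighbour}: for any $v \in A$, the neighbors of $v$ inside a connected component $C$ of $G \setminus (S_v \cup \{v\})$ are all contained in a single block of $C$. First I would fix a block $B$ and partition its internal vertex set $V_I(B)$ into three classes $\BB, \RR, \II$ based on the pattern of adjacency to $A$: roughly, vertices with no neighbor in $A$, vertices adjacent to exactly one vertex of $A$, and vertices adjacent to two or more vertices of $A$. The intuition is that vertices with many neighbors in $A$ are rare, vertices with no neighbor in $A$ can be argued to be irrelevant or bounded using the twin rule, and the bulk of the work is the class attached to $A$ in a limited way.

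Next I would dualize and partition $A$ itself according to how many vertices of $B$ each $a \in A$ sees. The crucial observation is that since $B$ is a block (a clique), any three internal vertices of $B$ together with a common external neighbor would create a $D_4$ (two nonadjacent... actually a clique plus an outside vertex adjacent to two clique vertices already gives a $D_4$ unless that vertex is adjacent to all of them). So if some $a \in A$ is adjacent to two internal vertices $u, w$ of $B$ but not to a third internal vertex $z$, then $\{u, w, z, a\}$ induces a $D_4$ — but $G \setminus A$ is a block graph, not $G$, so I must be careful: here $a \in A$ and $u,w,z \notin A$, so this $D_4$ genuinely lives in $G$ and is an obstruction. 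This means each $a \in A$ is adjacent either to at most one or essentially to \emph{all} internal vertices of $B$. I would use this dichotomy to bound the vertices adjacent to ``few'' elements of $A$: each such vertex is ``charged'' to a pair (vertex of $A$, block), and since there are $\OO(k)$ elements of $A$ and $\OO(k^2)$ blocks by Lemma~\ref{bounding-no-of-blocks}, this gives an $\OO(k^3)$ type count that must be refined, or more likely each $a$ contributes $\OO(1)$ private vertices per block for an $\OO(k)$ bound on that class.

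The remaining, and most delicate, class consists of internal vertices of $B$ that are true-twins (or near-twins) with respect to their neighborhood — vertices adjacent to the same large subset of $A$ and to all of $B$. Here I would invoke Reduction Rule \bgvd.\ref{redn:twin-rule}: internal vertices of a clique $B$ that have identical neighborhoods in $A$ (and are mutually adjacent within $B$, hence true-twins in $G$) can number at most $k+1$ before the twin rule fires. The number of distinct neighborhood-types into $A$ that need to be considered should be bounded by $\OO(k)$ using the preceding dichotomy (each type is determined by a bounded set of $A$-vertices that see the whole block), giving $\OO(k) \cdot (k+1) = \OO(k^2)$ internal vertices. Combining the three class bounds yields the claimed $\OO(k^2)$.

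The hard part will be handling the vertices adjacent to no vertex of $A$ and making the true-twin argument airtight: I must verify that two internal vertices of $B$ with the same $A$-neighborhood really are true-twins in all of $G$ (they are mutually adjacent since $B$ is a clique, they share their $A$-neighbors by assumption, and by Lemma~\ref{one-component-neighbour} their neighbors outside $A$ and outside $B$ are constrained to lie in the same block), so that \bgvd.\ref{redn:twin-rule} legitimately applies. I would also need to confirm that the $D_4$-obstruction argument correctly forces the all-or-one dichotomy for each $a \in A$, handling the boundary case where $a$ is adjacent to the two cut vertices of $B$ separately, since cut vertices are not internal and are excluded from $V_I(B)$.
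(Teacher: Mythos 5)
There is a genuine gap at the heart of your dichotomy argument. You correctly observe that if some $a \in A$ is adjacent to two internal vertices $u,w$ of the clique $B$ but not to a third internal vertex $z$, then $\{u,w,z,a\}$ induces a $D_4$ in $G$. But you then conclude that this ``is an obstruction'' and hence forces each $a \in A$ to see either at most one or all internal vertices of $B$. That inference is invalid in the kernelization setting: here $G$ is an arbitrary input graph, not the $\{D_4,C_4\}$-free graph of the restricted problem, and only $G \setminus A$ is guaranteed to be a block graph. A single induced $D_4$ through a vertex of $A$ is perfectly consistent with $(G,k)$ being a \yes{} instance --- indeed, obstructions through $a$ are exactly why $a$ is in the approximate solution. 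One induced $D_4$ therefore yields no contradiction and no dichotomy. The paper's proof is essentially a quantitatively weakened version of your dichotomy, and the weakening is what makes it provable: partition $A$ by degree into $B$ with threshold $2k+1$; vertices of $A$ seeing at most $2k+1$ internal vertices contribute the class $\BB$ of size $\OO(k)\cdot(2k+1)=\OO(k^2)$ outright, and for each high-degree $v \in A$ (with $|N_B(v)| > 2k+1$), if more than $k$ internal vertices $Q_v$ were non-adjacent to $v$, one could pick $k+1$ disjoint pairs from $N_B(v)$ and pair each with a distinct vertex of $Q_v$ to obtain $k+1$ induced $D_4$'s pairwise intersecting exactly in $v$ --- and it is \emph{this} configuration, not a single $D_4$, that is forbidden, because it would trigger Reduction Rule \bgvd.\ref{redn:distinct-obstruction} (via Proposition~\ref{find-Sv}), contradicting the hypothesis that no reduction rule applies. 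Your proposal never invokes Rule \bgvd.\ref{redn:distinct-obstruction} or the ``many obstructions through one vertex'' mechanism, and without it the dichotomy, and everything downstream of it, collapses.

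The downstream counting also inherits this problem. Your charging scheme for the ``few-$A$-neighbors'' class is left unresolved (you yourself note the $\OO(k^3)$ count ``must be refined''), and your claim that only $\OO(k)$ neighborhood-types into $A$ arise is unjustified without the dichotomy. With the correct quantitative version none of this is needed: after removing $\BB$ and the class $\RR = \bigcup_{v} Q_v$ of size at most $\OO(k)\cdot k$, every remaining internal vertex is adjacent to \emph{all} of the high-degree part of $A$ and to none of the low-degree part, so the residue $\II$ is a single true-twin class (your verification that same-$A$-neighborhood internal vertices of $B$ are true twins in $G$ is correct, since internal vertices have no neighbors in $G\setminus A$ outside $B$), and Reduction Rule \bgvd.\ref{redn:twin-rule} caps it at $k+1$. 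So the skeleton of your partition is sound and close to the paper's, but the argument as written fails at the step where a single obstruction is treated as a contradiction; repairing it requires the $2k+1$ degree threshold together with Rule \bgvd.\ref{redn:distinct-obstruction}.
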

\begin{proof}
Let $A_{\leq 2k}=\{v \in A | \lvert N_B(v)\rvert \leq 2k+1\}$ and $A_{>2k+1}=A \setminus A_{\leq 2k+1}$. For a vertex $u \in V_I(B)$, if $u$ is adjacent to at least one of the vertices in $A_{\leq 2k+1}$ then, we add $u$ to the set $\BB$. Note that the number of vertices in $\BB$ is bounded by $\OO(k^2)$, since each $v\in A_{\leq 2k+1}$ is adjacent to at most $2k+1$ vertices in $V_I(B)$. Also, for a vertex $u\in V_I(B)\setminus \BB$, $N(u)\cap A_{\leq 2k}=\emptyset$. 

For each vertex $u\in V_I(B)\setminus \BB$, if $u$ is not adjacent to at least one vertex in $A_{>2k+1}$ then, we add $u$ to the set $\RR$. Let $Q_v$ be those vertices in $V_I(B)\setminus \BB$ which are not adjacent to a vertex $v\in A_{>2k+1}$. Note that $\lvert Q_v \rvert \leq k$ otherwise, for each pair of vertices $t_1,t_2 \in N_B(v)$ along with one vertex in $Q_v$ we get $k+1$ vertex disjoint obstruction, namely $D_4$, intersecting only at $v$. Therefore, the number of vertices in $\RR$ is bounded by $\OO(k^2)$.

Let $\II=V_I(B)\setminus (\BB \cup \RR)$. Note that the vertices in $\cal I$ induce a clique. Furthermore, for each $w \in \II$, $N(w) \cap A_{\leq 2k+1}=\emptyset$ and $N(w)\cap A_{>2k+1}=A_{>2k+1}$. Therefore, each $w,w'\in \II$ are twins. In fact,  $\cal I$ is a set of twins. If $\vert \II \rvert >k+1$ then Reduction rule \bgvd.\ref{redn:twin-rule} would be applicable. Therefore, $|\II|\leq k+1$. But, $|V_I(B)|= |\BB|+|\RR|+|\II|=\OO(k^2)+\OO(k^2)+\OO(k)$, which is bounded by $\OO(k^2)$.
\qed
\end{proof}

We wrap up our arguments to show a $\OO(k^4)$ sized vertex kernel for \bgvd, and hence prove Theorem~\ref{thm:bgvd-kernel}.

\begin{proof}[of Theorem~\ref{thm:bgvd-kernel}]
Let $(G,k)$ be an instance to \bgvd\ and let $A$ be an approximate \bvds\ of $G$ of size $\OO(k)$. 
Also, assume that none of the Reduction rules \bgvd.\ref{redn:bock-component-rule} to \bgvd.\ref{redn:reduce-no-components} are applicable. By Theorem~\ref{bounding-no-of-blocks}, the number of blocks in $G\setminus A$ is bounded by $\OO(k^2)$. By Lemma~\ref{bounding-no-of-vertices} the number of internal vertices in a block of $G\setminus A$ is bounded by $\OO(k^2)$. Also note that the number of cut-vertices in $G \setminus A$ is bounded by the number of blocks in $G\setminus A$, i.e. $\OO(k^2)$. The number of vertices in $G\setminus A$ is sum of the internal vertices in $G\setminus A$ and the number of cut vertices in $G\setminus A$. Therefore,
$|V(G)|=|V(G\setminus A)|+|A|= (\OO(k^2)\cdot \OO(k^2)+\OO(k^2))+\OO(k)=\OO(k^4)$.
\qed
\end{proof}

\bibliographystyle{abbrv}
\bibliography{reference,references}

\end{document}